\DeclareRobustCommand{\thinskip}{\hskip 0.16667em\relax}
\def\emdash{---}
\def\d@sh#1#2{\unskip#1\thinskip#2\thinskip\ignorespaces}
\def\Dash{\d@sh\nobreak\emdash}
\def\Ldash{\d@sh\empty{\hbox{\emdash}\nobreak}}
\def\Rdash{\d@sh\nobreak\emdash}
\definecolor{vertf}{RGB}{0,153,51}
\definecolor{rougef}{RGB}{200,0,0}
\newcommand{\tick}{\textcolor{vertf}{\ding{51}}}
\newcommand{\wrongtick}{\textcolor{rougef}{\ding{55}}}
\newcommand{\asterios}{AnonymOS}
\newcommand{\multipleinstruction}{\leadsto}
\newcommand{\multipleinstructionattacker}{\stackrel{A}{\leadsto}}
\newcommand{\longversion}[1]{#1}
\renewcommand{\longversion}[1]{}
\newtheorem{definition}{Definition}
\newtheorem{theorem}{Theorem}
\newtheorem{corollary}{Corollary}
\newtheorem{assumption}{Assumption}
\definecolor{bluekeywords}{rgb}{0.13, 0.13, 1}
\definecolor{greencomments}{rgb}{0, 0.5, 0}
\definecolor{redstrings}{rgb}{0.9, 0, 0}
\definecolor{graynumbers}{rgb}{0.5, 0.5, 0.5}
\author{\IEEEauthorblockN{Olivier Nicole\IEEEauthorrefmark{1}\IEEEauthorrefmark{2},
Matthieu Lemerre\IEEEauthorrefmark{1},
S\'ebastien Bardin\IEEEauthorrefmark{1} and
Xavier Rival\IEEEauthorrefmark{2}\IEEEauthorrefmark{3}}\\
\IEEEauthorblockA{\IEEEauthorrefmark{1}CEA List, Software Reliability and Security Laboratory, Paris-Saclay, France} %Software Reliability and Security Laboratory,P.C. 174, Gif-sur-Yvette, 91191, France2
\IEEEauthorblockA{\IEEEauthorrefmark{2}D\'epartement d'informatique de l'ENS, ENS, CNRS, PSL University, Paris, France}
\IEEEauthorblockA{\IEEEauthorrefmark{3}INRIA, Paris, France}\\
olivier.nicole@cea.fr, matthieu.lemerre@cea.fr, sebastien.bardin@cea.fr, xavier.rival@ens.fr}
\date{\today}
\title{Automatically Proving Microkernels Free from Privilege Escalation from their Executable}
\newcommand{\myparagraph}[1]{\smallskip\subsubsection{#1}}
\newcommand{\mysubparagraph}[1]{\smallskip\noindent\textbf{#1}.}
\newcommand{\mysubparagraphbis}[1]{\smallskip\noindent\textbf{#1}}
\newcommand{\binseccodex}{\mbox{\textsc{BINSEC/Codex}}}
\newcommand{\binsec}{\textsc{BINSEC}}
\begin{document}

\maketitle

\newcommand{\xxx}[1]{\textcolor{red}{XXX: #1}}
\newcommand{\maybe}[1]{\textcolor{green}{MAYBE: #1}}
\newcommand{\maybenot}[1]{}

\begin{abstract}
  Operating system kernels are the security keystone of most computer
  systems,  as they provide  the core protection
  mechanisms. Kernels are in particular responsible for their own security, 
  i.e.~they must prevent untrusted user tasks from reaching their
  level of privilege. 
  We demonstrate that proving such \emph{absence
    of privilege escalation} is a pre-requisite for any definitive
  security proof of the kernel. 
% : every system invariant must imply this
%  property, and is thus necessarily large and complex. 
% 
 While prior OS kernel formal verifications were performed either on source code 
or crafted kernels, with manual or semi-automated methods requiring  significant human efforts in annotations 
or proofs,    
we show that it is possible to compute such kernel security proofs using  % an invariant using
  \emph{fully-automated} methods and starting from the \emph{executable code} of an
    \emph{existing microkernel with no modification}, thus formally
  verifying absence of privilege escalation with high confidence for
  a low cost. We applied our method on two embedded microkernels, including the industrial kernel AnonymOS\footnote{\label{anonimization}{The name was anonymized for the sake of double-blind review.}}: with only 58 lines of annotation and less than 10 minutes of computation, our
  method finds a vulnerability in a first (buggy) version of AnonymOS and verifies absence of
  privilege escalation in a second (secure)~version. 

% This should enable reaching a \emph{high level of assurance with a lower cost} than previous methods.

%   \xxx{Et insister sur le caractère définitif d'une preuve. Si je
%     vérifie une propriete sur un bout de code C, cette preuve n'est
%     definitive que si les hypotheses que je fait (e.g. sur l'état
%     mémoire initial, sur le fait que le code est bien là...) sont
%     vérifiées.  On fait une preuve definitive, donc on n'a plus aucune
%     hypothèse sur le code à la fin}
  
\end{abstract}

%\myparagraph{}

\section{Introduction}

\mysubparagraph{Context} The security of  many computer systems 
builds upon that of its operating system
kernel.  %
We define a kernel as \emph{a computer program that prevents
untrusted code from performing arbitrary actions}, in particular
performing arbitrary hardware and memory accesses\footnote{This definition includes
  kernels and hypervisors with software-based \cite{bershad1995spin,yang2010safe,hunt2007singularity} or hardware-based isolation, but excludes the hardware abstraction layers of operating systems without memory protection~\cite{dunkels2004contiki,levis2005tinyos}.%, FreeRTOS before v6).
}. Kernels that fail to prevent this are said
to be vulnerable to \emph{privilege escalation} attacks\footnote{Many systems are still vulnerable to such attacks: more than 240 related CVEs have been issued for the Linux kernel,  
%for the ``execute code'' vulnerability alone. 
 and embedded systems are also impacted\Dash{}see VxWorks (CVE-2019-9865)  and POK OS~\cite{duverger2019gustave}.}.  
As
this vulnerability is of the highest severity and can affect any
kernel, formally verifying that kernels cannot be exploited by
untrusted code to gain access to hardware privilege is of the uttermost importance. 
 Actually, \emph{proving absence of privilege escalation (APE) is a
  mandatory  step when attempting to formally verify a kernel:
  nothing can be proven \emph{unconditionally} about a kernel unless
  this property holds}. 
% we demonstrate (Theorem~\ref{th:priv-escal-state-invariant}) that

\mysubparagraph{Scope} Besides large well-known monolithic kernels
(e.g., Linux, Windows, *BSD) whose size and complexity are currently out
of reach of formal verification, there is a rich ecosystem of
small-size kernels found in many industrial applications, such as
security or safety-critical applications, embedded or IoT
systems. This includes security-oriented kernels like separation
kernels \cite{rushby1981design}, microkernels \cite{klein2009sel4},
exokernels \cite{engler95exokernel} and security-oriented hypervisors
\cite{vasudevan2013design,vasudevan2016uberspark} or enclave software
\cite{ferraiuolo2017komodo}; but also kernels used in embedded
systems, for example in microcontrollers \cite{levy17tock}, real-time
\cite{ramamritham1994scheduling,muehlberg2011verifying} or
safety-critical \cite{ARINC653,richards2010modeling} operating
systems.

% We focus on secure kernels, that should comply with security
% principles (least common mechanism, economy of mechanism) and should
% thus be small in size and complexity~\cite{saltzer73protection}.

% Formal verification requires too much effort for large monolithic
% kernels, but is useful and feasible in kernels of smaller sizes; this
% includes security-oriented kernels like separation kernels
% \cite{rushby1981design}, microkernels \cite{klein2009sel4}, exokernels
% \cite{engler95exokernel} and security-oriented hypervisors
% \cite{xmhf,hyper-v?}; but also operating systems for embedded systems,
% like real-time operating systems or operating systems for
% safety-critical systems.

\emph{We focus on such small-size kernels}. To have a practical impact on
these systems, a formal verification must be:

\begin{itemize}
\item {\bf Non-invasive}: the verification should be applicable to the
  kernel as it is. Formal verification methods that require heavy
  annotations or rewrite in a new language are highly expensive  and
  require developers with a rare combination of expertises (OS 
  %implementation 
   and formal methods);
  %%%%% Moreover, the need to rewrite their kernel from scratch
  %%%%% and to use other tools than their standard toolchains deter OS
  %%%%% developpers from using formal verification.
  % The formal verification effort should be kept separated from
  % the OS development effort.

\item {\bf Automated}: the cost and effort necessary to perform a
  formal verification should be minimized. Formal verification
  techniques that are {\it manual} (e.g., proof assistants) or {\it semi-automated} 
  (e.g., deductive verification) require a large proof or annotation
  effort;
  % . Moreover, a part of this work has to
  % be redone when the kernel evolves. % Therefore the amount of
  % % automation should be maximized; 
  %
  %The only required effort should be to specify the properties to be
  %verified and to setup a small amount of analysis parameters;

\item {\bf Close to the running system}: Verification should be performed on the machine
  code~\cite{bevier1989kit,klein2014comprehensive} in order to remove the whole build chain (compiler,
  linker, compilation options, etc.) from the trust base.
  % . Indeed, verifying only
  % the source code requires to trust the whole build chain (compiler,
  % linker, makefiles etc.).
  % and to rely on hypotheses (e.g. absence of code
  % automodification, well-formedness of the stack) that cannot be
  % verified using the source code alone.
  Machine code verification is all the more important
  on kernels, as they contain many error-prone low-level
  interactions with the hardware, not described by the source-level
  semantics.%  (e.g., use of special registers, interaction with assembly
  % code, explicit stack manipulation, memory translation, location of
  % the code in memory, etc.).

% \item {\bf Close to the running system}: to provide the highest level
%   of assurance, the verification should be performed on the software
%   that will be run, like a nonvolatile memory dump or the executable
%   file to be loaded by a trusted bootloader. Verification on the
%   machine code is especially important in the case of OS kernels: they
%   contain many error-prone, low-level interactions with the hardware,
%   that are not described by the high-level semantics of source code
%   (e.g., use of special registers, interaction with assembly code like
%   calling conventions, explicit manipulation of the stack, memory
%   translation, location of the code in memory). Verifying a property
%   on the source code only requires 1. to trust a large amount of code
%   transformation software (like compilers, linker scripts,
%   makefiles...) and 2. to rely on hypotheses (e.g. absence of code
%   automodification, well-formedness of the stack) that cannot be
%   verified using the source code alone.
\end{itemize}

Despite significant advances in the last decades
\cite{walker1980specification,bevier1989kit,klein2009sel4,yang2010safe,gu2015deep,xu2016practical,alkassar2010automated,dam2013machine,vasudevan2016uberspark,ferraiuolo2017komodo,paul2012completing,sewell2013translation}, 
existing kernel verification methods do not address these issues. In
most cases, verification is applied to microkernels developed or
rewritten for the purpose of formal verification (except \cite{xu2016practical}), and is performed only on 
source~\cite{gu2015deep,xu2016practical} or assembly~\cite{yang2010safe,paul2012completing,sewell2013translation,vasudevan2016uberspark}, 
 using highly expensive 
manual~\cite{bevier1989kit,klein2009sel4,gu2015deep,xu2016practical}
or
semi-automated~\cite{alkassar2010automated,yang2010safe,dam2013machine,vasudevan2016uberspark,ferraiuolo2017komodo}
methods. For example, the functional verification of the SeL4 microkernel~\cite{klein2009sel4} required 200,000 lines of annotations and still left 
parts of the code unchecked (boot, assembly).
% \todo[inline]{Because of the high cost and level of expertise,
% formal verification is currently out of the reach of mid-size
% businesses like AnonymFirm, the company behing AnonymOS.}

%  In fact, it is generally believed that verifying high-level
% properties such as absence of privilege escalation cannot be done
% using only automatic methods \cite{klein2009sel4}

\mysubparagraph{Goal and challenges} We focus on the key property of privilege escalation, and seek to \emph{design a
  fully-automated program analysis able to prove the absence of
  privilege escalation in microkernels for embedded systems, directly
  from their executable}.  Besides the well-documented difficulty of
static analysis of machine code~\cite{reps2010there}, solving this goal poses two main
technical challenges:

\begin{itemize}

\item Automatically proving absence of privilege escalation requires a formal 
  definition that is generic (i.e. independent of the kernel)  and suitable 
  to the  machine-code level. %
  %%%%% . To comply with our goal, this definition should not
  %%%%% require writing complex annotations requiring in-depth knowledge of
  %%%%% the source code of the kernel.
  Indeed, absence of privilege escalation is usually established through higher-level properties such as 
   safety of  control-flow and  memory together with preservation of invariants on protection mechanisms~\cite{vasudevan2016uberspark}, %\todo{Find citation},  
 % memory table invariants  
but  each of these properties requires an
  in-depth specification of the kernel behavior and knowledge of its
  source code\Dash{}preventing automated machine-level verification. In addition this definition should be amenable to static analysis, preferably using standard techniques; 
  
\item Most  kernels are \emph{parameterized} systems 
  designed to run an \emph{arbitrary} number of tasks. This is also
  true for microkernels  in embedded systems: even if the number
  of tasks, size of scheduling tables and communication buffers often
  do not vary during  execution, they depend on the application using the kernel. 
%  that makes use of the microkernel. 
  A flat representation of memory (enumerating all memory cells) \cite{dam2013machine} is no longer sufficient 
  in such a setting, and  
%  enumerate the properties of each memory cell 
   we need more
complex representations able to precisely \emph{summarize} memory, like
%  requires using precise abstractions that summarizes the memory, i.e.
  {\it shape abstract domains}
  \cite{sagiv1999parametric,chang2008relational}. Unfortunately, they usually    %usually
  require a large amount of manual annotations,
  which defeats our goal  of automation.
\end{itemize}

% \noindent No existing work targets verification of absence of
% privilege escalation. The idea of formally verifying a kernel directly
% from machine code was already explored, \cite{bevier1989kit,dam2013formal},
% but was applied to specially-designed kernels manually
% annotated with kernel and loop invariants, and did not address the
% challenge of parametrization (handling only a fixed number of partitions).

% \textbf{Old goal}

% Our goal is to show that fully-automated methods are mature enough and can
% be used to verify complex properties such as absence of privilege
% escalation in an OS microkernel. For this we used a maximalist approach
% (Section~\ref{sec:case-study}): we applied \emph{only} an automated
% analysis (Section~\ref{sec:sound-static-analysis}), on an
% \emph{existing}, \emph{unmodified} \emph{executable file} of a real
% kernel used in the industry. Actually, we never saw the source of this
% kernel.

% In general, OS kernels are developed so that they can be used
% independently of the application that uses them; thus they are
% \emph{parameterized} systems. This required us to develop a
% \emph{formal model} of the system
% (Section~\ref{sec:formalization-formal-description}) suitable for
% automated verification of parameterized system software, that we also
% used to prove Theorem~\ref{th:priv-escal-state-invariant}.

%\todo{ON: BINSEC/Codex c'est le nom de l'outil, pas de l'analyse}
\mysubparagraph{Contributions} We propose \binseccodex{}, a
novel static analysis for proving absence of privilege
escalation in microkernels from their executable. Our contributions include: 

\begin{itemize}

\item An original \emph{formal model}
  (Section~\ref{sec:formalization-formal-description})  suitable for
  defining privilege escalation attacks on parameterized kernel code  and
  allowing to reduce the proof of absence of privilege escalation to a
  standard program analysis problem (finding {\it non-trivial state
  invariants}, {\bf Theorem}~\ref{th:invariant-implies-noescalation}),
  hence reusing the standard program analysis machinery. % for program analysis.
  We also  prove that absence of privilege escalation is the most
  fundamental kernel property, without which
  nothing can be proved~({\bf Theorem}~\ref{th:priv-escal-state-invariant});

\item A new \emph{3-step methodology} (Section~\ref{sec:general-method})
  for %inferring kernel invariants and 
  proving absence of privilege
  escalation of parameterized kernels from their executable,
  featuring 1. automated extraction of most of the analysis
  (shape) annotations from kernel types; 2.  
  parameterized fully-automated binary-level static analysis inferring an {\it invariant of 
  the kernel under  
   a precondition on the user tasks}, and 3. 
  fully-automated method to check that the user tasks satisfy the
  inferred~precondition;

\item A \emph{novel weak shape abstract domain}
  (Section~\ref{sec:shape-abstract-domain}) able to verify the
  preservation of memory properties in parameterized  kernels. This
  domain is \emph{efficient} % in terms of performance} 
  (thanks to a dual
  flow sensitive/flow insensitive representation),  \emph{easily 
    configurable} (based on the memory layout of C types, most of the annotations are
  extracted automatically), and \emph{suitable to machine code
    verification} (e.g., addressing indexing of data structures using
  numerical offsets);

%   A new static analysis for automated verification of parameterized
%   microkernels from their executable, featuring 1. a state-of-the-art
%   binary-level static analyzer, 2. an automatic extraction of shape
%   invariants and parameters checker, 3. a novel shape abstract
%   domain automatically parameterized by these invariants. Altogether,
%   this analysis takes as input the executable code and data, and the
%   types of the parameters; and outputs a safe invariant of the
%   reachable states of the system.
  
% \item A novel {\bf shape abstract domain} allowing precise automated
%   verification of parameterized systems, and whose
%   configuration can be almost entirely extracted from standard
%   debugging information.

\item A \emph{thorough evaluation} of our method on \emph{two
    different microkernels} (Section~\ref{sec:case-study}), using two
  different instruction sets (\emph{x86} and \emph{ARMv7}) and memory protection
  mechanisms (\emph{segmentation} and \emph{pagination}). This includes the study of
  an ARM Cortex-A9 port of \emph{AnonymOS\footnote{AnonymOS is a concurrent industrial
    microkernel developed by AnonymFirm\textsuperscript{\ref{anonimization}}, a leading tool provider for
  safety-critical real-time systems, with presence in the
  aerospace, automative, and industrial automation markets.}}. The method is able to {\it find a vulnerability} in
  a beta version of this kernel, and to {\it verify the absence of
    privilege escalation} in a later version, in less than 450 seconds
  and with only 58 lines of manual annotations\Dash{}several order of
  magnitudes less than prior verification efforts
  (Table~\ref{tab:comparison-os-verification},
  p.~\pageref{tab:comparison-os-verification}).
%  and with a minimal trusted base.
\end{itemize}

% \todo[inline]{While EducRTOS is a small academic microkernel developed for teaching purpose, AnonymOS is an industrial microkernel
%   developed by AnonymFirm, a leading tool provider for
%   safety-critical real-time systems, with market presence in the
%   aerospace, automative, and industrial automation industries.}

\noindent This work is the \textbf{\textit{first}} OS verification effort  to specifically address 
\emph{absence of privilege escalation}.  It is also the
\textbf{\textit{first}} to perform formal verification on an {\it existing} operating
  system kernel \emph{without any modification}, on {\it machine code}, and the \textbf{\textit{first}} to do so using  a
\emph{fully-automated technique} able to handle \emph{parametrization}. Finally, it is the \textbf{\textit{first}} shape analysis performed on machine code.

% to verify properties of an
% operating system kernel \emph{using only a fully-automated verification
% technique}.  It is also the first to perform \emph{formal verification of an
% existing operating system kernel, in its entirety} (including all
% assembly code) and \emph{without any modification}.

We thus show that, contrary to a widespread
belief~\cite{klein2009sel4}, fully-automated methods like static analysis can be used to verify complex properties such as absence of
privilege escalation in embedded microkernels, directly from their executable.

\mysubparagraph{Limitations} %
% Our method is independent from the hardware architecture. Especially,
% our formalization is generic enough to encompass both MMU and MPU
% protection mechanisms, and the static analysis is independent from the
% instruction set  as it relies on an intermediate
% representation~\cite{bardin2011bincoa}.  Hence, while our case study
% is on ARMv7 architecture and MMU, the analysis should apply 
% similarly  on other hardware configurations.
Like any sound static analyzer, \binseccodex{} may be too imprecise on
some code patterns, emitting  {\it false alarms}. 
%and may not succeed in computing an invariant without emitting false alarms.
Currently, our analysis cannot handle   dynamic task spawning nor dynamic modification of memory
repartition, as well as self-modification or  code generation in the kernel.     
% deal with dynamic task
% spawning in full generality; yet it could handle dynamic spawning in
% the case where task structures are pre-allocated.
%Also, the analysis cannot cope with self-modification or code
%generation in the kernel. 
Still, many microkernels and
hypervisors fall in our scope~\cite{rushby1981design,dam2013formal,vasudevan2013design,vasudevan2016uberspark,xu2016practical,duverger2019gustave,ramamritham1994scheduling,muehlberg2011verifying,ARINC653,richards2010modeling}.

Finally, while absence of privilege escalation is arguably the most
important property of a kernel, verifying task separation is also of great
importance. This is left as future work.

% The effort necessary to perform a formal proof of a system, depends on
% its size and complexity. Our focus is on secure kernels, that should
% comply with security principles (least common mechanism, economy of
% mechanism \cite{saltzer73protection}) and should thus be small in size
% and complexity. Such secure kernels include for instance separation
% kernels, microkernels, exokernels and security-oriented
% hypervisors. However, other kernels of small size and complexity, as
% commonly found in embedded systems (e.g., operating systems for
% real-time or safety-critical operations) can also be targeted by our
% method. \maybe{Such a restriction allows for instance to run a
%   fully-context sensitive analysis, which would be too costly on a
%   large system.}

\section{Overview}
\label{sec:overview-motivating-example}

%We explain our method on  simplified hardware and kernel.

\subsection{System description, attacker model and trust base}
%\subsection{Setting}

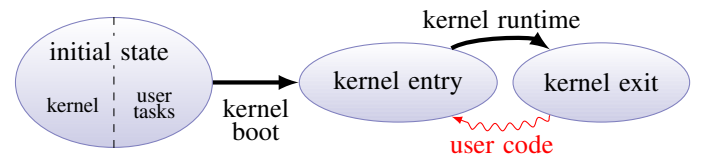
\begin{figure}[b]
  \centering\begin{tikzpicture}[xscale=2.7,
    main node/.style={draw=blue!50!black!50,top color=white,bottom color=blue!50!black!20}
    ]
%    \node[align=center,draw,ellipse] (init) at (0,0) {initial\\state};

      % \begin{scope}%[xshift=3mm]
      %   \draw[] (0,0) arc(0:180:4mm and 4mm) -- cycle;
      %   \draw[] (0,0) arc(0:-180:4mm and 4mm) -- cycle;
      %   \begin{scope}[xshift=-4mm,yshift=-1.7mm] \node at (0,0) {user tasks}; \end{scope}
      %   \begin{scope}[xshift=-4mm,yshift=2mm] \node at (0,0) {kernel}; \end{scope}        
      % \end{scope}

% %initial state
      
% %      \node[align=center,draw,ellipse split] (init) at (0,0) {kernel\nodepart{lower}{user tasks}};
%       \node[align=center,draw,ellipse] (kernel exit) at (1,0) {kernel exit};
%       \node[align=center,draw,ellipse] (kernel entry) at (2,0) {kernel entry};

%       \draw (0,0) edge[->, >=latex, ultra thick] node[below,align=center] {kernel\\boot} (kernel exit);
%       \draw (kernel exit) edge[red,decorate,decoration={snake,amplitude=.5mm,segment length=2mm,post length=2mm},bend left=60,->, >=latex] node[above] {user code} (kernel entry);
%       \draw (kernel entry) edge[bend left=60,->, >=latex, ultra thick] node[below,align=center] {kernel runtime} (kernel exit);

%    \draw (0,0) edge[->, >=latex, ultra thick] node[below,align=center] {\ \\kernel\\boot} (kernel entry);
    
    \node[main node,align=center,draw,ellipse,inner sep=2pt,minimum height=3em] (kernel exit) at (2,0) {kernel exit};
    \node[main node,align=center,draw,ellipse,inner sep=2pt,minimum height=3em] (kernel entry) at (1,0) {kernel entry};
    \node[main node,align=center,draw,ellipse,inner sep=2pt] (initial state) at (-0.4,0) {initial state\ \ \\ \\\phantom{user tasks}};

    \node[font=\footnotesize] at (-0.6,-0.3) {kernel};
    \node[align=center,font=\footnotesize] at (-0.2,-0.3) {user\\[-3pt]tasks};    

    % \draw (initial state.west) -- (initial state.east);
    % \draw (initial state.center) -- (initial state.south);

    % \draw (initial state.west) rectangle (initial state.south);
    % \draw (initial state.east) rectangle (initial state.south);

    % \draw[dashed] (initial state.north) -- (initial state.south);
    \draw[dashed] (initial state.north) -- +(0,-0.25);
    \draw[dashed] (initial state.south) -- +(0,1.1);    
    
%    \node[rectangle,draw] at (-0.4,0.4) {kernel};
%    \node[rectangle,draw] at (-0.4,-0.4) {user tasks};    

    \draw (initial state) edge[->, >=latex, ultra thick] node[below,align=center] {\\[-3pt]kernel\\[-3pt]boot} (kernel entry);
    \draw (kernel exit) edge[red,decorate,decoration={snake,amplitude=.5mm,segment length=2mm,post length=2mm},bend left=60,->, >=latex] node[below] {user code} (kernel entry);
    \draw (kernel entry) edge[bend left=60,->, >=latex, ultra thick] node[above,align=center] {kernel runtime} (kernel exit);        

    % \begin{scope}[shift={(0,-1.3)}]
    %   \draw[thick,green!70!black,decorate,decoration={brace,amplitude=3pt}] (2,0) -- (1,0);
    %   \draw[thick,red,decorate,decoration={brace,amplitude=3pt}] (1,0) -- (0,0);      
    %   \node[green!70!black] at (1,-0.3) {$\times$};
    %   \draw[thick,green!70!black] (0,-0.3) -- (1,-0.3);
    % \end{scope}

    % \begin{scope}[shift={(0,1.3)}]
    %   \node[green] at (0,0.3) {$\times$};
    %   \draw[thick,green!70!black,decorate,decoration={brace,amplitude=3pt}] (0,0) -- (2,0);
    % \end{scope}
    
  \end{tikzpicture}

 \caption{The typical system loop between kernel and user tasks.}
 \label{fig:system-control-flow}
 
\end{figure}

\mysubparagraph{System description} %
We consider a computer system consisting in \emph{hardware} running
both \emph{untrusted} software (\emph{user tasks})  and \emph{security-critical} software (including the \emph{kernel}). 
Our goal is to ensure that \emph{the
  only code running as privileged is the uncompromised,
  security-critical code}, where being privileged typically corresponds 
 to having a hardware flag set ({\it supervisor
mode}). Section~\ref{sec:formalization-formal-description} formalizes
these notions. 

\mysubparagraph{Attacker model} The attack goal is to escalate privilege, either by running untrusted software
with privilege or by injecting code into the security-critical software.

The attacker controls the user image\Ldash{}containing the user tasks code and data\Rdash{} loaded with the kernel before boot; can perform any software-based attack, such as modifying user task code and memory at runtime; but cannot make the hardware
deviate from its specification, and thus cannot perform physical
attacks nor exploit hardware backdoors or 
glitches~\cite{seaborn2015exploiting}.
% Protection against hardware
% attacks is orthogonal to our work.

\mysubparagraph{Trust base} We want to trust a minimal number of components:
\begin{itemize}
% \item We trust that our mathematical model faithfully represents the
%   hardware: e.g. if the hardware allows to write to kernel memory
%   using a DMA engine, the model must take this into account.
\item the software used to 
  load the kernel and user tasks in memory  (bootloader, EEPROM flasher, etc.), 
 
\item the tools used to perform the formal
  verification.  %  (i.e., the static analyzer and user tasks checker of
  % Section~\ref{sec:general-method})
  %including their model of the hardware.
\end{itemize}

\noindent Note that {\it we do not need to trust}  
%\begin{itemize}
%\item 
the kernel source code 
%\item 
nor any software  producing  the executable (e.g., compiler, assembler,
  linker, or build scripts).  
%\end{itemize}
%
%%%Instead, we perform an automated verification to ensure that the executable files built from these
%%%latter components  are not vulnerable to privilege escalation. 

%\longversion{This approach is independent from how the
%  kernel is developed: in our case study, we have verified the kernel
%  executable without even seeing its source code.}

% Instead, the executable files built from these latter components are
% verified by our analyzers to guarantee that privilege escalation is
% impossible. 

\subsection{Illustrative example of an OS kernel} \label{sec:toy}

%\subsubsection{Description of the system}

To better understand privilege escalation attacks, how a kernel is
structured to prevent these attacks, and how we verify  
that these attacks are impossible, we use, as an example, the barebone OS kernel of 
Figure~\ref{fig:a-tiny-kernel}. Although minimal, it   
handles  an arbitrary number of tasks and features  scheduling, memory protection  and
multiple levels of privilege. In particular, 
%if correctly implemented, 
it  should protect itself from the user tasks and be immune 
to privilege escalation attacks. The exemple is written in pseudo-C,
but remember that our analysis is performed on machine~code.

\mysubparagraph{Execution context of a task} This  kernel 
performs context switching between pre-defined tasks. Assuming a stack machine programming model, an
execution context is  defined here  by a read-only executable code
segment, a writable data segment holding the stack, a status register
\texttt{flags}, and two registers \texttt{pc} (program counter)
and \texttt{sp} (stack pointer) respectively pointing inside the
code and data segments. 

\mysubparagraph{Privilege level} %
A bit inside the  \texttt{flags} register %is a bitfield storing results of comparisons 
%(sign, equality, overflow flags) 
%and, more importantly,  
indicates  whether execution is \emph{privileged}.  Being
privileged allows executing instructions that would change the
privilege in the \texttt{flags} registers, or would change the values
of the \emph{system registers} (\texttt{mpu}$_1$, \texttt{mpu}$_2$,
\texttt{pc}$'$, \texttt{sp}$'$ and \texttt{flags}$'$), using $'$ to 
denote what is sometimes called  {\it banked} or {\it shadow} registers. %or \emph{shadow} registers.

\mysubparagraph{Memory protection} A running task can access only the memory inside its code and data segment. This
is enforced by the hardware through a Memory Protection Unit (MPU) 
controlled by two system registers \texttt{mpu}$_1$ and
\texttt{mpu}$_2$. Each \texttt{mpu} register enables access to the memory addresses
in a certain range  for reading, writing  or execution, depending
on the contents of the register. When the processor is in unprivileged
mode, all  other memory accesses are forbidden,  while 
in privileged mode  the MPU is bypassed and  the whole  memory can be accessed.  

\mysubparagraph{Kernel} %Most OSes rely on the use of the privileged
%execution levels provided by the processor. 
%
The \emph{kernel} is
the only program which is supposed to run privileged.  Kernel execution proceeds as follows:   
\begin{enumerate}

\item First, an \emph{interrupt} occurs, which is the only way to switch from unprivileged to privileged execution. The hardware
saves the context of the executing task and begins executing the kernel. In our example, the hardware: (a) saves the value of
 \texttt{pc}, \texttt{sp}, and \texttt{flags};  (b) restores the kernel stack pointer \texttt{sp} to the kernel stack and restores \texttt{flags} to allow privileged execution; 
  and (c) sets  \texttt{pc} to begin executing the
\texttt{kernel} function;

\item Then, the kernel dispatches the execution according to the
  interrupt received.  The only special case is the
  \texttt{RESET} interrupt,  % --  called when the system starts or is reset, 
   which  boots (i.e. initializes) the system.
  Here, it  consists in setting
  \texttt{cur}, a variable always pointing to the task that has been
  or will be executed. For all other interrupts the system is already
  executing a task, and its execution context must be saved in memory
  before we can switch to another task; 

\item After that, the kernel chooses another task to be scheduled\Dash{}here with a  simple round-robin method, 
so it is sufficient  to follow the 
\texttt{next} field in a circular list;  

\item Then, the kernel  switches to the context of the task being 
executed. Besides  restoring the working values of the registers
 previously saved, the kernel   correctly sets up  the memory
protection  by updating system registers  \texttt{mpu}$_1$ and \texttt{mpu}$_2$;  
%system registers. 
%\maybe{What ``\texttt{\& \textasciitilde{}7 | 5}'' does here is
%change the rights to only give read and execute permissions.} 

\item Finally the kernel returns from the interrupt, swapping the
values of  \texttt{pc}, \texttt{sp}, and \texttt{flags} with their primed
 counterparts. 
% (actually, \texttt{pc} does not need to be saved). 
Here, the kernel relies on the invariant that \texttt{flags}$'$ is such that
after this instruction, the processor will be in unprivileged mode.

\end{enumerate}

Note that interrupts are masked during  kernel execution,  i.e.  kernel 
execution cannot be interrupted. 

\begin{figure}
  \lstset{language=C,label= ,caption= ,captionpos=b,numbers=none,morekeywords={int32,int64}}
%    numberstyle=\scriptsize,numbers=right, stepnumber=1}
\begin{lstlisting}
 typedef struct { 
   int32 pc, sp, flags;
   int64 code_segment, data_segment;
   struct task *next;
 } Task; 
 Task *cur;
 extern Task task0;  
 
 register int32 sp, pc, flags, sp$'\!$, pc$'\!$, flags$'$;
 register int64 mpu$_1$, mpu$_2$;
 
 void kernel(int32 interrupt_number) {
   /* Interrupt transition, done in hardware. */
   swap(sp$'\!$,$\,$sp); $\,$swap(flags$'\!$,$\,$flags); $\,$swap(pc$'\!$,$\,$pc); $\,$pc$\,$=$\,$&kernel;
   /* Save context unless during boot. */
   if(interrupt_number == RESET)
     { cur = &task0; }
   else
     { cur->sp = sp$'$; cur->pc = pc$'$; cur->flags = flags$'$; }
   /* Scheduler. */
   cur = cur->next; 
   /* Context restore. */ 
   mpu$_1$ = cur->code_segment; mpu$_2$ = cur->data_segment;
   sp$'\!$ = cur->sp; pc$'\!$ = cur->pc; flags$'\!$ = cur->flags;
   /* Return from interrupt (often done in hardware) */
   swap(sp$'\!$, sp); $\,$swap(flags$'\!$, flags); $\,$swap(pc$'\!$, pc); }
\end{lstlisting}
%\end{lstlisting}
\vspace{-3mm}
\caption{Example: a minimalist OS kernel running on \mbox{ideal hardware}}
\label{fig:a-tiny-kernel}
\end{figure}

%\todo{ML:user tasks parameter $\to$ task configuration, mais aussi shape configuration $\to$ shape annotation}
\mysubparagraph{User image} %
System execution depends on the kernel code, but is also \emph{parameterized}
by the \emph{user (tasks) image}, part of the initial state. Here,
the image is made by C code statically allocating  \texttt{Task}
data structures, such that: (1) their \texttt{pc} and \texttt{sp} fields 
point into their respective code and stack sections; (2) the
\texttt{code\_segment} and \texttt{data\_segment} fields should give
the appropriate rights to these sections; (3) \texttt{flags} ensures that 
 execution is not privileged; and (4)  the \texttt{next} fields are such 
that all tasks in the system are in a circular
list. Figure~\ref{fig:exemple-initial-configuration} (Appendix, p.~\pageref{fig:exemple-initial-configuration}) 
gives an example of C code for a user image.% of such parameters. 

% \maybe{As ``\texttt{|6}'' is commented, the \texttt{flags} field of each task
% should be such that the processor will be in unprivileged mode and the MPU is active; if we uncomment this code,
% then there are no requirements on the initial value of \texttt{flags}.}
Getting a working system simply requires to compile the user image  and load it with the
kernel\longversion{ (to communicate they
must agree on a fixed address where \texttt{task0} is located)}. Such static system generation  is common in embedded systems, 
where  OS vendors and application developpers are separate entities. 

% Note that verifying the kernel code independently from the  
% but it makes the system harder to analyze compared to systems where the numnon-parameterized systems

% in embedded
% systems, especially when the hardware only provides a MPU (instead of
% a MMU), like in this
% example. 
%
%\maybe{Paradoxically it makes the system harder
% to analyze than with dynamic task creation, as knowledge about the
% tasks is scattered across two tools: the kernel and the task
% generation tool.}

\subsection{Threats and mitigation}

\mysubparagraph{Attacks} %
This kernel is secure in that it effectively prevents privilege
escalation. However, slight changes in its
behavior could allow a variety of attacks that we illustrate here:

\begin{itemize}

\item {\it Attacks targeting memory safety:} data corruption on one of the
  \texttt{flags} field of a \texttt{Task} would allow a 
   user task to raise its privilege. Corruption can come e.g. from 
  a stack overflow  or following an invalid
  \texttt{next} or \texttt{cur} pointer;  
   
\item {\it Attacks targeting memory protection:}  data corruption on the
  \texttt{code\_segment} or \texttt{data\_segment} field of a
  \texttt{Task} would allow  to extend the memory that the task  can
  access, allowing further code injection or data corruption; 

\item {\it Attacks targeting control-flow:} changing the  control-flow
  of the program to an unexpected execution path can lead to
  privilege escalation. For instance the attacker could use a stack
  smashing attack to jump to the instruction storing to \lstinline{cur->flags}, with a wrong value. 
% or it could just exploit a rare path of the
%  kernel that was never tested, for instance because it relies on a
%  race condition.
\end{itemize}

\noindent {\it Absence  of privilege escalation 
is a fragile property. On the other hand, once verified, it %verifying absence of privilege escalation
% , which rely on having every kernel code path
% being memory-safe and having a working memory protection. This is all
% the more serious that this property is the keystone of the computer
% security: when breached, nothing can be protected against the
% attacker\footnote{Except if protected directly in hardware.}.
implies that all of
the above attacks are impossible or have only a limited impact.}

\mysubparagraph{Guarantee against attacks} %
\label{sec:orgbd358fa}
\label{sec:results-example-state-invariant}
An \emph{(inductive) invariant} is a property that holds for all reachable states,  
because it is {\it initially true}   and {\it inductive}  
(i.e. remains true after
the execution of an instruction). Our method guarantees that no
software-based attack  can lead  to  privilege escalation by
automatically finding an invariant implying  that only the kernel
code can execute with hardware privilege (and that this code cannot be
modified). To be inductive, this invariant also contains properties
about memory, control-flow  and correct working of the
memory protection mechanism. The properties below
 belong to the invariant computed by our method {\it on the~example~kernel}:

\begin{itemize*}
\item \emph{Control-flow safety:} All the instructions executed in
  privileged mode are those of the \texttt{kernel} function, whose
  code is never modified;
\item \emph{Memory safety:} All memory accesses done by the kernel are
  inside its stack, the \texttt{cur} global variable, or inside one of
  the \texttt{Task}. The stack never overflows, and the stack pointer
  at the kernel entry is constant;
\item \emph{Working hardware protection:} The \texttt{flags}$'$
  register  and the \texttt{flags} field in every \texttt{Task}  
   ensure that  execution is unprivileged. The two \texttt{mpu} registers, the \texttt{code\_segment} and
  \texttt{data\_segment} fields in every \texttt{Task}, do not contain write access to the kernel memory;
\item \emph{Shape invariants:} The \texttt{cur} variable (after boot)  and the \texttt{next} field in every \texttt{Task} 
  always point to a \texttt{Task}; each \texttt{Task} is separated
  from the others and from   kernel data.
% \item The kernel execution cannot go wrong (i.e., never triggers an
%   exception)
\end{itemize*}

% \begin{itemize*}
% \item All the instructions executed in privileged mode are those of the
% \texttt{kernel} function, whose code is never modified (control-flow safety);
% \item All memory accesses done by the kernel are inside its stack, the
%   \texttt{cur} global variable, or inside one of the \texttt{Task} (memory safety);
% \item The \texttt{flags}$'$ register, and the \texttt{flags} field in every \texttt{Task}, are each such that the execution is unprivileged;
% \item The \texttt{mpu}$_1$ and \texttt{mpu}$_2$ registers, and the \texttt{code\_segment} and
% \texttt{data\_segment} fields in every \texttt{Task}, never contain write access to
% the kernel code or data;
% \item The \texttt{cur} variable (after the initialization), and the \texttt{next} field in every \texttt{Task},
% always point to a \texttt{Task}; each \texttt{Task} is separated from the others and from the other kernel data (shape abstraction).
% \item The stack never overflows, and the stack pointer at the kernel entry is constant.
% % \item The kernel execution cannot go wrong (i.e., never triggers an
% %   exception)
% \end{itemize*}

\mysubparagraph{The problem of defining Absence of Privilege Escalation} % 
Absence of privilege escalation does not {\it per se} imply memory safety nor control-flow safety\Dash{}even if our approach proves these properties {on the example kernel} 
as a {\it byproduct}.  
Indeed, not all bugs are security critical, and a system suffering from a limited 1-byte stack overflow can still be secure while not respecting strict memory safety.  
Also, control-flow safety and memory-safety are very hard to define at machine code-level, as we lack information about code and data  layout.% \todo{ML: Mais on a dit qu'on y arrivait plus haut.}   

\smallskip 

{\it Hence, a first challenge here is to provide a formal  definition of Privilege Escalation suitable to machine code analysis.}

\subsection{The case for automated verification of OS kernels}

\label{sec:discovering-verifying-inv}
We can distinguish three classes of verification methods: 

% for proving  invariant; all three have been used in previous verification of OS kernels: 

\begin{itemize}
\item \emph{manual}
  \cite{bevier1989kit,richards2010modeling,gu2015deep,xu2016practical,klein2009sel4}: the user has to 
  {\it provide for every program point a candidate invariant}, 
  %, i.e. a property true for states reaching this point; 
  then {\it prove} via a
  proof assistant that every instruction preserves these candidate invariants; 

  %for every instruction $i$ between program
  %points $p_1$ and $p_2$, the outcome of the execution of $i$ from a
  %state satisfying the property at $p_1$ will also satisfy the
  %property at $p_2$; 

\item \emph{semi-automated}\footnote{Some authors call this technique automated; we use the word semi-automated to emphasize the difference with fully-automated methods.} 
  \cite{alkassar2010automated,yang2010safe,dam2013machine,vasudevan2013design,vasudevan2016uberspark,ferraiuolo2017komodo}: the user has to 
  {\it provide} the candidate invariants {\it at some key program 
  points} (kernel entry and exit, loop and function entries and exits) and then 
  use automated provers to verify that all finite paths between these
  points preserve the candidate invariants; 

\item \emph{fully-automated}: a sound static analyzer \cite{cousot1977abstract} {\it automatically
  infers} correct invariants for every program point. The user only {\it provides 
  invariant templates} by selecting or configuring the required
abstract domains.%\todo{ML: check vasudevan2013}
%\todo{ML: vasudevan2013 ne fait pas d'inference d'invariant (bounded model checking), le retirer? Ou le classer dans semi-automated du coup?}
\end{itemize}
%
%
%%%In our toy example alone, all the properties above (except for the last one, but with others) are required for the invariant to be inductive (except in part for the last). % ; moreover this is only a toy
%%% example (with no array, function pointer, complex memory protection
%%% schemes...), 
%%% for which we did not give the complete invariant, which
%%% is even larger when doing the proof at the machine-code level. 
%
%
%
%
Experience has shown that in OS formal verification {\it ``invariant reasoning dominates the proof effort''}\footnote{Klein et al.~reports that 80\% of the effort 
in SeL4 is spent stating and verifying invariants~\cite{klein2009sel4}. }~\cite{walker1980specification,klein2009sel4},   motivating our choice for 
%of computing and verifying invariants using 
fully-automated methods.

\subsection{Challenges}  \label{sec:challenges} 
Besides a suitable definition of Privilege Escalation, 
analy\-zing real microkernels  adds extra
challenges.

%% \mysubparagraph{Scalability} A large number of user tasks raises
%% scalability issues. In our example kernel, the set of addresses where
%% \texttt{cur} can point has a cardinal equal to the number of
%% \texttt{Task}s. If we represent addresses numerically, too many tasks
%% lead to \emph{bad performance} (if we enumerate all addresses) or
%% \emph{imprecision} (if we approximate the information, e.g., using
%% intervals).

%% \mysubparagraph{Summarization} If the number of tasks is not fixed,
%% the kernel cannot find the location of data structures such as
%% \texttt{Task} at fixed addresses in memory. A flat representation of
%% memory~\cite{dam2013machine} is no longer sufficient, and we need more
%% complex representations able to precisely \emph{summarize} memory
%% (i.e. \emph{shape analyses}
%% \cite{sagiv1999parametric,chang2008relational,calcagno2011compositional}). Such
%% analyses often require writing a tedious configuration, which defeats
%% automation.

\mysubparagraph{Machine code analysis}  Binary-level static analysis  is already  very challenging on its own \cite{reps2010there,DBLP:conf/vmcai/BardinHV11,djoudi2016recovering}  
as (1) the control-flow graph is not known in advance because of computed jumps\footnote{Computed jumps are commonly introduced by compilers: {\tt return} operations,  
function pointers, optimized compilation 
of C-like {\tt switch} statements, dynamic method dispatch in OO-languages, etc.} (e.g.~\texttt{jmp @sp}) whose resolution requires runtime values,   (2) memory is a single large array of bytes with no prior typing nor partitioning,  and 
 (3) data manipulations are very low-level (masks, flags, legitimate overflows, low-level comparisons, etc.).

\mysubparagraph{Precondition} Absence of privilege escalation may be true only for
user images that match a given \emph{precondition}. For instance our
example kernel is vulnerable to privilege escalation if initially
\texttt{task0.next} can point inside the kernel stack, or if
\texttt{task0.code\_segment} allows modifying the \texttt{cur} variable.

This suggests verifying the kernel assuming this precondition, and
then checking that the user image satisfy the precondition. But
writing this initial precondition would require a manual effort,  going
against having an automated analysis. 

\mysubparagraph{Boot code} Verifying boot code has its own difficulties:  
  (1)  type invariants  holding  at runtime may not hold
in the initial state, so we have to verify their establishment 
 rather than their preservation; (2)  boot code often  includes hard-to-analyze patterns 
such as dynamic memory allocation  or  creation of
memory protection tables. % for the tasks. 
Consequently, boot code is sometimes left unverified~\cite{klein2009sel4}, and achieving 
perfect automatic analysis on boot code (0 false alarm) is very difficult.%\todo{Je ne suis pas sur que ce sont ces raisons qui ont pousse sel4 a ne pas faire le boot.}

\mysubparagraph{Parametrization} If the number of tasks is not fixed,
the kernel cannot find the memory location of data structures such as
\texttt{Task} at fixed addresses in memory. A flat representation of
memory~\cite{dam2013machine} is no longer sufficient, and we need more
complex representations able to precisely \emph{summarize} memory
(i.e. \emph{shape analyses}
\cite{sagiv1999parametric,chang2008relational,calcagno2011compositional}). Such
analyses often require  a tedious annotations, which defeats
automation.

\mysubparagraph{Concurrency} Operating system kernels are often concurrent 
  because of nested interrupts, preemptive kernel threads, or (as in
  our case study) because the hardware provides several processors. Concurrency brings issues of analysis performance and precision.

\section{Absence of privilege escalation}
\label{sec:formalization-formal-description}

%Formal verification of a system means proving that a formal model of a
%system (the computer running the operating system kernel) satisfies a
%formal property (absence of privilege escalation). 

We present here a
formalization of absence of privilege escalation (APE) suitable to automated verification.  
%of absence of privilege escalation:
Theorem~\ref{th:invariant-implies-noescalation} reduces this
problem so that it can be tackled with standard methods\Dash{}computation
of state invariants.

%
%% \emph{Privilege escalation} is a situation where an untrusted entity 
%% gains access to a \emph{privilege} it should not have, by subverting a
%% security-critical entity entitled to use this privilege.
%%
%% While our main focus is preventing user tasks (or virtual machines)
%% from getting access to kernel (or hypervisor) privileges, this
%% definition and formalization encompasses a variety of situations, like
%% a mobile phone application gaining undue access to the user agenda,
%% virtual machine escape, javascript interpreter escape, etc.  
%
The 
specific instantiation of this formalization to hardware-protected OS kernel can be found in  Appendix~\ref{app:sec:form-hardw-prot}, p.~\pageref{app:sec:form-hardw-prot}. %\ref{app:sec:form-hardw-prot}

% In a nutshell, there is a possible privilege escalation when a state
% can be reached which is both \emph{privileged} and \emph{controlled
%   by an attacker}. The formalization defines these notions of
% privilege and kernel control in general, before instantiating them in
% the case of hardware-protected operating systems.

\subsection{Privilege escalation}

We model a system (comprising the hardware, the operating system and the user tasks) as a transition system
$\langle \mathbb{S}, \mathbb{S}_0, {\to} \rangle$, where $\mathbb{S}$
is the set of all possible states in the system, $\mathbb{S}_0$ the
set of possible initial states, and
${\to} \in \mathbb{S} \times \mathbb{S}$ represents the possible
transitions: ``$s_1 \to s_2$'' means that executing one instruction
starting from state $s_1$ can result in state $s_2$.

\begin{figure}[htbp]

  \small
  \begin{tabular}{|p{.28\columnwidth}|p{.28\columnwidth}|p{.28\columnwidth}|}
    \cline{2-3}
    \multicolumn{1}{c|}{} & \makecell[c]{unprivileged} & \makecell[c]{privileged} \\
    \hline
    \makecell[r]{kernel-controlled}   & \makecell[c]{\tick} & \makecell[c]{\tick} \\
    \hline
    \makecell[r]{attacker-controlled} & \makecell[c]{\tick} & privilege escalation\\
    \hline
  \end{tabular}
  \caption{Privilege escalation happens when the system reaches a
    privileged state controlled by the attacker.}

\end{figure}

A predicate $\textsl{privileged}: \mathbb{S} \to Bool$ tells whether a
state has access to the \emph{privilege} under study. %\todo{formulation bizarre
% non? Proposition: whether a state is privileged}.
In a typical OS kernel, this predicate corresponds to the value of a hardware
register containing the hardware privilege level. The privilege level
restricts  how a state can evolve in a $\to$ transition. For instance on usual
processors, system registers cannot be modified when in an
unprivileged state. 

Two entities are sharing their use of the system, called the
\emph{kernel} and the \emph{attacker}. % (but actually represent all the code that may/may not have access to the privilege). % \todo{ML: precision importante, sinon on se demande pourquoi ``for short'' et pourquoi il y en a 2.}
A predicate
$\textsl{A-controlled}: \mathbb{S} \to Bool$ tells which entity
controls the execution (returning true if it is the attacker, and
false if it is the kernel). In the kernels that we consider, a state
is kernel-controlled if the next instruction that it executes comes
from the kernel executable file and was never modified; all the other
states are considered attacker-controlled.

This formalization corresponds to an attack model where the attacker
chooses the untrusted software running on a system, but cannot change the
security-critical software nor the hardware behavior.

%\todo{This paragraph
%repeats the ``Attack model'' paragraph of the intro. It is also not quite right}

\begin{definition}[Privilege escalation]
We define \emph{privilege escalation} of a transition system
$\langle \mathbb{S}, \mathbb{S}_0, {\to} \rangle$ as reaching a
state which is both privileged and controlled by the attacker.
%\vspace{-3mm}
\[ \mathrm{privilege\ escalation}\quad\triangleq\quad \exists s: \left\{\begin{array}{ll} & \textsl{privileged}(s) \\
                                                                          \land & \textsl{A-controlled}(s) \\
                                                                          \land & \exists s_0 \in \mathbb{S}_0: s_0 \to^{*} s  \end{array}\right.\]
                                                                    where $\to^{*}$ is the transitive closure of the $\to$ relation.
\end{definition}

Thus, an attacker can escalate its privilege by either gaining control
over privileged kernel code (e.g., by code injection), or by leading
the kernel into giving it its privilege (e.g., by corrupting  memory protection tables). 

% Hence, verifying absence of privilege escalation implies that some
% attacks, such as code injection or memory corruption, are impossible;
% or at least limited in scope (since they cannot lead to privilege
% escalation).

\subsection{Parameterized verification}
\label{sec:parameterized-verification}

The previous definition cannot be used for \emph{parameterized
  verification}, because the execution of the system depends on what
is executed by the attacker. We solve this problem by defining a
new semantics for machine code which is independent from the attacker's execution.

\myparagraph{Regular and interrupt transitions}
\label{sec:regular-interrupt-transitions}

We partition the transition relation into \emph{regular transitions} and
\emph{interrupt transitions}: 
\[ s \to s'\quad\triangleq\quad s \overset{\scriptscriptstyle \textsl{interrupt}}{\to} s'\ \lor\ s \overset{\scriptscriptstyle \textsl{regular}}{\to} s' \]

The \emph{regular transition} $\overset{\scriptscriptstyle \textsl{regular}}{\to}$
from states $s$ to $s'$ corresponds to the execution of an instruction
$i \in \mathbb{I}$, the set of all instructions. 
\[ s \overset{\scriptscriptstyle \textsl{regular}}{\to} s'\quad\triangleq\quad
  s' \in \textsl{exec}(s,\textsl{next}(s)) \]

%\todo{ML: On devrait decrire les instructions ici, en simplifié par rapport aux dbas. load,store,move, y compris pc, et c'est tout.}
$\textsl{next}: \mathbb{S} \to \mathbb{I}$ fetches and decodes the
next instruction, while
$\textsl{exec}: \mathbb{S} \times \mathbb{I} \to
\mathcal{P}(\mathbb{S})$ executes this instruction (which may be
non-deterministic). We assume that regular transitions 
$s \overset{\scriptscriptstyle \textsl{regular}}{\to} s'$ either preserve the current
privilege level or evolve from a privileged state to an unprivileged
state, but {\it cannot} evolve from an unprivileged state to a
privileged one. The \emph{interrupt transition} $\overset{\scriptscriptstyle \textsl{interrupt}}{\to}$ is
thus the only way to evolve from  unprivileged
 to  privileged. In OS kernels, it 
corresponds to the reception of hardware or software interrupts. 
% which changes the privilege level, changes the program counter to a
% specific address (the \emph{kernel entry point}), and performs other
% operations such as saving the values of some registers.

\myparagraph{Empowering the attacker}
\label{sec:enpowering-attacker}

%\xxx{parametric or parameterized? parameterized seems more common}
Note that the $\to$ transition is 
defined only when  the instruction under execution  is known, which
prevents parameterized verification. That is why we define a new transition
system
$\langle \mathbb{S}, \mathbb{S}_0, \multipleinstruction \rangle$ with
the same sets $\mathbb{S}$ and $\mathbb{S}_0$, but with a new  transition
relation $\multipleinstruction$. 

We first define the $\multipleinstructionattacker$ relation  which
over-approximates the transitions that an attacker can effectively
perform (i.e., we make the attacker more powerful). Instead of being
able to execute only one known instruction $next(s)$, the attacker will now
be able to execute sequences of arbitrary instructions:
\[s \multipleinstructionattacker s''\quad\triangleq\quad{}s'' = s\;\lor\;(\exists i, s': s \multipleinstructionattacker s'\,\land\,s'' \in \textsl{exec}(s',i))\]

The $\multipleinstruction$ relation restricts the ability to execute arbitrary
instructions to attacker-controlled states; when a state is kernel-controlled, the
normal transition apply. In addition, interrupt transitions are also
possible in attacker-controlled states.
% The $\multipleinstruction$ relation still restricts the ability of the attacker to execute arbitrary
% instructions, as privilege restrictions still apply. 
% %when a state is kernel-controlled, the
% %normal transition  applies again. 
% In addition, interrupt transitions are 
% possible in attacker-controlled states. 
\[ s \multipleinstruction s' \quad\triangleq\quad s \to s'\ \ \lor\ \  (\textsl{A-controlled}(s) \land\ s \multipleinstructionattacker s') \]

\begin{theorem}
  The set of reachable states for the
  $\langle \mathbb{S}, \mathbb{S}_0, {\to} \rangle$ transition system
  is included in the set of states reachable for
  $\langle \mathbb{S}, \mathbb{S}_0, {\multipleinstruction}
  \rangle$.
\end{theorem}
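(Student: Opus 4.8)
The plan is to observe that the enriched transition relation $\multipleinstruction$ literally contains the original relation $\to$ as a sub-relation, and then to lift this pointwise inclusion to the level of reachable states through the transitive closure.

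First I would establish the one-step inclusion $\to\;\subseteq\;\multipleinstruction$. This is immediate from the definition $s \multipleinstruction s' \triangleq s \to s' \lor (\textsl{A-controlled}(s) \land s \multipleinstructionattacker s')$: for any $s, s'$ with $s \to s'$, the left disjunct already holds, hence $s \multipleinstruction s'$. No case split on $\textsl{A-controlled}(s)$, and no distinction between regular and interrupt transitions, is required\Dash{}the inclusion holds by construction, since $\multipleinstruction$ was defined so as to add the $\multipleinstructionattacker$ steps from attacker-controlled states on top of $\to$, never to remove any transition.

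Second I would lift this inclusion to the transitive closures, using the fact that transitive closure is monotone with respect to relation inclusion: from $\to\;\subseteq\;\multipleinstruction$ one gets $\to^{*}\;\subseteq\;\multipleinstruction^{*}$. Concretely this is a routine induction on the length $n$ of a $\to$-path $s_0 \to s_1 \to \cdots \to s_n$, each step of which is also a $\multipleinstruction$ step by the first part, so the whole path is a $\multipleinstruction^{*}$ path with the same endpoints.

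Finally, since both transition systems share the same initial set $\mathbb{S}_0$, the conclusion follows. A state $s$ is reachable in $\langle \mathbb{S}, \mathbb{S}_0, {\to} \rangle$ exactly when there is some $s_0 \in \mathbb{S}_0$ with $s_0 \to^{*} s$; by $\to^{*}\;\subseteq\;\multipleinstruction^{*}$ the very same $s_0$ satisfies $s_0 \multipleinstruction^{*} s$, so $s$ is reachable in $\langle \mathbb{S}, \mathbb{S}_0, {\multipleinstruction} \rangle$. There is essentially no hard step: the only point worth stating carefully is the monotonicity of reachability under enlargement of the one-step relation, and I would note that the reflexivity of $\multipleinstructionattacker$ and the presence of interrupt transitions are irrelevant to this direction of the argument\Dash{}they matter only for showing that $\multipleinstruction$ is a sound over-approximation of the attacker, which is the motivation rather than part of this proof.
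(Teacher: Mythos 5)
Your proposal is correct and follows essentially the same route as the paper: the paper's proof consists precisely of noting the one-step inclusion $\{s' : s \to s'\} \subseteq \{s' : s \multipleinstruction s'\}$ and concluding, leaving the lifting to reachable states implicit. Your write-up merely makes explicit the monotonicity-of-reachability step (induction on path length, shared $\mathbb{S}_0$) that the paper treats as immediate.
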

\begin{proof}
  This follows directly from the fact that for every $s$,
  $\{s': s \to s' \} \subseteq \{ s': s \multipleinstruction s' \}$
\end{proof}

\begin{corollary} \label{th:privilege-escalation-new-transition-system}
  If there are no privilege escalation in the transition system
  $\langle \mathbb{S}, \mathbb{S}_0, {\multipleinstruction}\rangle$,
  there are also no privilege escalation in the transition system
  $\langle \mathbb{S}, \mathbb{S}_0, {\to}\rangle$
\end{corollary}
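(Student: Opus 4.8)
The plan is to prove the contrapositive and reduce everything to the state-inclusion established in the preceding theorem. First I would unfold the definition of privilege escalation for both systems, observing that the property depends on the transition relation \emph{only} through the reachability clause $\exists s_0 \in \mathbb{S}_0 : s_0 \to^{*} s$ (respectively $s_0 \multipleinstruction^{*} s$). The two remaining conjuncts, $\textsl{privileged}(s)$ and $\textsl{A-controlled}(s)$, are predicates on a single state and are defined identically in both transition systems; this is the one point worth checking explicitly, since the whole argument hinges on these predicates being independent of which relation generated $s$.

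Next I would assume, for contradiction, that privilege escalation occurs in $\langle \mathbb{S}, \mathbb{S}_0, {\to} \rangle$, i.e. there is a state $s$ with $\textsl{privileged}(s)$, $\textsl{A-controlled}(s)$, and $s$ reachable from some $s_0 \in \mathbb{S}_0$ via $\to^{*}$. By the preceding theorem, the set of states reachable under $\to$ is contained in the set of states reachable under $\multipleinstruction$, so this very same $s$ is also reachable from an initial state under $\multipleinstruction^{*}$. Since the two state predicates still hold of $s$, this exhibits a witness of privilege escalation in $\langle \mathbb{S}, \mathbb{S}_0, {\multipleinstruction} \rangle$, contradicting the hypothesis of the corollary.

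Therefore, absence of privilege escalation in the $\multipleinstruction$ system implies absence of privilege escalation in the $\to$ system, which is exactly the statement. I do not expect any real obstacle here: the entire content has been discharged into the reachable-set inclusion of the previous theorem, and the corollary is essentially its monotonicity consequence for the existential, state-local property that defines privilege escalation. The only care needed is the bookkeeping remark that $\textsl{privileged}$ and $\textsl{A-controlled}$ are shared between the two systems, so that a privilege-escalation witness transports verbatim along the inclusion of reachable states.
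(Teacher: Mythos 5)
Your proof is correct and follows essentially the same route as the paper's: the contrapositive combined with the reachable-state inclusion from the preceding theorem, with the (correct and worth-stating) bookkeeping remark that $\textsl{privileged}$ and $\textsl{A-controlled}$ are state-local predicates shared by both systems, so the escalation witness transports verbatim.
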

\begin{proof}  
  This is the contrapositive of the fact that if a state exists in
  $\langle \mathbb{S}, \mathbb{S}_0, {\to} \rangle$ where privilege
  escalation happens, this state also exists in
  $\langle \mathbb{S}, \mathbb{S}_0, {\multipleinstruction}\rangle$.
\end{proof}

Thanks to
Corollary~\ref{th:privilege-escalation-new-transition-system}, the
$\langle \mathbb{S}, \mathbb{S}_0, {\multipleinstruction}\rangle$
transition system can be used to prove  absence of privilege
escalation instead of
$\langle \mathbb{S}, \mathbb{S}_0, {\to}\rangle$,  with the benefit of
establishing this proof independently from a particular concrete attacker. % (works for any attacker). 

\subsection{Proof strategy}

We now  show  that we can recast absence of privilege
escalation to ease its formal verification.

\myparagraph{Absence of privilege escalation as a state property}

A \emph{state property} $(\in \mathbb{S} \to Bool)$ is a predicate
over states.
%\maybe{A state property is isomorphic to the set of states  that satisfy $p$ (the extension of $p$)
% \maybe{A state property $p$ can be viewed not only as a
% predicate but also as a set, namely the set
% $\{ s \in \mathbb{S} \mid p(s) \}$.}
A state property is \emph{satisfied} if it is true in every reachable state.

\begin{theorem}
A transition system does not have privilege escalation if, and only if, it satisfies the $\textsl{secure}$ property, where%
\[ \textsl{secure}(s) \triangleq \lnot(\textsl{A-controlled}(s) \land
  \textsl{privileged}(s)) \]
\end{theorem}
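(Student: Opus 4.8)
The plan is to prove the equivalence by directly unfolding both definitions and then invoking the duality between existential and universal quantification under negation; no inductive or semantic reasoning about the transition relation is needed. First I would recall that, by the definition of privilege escalation, the transition system \emph{does} have privilege escalation exactly when there is a state $s$ that is reachable, i.e.\ satisfies $\exists s_0 \in \mathbb{S}_0 : s_0 \to^{*} s$, and moreover satisfies $\textsl{privileged}(s) \land \textsl{A-controlled}(s)$. Consequently, the system \emph{does not} have privilege escalation precisely when no such $s$ exists.

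Next I would unfold the right-hand side. By the definition of a satisfied state property, the system satisfies $\textsl{secure}$ iff $\textsl{secure}(s)$ holds in every reachable state, that is, for all $s$ with $\exists s_0 \in \mathbb{S}_0 : s_0 \to^{*} s$ one has $\lnot(\textsl{A-controlled}(s) \land \textsl{privileged}(s))$.

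The core step is then the purely logical observation, writing $R(s)$ for the reachability predicate $\exists s_0 \in \mathbb{S}_0 : s_0 \to^{*} s$,
\[ \lnot\bigl(\exists s : R(s) \land \textsl{privileged}(s) \land \textsl{A-controlled}(s)\bigr) \;\Longleftrightarrow\; \forall s : R(s) \Rightarrow \lnot(\textsl{A-controlled}(s) \land \textsl{privileged}(s)), \]
which is just $\lnot\exists = \forall\lnot$ combined with $\lnot(A \land B) \equiv A \Rightarrow \lnot B$. The left-hand side is exactly ``no privilege escalation'' as unfolded in the first paragraph, and the right-hand side is exactly ``$\textsl{secure}$ is satisfied'' as unfolded in the second; hence the two are equivalent, establishing both directions of the iff at once.

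Since the argument is entirely definitional, there is no genuine mathematical obstacle here; the only point demanding care is to ensure that the notion of a \emph{reachable} state is literally the same on both sides of the equivalence. That is, I must use the identical quantification $\exists s_0 \in \mathbb{S}_0 : s_0 \to^{*} s$ that appears in the definition of privilege escalation and in the definition of a satisfied state property, so that the quantifier negation lines up cleanly and no gap is introduced between ``attacker reaches'' and ``system reaches'' a privileged attacker-controlled state.
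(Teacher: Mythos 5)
Your proof is correct: it is exactly the routine definitional unfolding (negate the existential in the definition of privilege escalation, push the negation through, and observe this coincides with $\textsl{secure}$ holding on all reachable states), which is evidently why the paper states this theorem without any proof at all. Your extra care that the reachability predicate $\exists s_0 \in \mathbb{S}_0 : s_0 \to^{*} s$ is literally the same on both sides is the right point to check, and it holds since both notions are defined over the same transition system.
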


% \textcolor{red}{XXX: du coup, ca peut etre prouvé par n'importe quel model checker qui verifie des state properties. }
% \textcolor{red}{XXX: the idea is to compute and prove an \emph{invariant}, i.e. a sound over-approximate of the set of reachable states, and check that not states in this invariant can be both attacker-controlled and privileged}

Recasting absence of privilege escalation as a state property provides
a method to prove this property. The idea is to find a \emph{state
  invariant}: a property $p$ which holds on every initial state
$s \in \mathbb{S}_0$  and is inductive (i.e.
$p(s) \land s \to s'\ \Rightarrow\ p(s')$), and thus holds on each
reachable state. If this invariant $p$ is \emph{stronger} than
\textsl{secure}, this proves that we cannot reach a state which is both
attacker-controlled and privileged.

%(i.e.
%$\forall s \in \mathbb{S}: p(s) \Rightarrow \textsl{right}(s))$)

% XXX: est-ce qu'on a besoin de parler d'over-approximation, ou est-ce qu'on peut juste dire qu'on peut calculer des invariants automatiquements? Ou est-ce que, au contraire, je ne devrais parler que d'ensembles et jamais de formule?

% sound over-approximation of the set of 

% by a static analyzer
% based on abstract interpretation \cite{cousot1977abstract}. Indeed
% abstract interpretation is a sound method that can compute and verify
% invariants (including state invariants). The idea is to compute a
% sound over-approximation of the set of reachable states, and check
% that no states in this over-approximation can be both
% attacker-controlled and privileged.

\myparagraph{Reasoning about consequences of privilege escalation}

Given how we empowered the attacker, we can also reason about the
\emph{consequences} of a privilege escalation: indeed the definition
of $\multipleinstruction$ implies that the attacker will do everything
that it can do. Thus, to prove absence of privilege escalation in 
$\langle \mathbb{S}, \mathbb{S}_0, {\to}\rangle$, 
one can prove that a {\it bad  consequence} of privilege
escalation never happens in 
$\langle \mathbb{S}, \mathbb{S}_0, {\multipleinstruction}\rangle$.

In the rest of the paper, we will consider that {\it bad consequence} is
given by the following assumption:

\begin{assumption}
  Running an arbitrary sequence of privileged instructions allows to
  reach any possible state.
\end{assumption}

This assumption is reasonable for every system that confines an
adversarial code in some kind of container, 
%
%: virtual machines in a
%hypervisor, processes in an operating system, bytecode executed by an
%interpreter, code supervised by a runtime monitor, etc. 
%
as escaping such a
container means getting rid of any restrictions on the possible
actions. 
In the case of hardware protected OSes,
executing an arbitrary sequence of instructions with hardware privilege
allows changing any register or memory location, thus  reaching any
state. Using this assumption we can prove the following theorem:

\begin{theorem} 
  If a transition system
  $\langle \mathbb{S}, \mathbb{S}_0, {\multipleinstruction} \rangle$
  is vulnerable to privilege escalation, then the only satisfiable state property
  in the system is the \emph{trivial state invariant $\top$}, true for every state.
  \label{th:priv-escal-state-invariant}
\end{theorem}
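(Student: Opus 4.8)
The plan is to prove the statement directly: assuming the system $\langle \mathbb{S}, \mathbb{S}_0, {\multipleinstruction} \rangle$ is vulnerable to privilege escalation, I will show that \emph{every} state in $\mathbb{S}$ is reachable. Once this universal reachability is established the theorem is immediate: a state property is satisfied exactly when it holds on every reachable state, so if the reachable states exhaust $\mathbb{S}$, any satisfied property holds on all of $\mathbb{S}$ and is therefore $\top$; conversely $\top$ is trivially satisfied, so it is the unique satisfiable property. Thus the entire weight of the proof falls on the reachability claim.

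First I would unfold the hypothesis. Vulnerability to privilege escalation in the $\multipleinstruction$ system yields a witness state $s^{*}$ that is simultaneously reachable from some $s_0 \in \mathbb{S}_0$ via $\multipleinstruction^{*}$, and satisfies $\textsl{privileged}(s^{*})$ and $\textsl{A-controlled}(s^{*})$. Next I would exploit the definition of $\multipleinstruction$ at $s^{*}$: since $\textsl{A-controlled}(s^{*})$ holds, the disjunct $\textsl{A-controlled}(s) \land s \multipleinstructionattacker s'$ is active, so for any $t$ with $s^{*} \multipleinstructionattacker t$ we obtain a single transition $s^{*} \multipleinstruction t$. Crucially, the whole multi-instruction $\multipleinstructionattacker$ sequence from $s^{*}$ collapses into \emph{one} $\multipleinstruction$ step, because the attacker gating is checked only at the source $s^{*}$; I therefore never need the intermediate states of the sequence to remain attacker-controlled. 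It then remains to show $s^{*} \multipleinstructionattacker t$ for every $t \in \mathbb{S}$. This is exactly where the Assumption enters: $s^{*}$ is privileged, and $\multipleinstructionattacker$ lets the attacker execute arbitrary instruction sequences (in particular arbitrary privileged ones) from $s^{*}$; by the Assumption such a sequence can reach any state, so $\{\, t : s^{*} \multipleinstructionattacker t \,\} = \mathbb{S}$. Composing $s_0 \multipleinstruction^{*} s^{*} \multipleinstruction t$ then witnesses that every $t$ is reachable.

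The main obstacle is not combinatorial bookkeeping but the careful bridging between the \emph{informal} Assumption (``running an arbitrary sequence of privileged instructions reaches any state'') and the \emph{formal} relation $\multipleinstructionattacker$. I must argue that the attacker, acting from the privileged \emph{and} attacker-controlled state $s^{*}$, genuinely has access to exactly the instruction sequences the Assumption quantifies over, and that privilege is not silently lost along the way. This is where it matters that regular transitions may only drop privilege and never raise it: starting privileged at $s^{*}$, the attacker can keep to privileged instructions, so the Assumption applies to the full sequence. A secondary, more linguistic point is interpreting ``the only satisfiable state property is $\top$'' correctly, namely identifying ``true on all reachable states'' with ``true on all states'' once reachability is shown to cover $\mathbb{S}$; handling this reduction cleanly is what makes the universal reachability claim the sole real content of the proof.
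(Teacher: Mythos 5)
Your proposal is correct and follows essentially the same route as the paper's own proof: obtain a reachable state that is both privileged and attacker-controlled, use the definition of $\multipleinstruction$ to let the attacker execute an arbitrary instruction sequence from it, invoke the Assumption to conclude every state is reachable, and hence only $\top$ is satisfiable. Your write-up is merely a more careful expansion of the paper's terse argument (in particular the collapse of a $\multipleinstructionattacker$ sequence into a single $\multipleinstruction$ step, and the preservation of privilege along the attacker's sequence), with no substantive difference in approach.
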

\begin{proof}
  If a privilege escalation vulnerability exists, then any state can be
  reached, as executing an arbitrary sequence of instructions starting
  from a privileged state can lead to any state. The only state
  invariant that is true on every state is~$\top$.
\end{proof}

We define as \emph{non-trivial} any state invariant different from~$\top$.

\longversion{Note that on the contrary, if any state can be reached,
  then obviously a privileged attacker-controlled state can be
  reached, provided that such a state exists (if it does not, then
  verifying  absence of privilege escalation is pointless).}

\begin{theorem}
  If a transition system satisfies a non-trivial state invariant, then it is
  invulnerable to privilege escalation attacks.%
  \label{th:invariant-implies-noescalation}
\end{theorem}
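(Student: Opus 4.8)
The plan is to recognize this statement as the logical contrapositive of Theorem~\ref{th:priv-escal-state-invariant}, so the proof should be short and require no new machinery. First I would unfold the hypothesis: to say that the system \emph{satisfies a non-trivial state invariant} means there exists a state invariant $p$ that, on the one hand, holds on every reachable state (being inductive and true on $\mathbb{S}_0$, it is preserved along $\multipleinstruction$ and hence satisfied), and, on the other hand, is different from $\top$, i.e. there is at least one state $s^\star \in \mathbb{S}$ with $\lnot p(s^\star)$.

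Next I would invoke Theorem~\ref{th:priv-escal-state-invariant}, which asserts that vulnerability to privilege escalation forces the only satisfiable state property to be $\top$. Reading this contrapositively, if there exists any satisfiable state property other than $\top$, then the system cannot be vulnerable, i.e. it is invulnerable. The non-trivial invariant $p$ furnished by the hypothesis is exactly such a witness: it is satisfied (true on all reachable states) and $p \neq \top$. Thus the premise of the contrapositive is met, and invulnerability to privilege escalation follows immediately.

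If one prefers a self-contained argument re-deriving the same conclusion, I would proceed by contradiction using the Assumption. Suppose the system were vulnerable. Then escalation leads to a privileged attacker-controlled state, from which an arbitrary sequence of privileged instructions can reach \emph{any} state of $\mathbb{S}$; consequently every state is reachable, in particular the state $s^\star$ on which $p$ fails. But $p$ is assumed satisfied, so $p(s^\star)$ must hold, a contradiction. Hence the system is invulnerable.

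The only point requiring care, and the main (minor) obstacle, is the alignment of vocabulary: I must check that a \emph{state invariant} in the sense defined earlier (initial plus inductive, hence holding on all reachable states) is indeed a \emph{satisfied state property}, and that \emph{non-trivial} ($\neq \top$) is precisely the negation of the conclusion of Theorem~\ref{th:priv-escal-state-invariant}. Both identifications follow directly from the definitions, so no genuine difficulty remains; the reachability of $s^\star$ relies solely on the previously stated Assumption and needs no further justification.
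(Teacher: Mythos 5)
Your proof is correct and takes essentially the same route as the paper: the paper's proof is exactly "by contraposition [of Theorem~\ref{th:priv-escal-state-invariant}], and the fact that state invariants are satisfiable state properties," which is what you spell out, including the vocabulary check that a non-trivial satisfied invariant is a satisfiable state property different from $\top$. Your optional self-contained contradiction argument merely re-derives Theorem~\ref{th:priv-escal-state-invariant} inline via the Assumption, so it adds no new ingredient beyond the paper's own reasoning.
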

\begin{proof} By contraposition, and the fact that state invariants
  are satisfiable state properties.
\end{proof}

Theorems \ref{th:priv-escal-state-invariant}  \& \ref{th:invariant-implies-noescalation}  have two crucial practical implications:
\begin{itemize}
\item If privilege escalation is possible, \emph{the only state property that holds in the system is $\top$}, making it impossible to prove definitively
  any useful property. Thus, \emph{proving  absence of
    privilege escalation is a necessary first step} for any formal
  verification of an OS kernel; 

\item The proof of \emph{any} state property different from $\top$ implies as a
  byproduct the existence of a piece of code able to protect itself
  from the attacker, i.e. a kernel with protected privileges. In
  particular, we can prove absence of privilege escalation automatically, by successfully finding \emph{any} non-trivial state
  invariant with a sound static analyzer.
\end{itemize}

\section{\binseccodex\ for OS verification}
\label{sec:general-method}

In Section~\ref{sec:formalization-formal-description} we have
reduced the problem of proving absence of privilege escalation   to
the problem of finding a non-trivial system invariant. We now detail our methodology to find such an invariant. 

\subsection{Background: general principles} 

\mysubparagraphbis{Abstract interpretation}~\cite{cousot1977abstract} is a method for building
sound static analyzers that can {\it infer program invariants} and {\it verify program properties}. 
Using it, we can compute a
set of states guaranteed to be {\it larger} than the set of reachable states for the
$\langle \mathbb{S}, \mathbb{S}_0, {\multipleinstruction} \rangle$
transition system. If this computed set is different from
$\mathbb{S}$, this proves that a {\it non-trivial invariant exists}. 

% Abstract interpretation works by combining \emph{abstract domains},
% i.e. sets of \emph{abstract values} that are finite representations of a set. Formally,
% the \emph{meaning} of an
% abstract domain $\mathbb{D}^\sharp$ containing elements $d^\sharp$ representing a set of elements in
% $\mathbb{D}$, is given by its \emph{concretization}, which is a
% function $\gamma_\mathbb{D}^\sharp: \mathbb{D}^\sharp \to
% \mathcal{P}(\mathbb{D})$. For instance, intervals are finite
% representations of (possibly infinite) sets of integers using two bounds:
% $\mathbb{D}^\sharp = \left(\mathbb{Z} \cup \{-\infty,+\infty\}\right)^2$ and
% e.g., $\gamma\left(\langle 3,+\infty\rangle\right) = \{ x \in \mathbb{Z} \mid 3 \le x \}$.

Abstract interpretation works by computing \emph{abstract values}\Ldash{}elements of \emph{abstract domains}\Rdash{}representing a set of states.  Abstract domains are
iteratively computed until a \mbox{(post-)}fixpoint is reached, yielding an
over-approximation of the set of reachable states. This
over-approximation allows to trade precision for termination. %
%
% The basic idea behind abstract interpretation is to compute over finite representations of families of {\it abstract} sets of states 
% ({\it abstract domains})  instead of concrete states in order to  trade  
% precision for termination. The abstract set of reachable states is then computed in an iterative fixpoint manner.  
%
% and manage to effectively compute an overapproximation of the set of reachable states 
% -- in an iterative fixpoint computation manner. 
%
Such abstract domains range  from simple (intervals) to complex (polyhedra), offering various trade-offs 
between precision and scalability, and domains can be combined together~\cite{cousot1979systematic}. 

In practice, designing an abstract interpreter amounts to defining or choosing a combination of abstract domains tailored to the problem at hand.%, and tuning several heuristic choices regarding fixpoint iteration and domain combination.  
%\todo{A merger avec appendix}

\mysubparagraph{Intermediate representation for machine code analysis}   
Intermediate Representations (IR) \cite{Brumley2011,bardin2011bincoa,kaist-bar-workshop} are the cornerstone of modern binary-level code  
analyzers, used to lift  the different binary Instruction Set
Architectures  into 
a single and simple language. 
We rely on the IR of \binsec~\cite{bardin2011bincoa,Djoudi2015}, called DBA\Dash{}other
IRs are similar. Its syntax is shown Figure~\ref{fig:dba_ir}.

% It is Explicit \cite{DBLP-conf/kbse/KimFJJOLC17} and Self-contained, meaning an
% instruction can only update a single variable in the context and that there is
% no call to built-in functions.

{

\floatstyle{boxed}
\restylefloat{figure}

\newcommand{\te}[1]{{\sf{#1}}}
\begin{figure}[htbp]
  \renewcommand\te[1]{$\langle$#1$\rangle$}
  \centering\scriptsize
  \begin{tabular}{rcl}
    \te{stmt} & {\sf :=} & {\sf store} \te{e} \te{e} $\mid$ \te{reg} $\bm\leftarrow$ \te{e} $\mid$ {\sf goto} \te{e} \\
              &  &   $\mid$ \te{stmt};\te{stmt} $\mid$ {\sf if} \te{e} \te{stmt} {\sf else} \te{stmt}\\
    \te{e} & {\sf :=} & \te{cst} $\mid$ \te{reg} $\mid$ {\sf load} \te{e}
                      $\mid$ \te{unop} \te{e} $\mid$  \te{e} \te{binop} \te{e} \\
    \te{unop} & {\sf :=} & ${\bm \neg}$ $\mid$ ${\bm -}$ $\mid$ {\sf uext$_{n}$}
                          $\mid$ {\sf sext$_{n}$}    $\mid$ {\sf extract$_{i .. j}$} \\
    \te{binop} & {\sf :=} & \te{arith} $\mid$ \te{bitwise} $\mid$ \te{cmp}
                           $\mid$ {\sf concat} \\
    \te{arith} & {\sf :=} & ${\bm +}$ $\mid$ ${\bm -}$ $\mid$ ${\bm \times}$
                           $\mid$ {\sf udiv}    $\mid$ {\sf urem}
                           $\mid$ {\sf sdiv}   $\mid$ {\sf srem} \\
%    \te{bitwise} & {\sf :=} & ${\bm \land}$ $\mid$ ${\bm \lor}$ $\mid$ ${\bm \oplus}$
%                              $\mid$ $\ll$ $\mid$ $\gg_s$ $\mid$ $\gg_u$ \\
    \te{bitwise} & {\sf :=} & {\sf and} $\mid$ {\sf or} $\mid$ {\sf xor}
                             $\mid$ {\sf shl} $\mid$ {\sf shr} $\mid$ {\sf sar} \\
    \te{cmp} & {\sf :=} & ${\bm =}$ $\mid$ ${\bm \neq}$
                         $\mid$ $>_{u}$   $\mid$ $<_{u}$
                         $\mid$ $>_{s}$         $\mid$ $<_{s}$
  \end{tabular}
  \caption{Low-level IR for binary code}
  \label{fig:dba_ir}
\end{figure}

}

\noindent DBA is a minimalist language  comprising only a single type of
elements (bitvector values) and four types of instructions: register assignments, memory stores, conditionals, and jumps (static and computed). Expressions contain memory load, as well as standard modular arithmetic and bit-level operators.  
Values are stored in  registers or in memory (a single byte-level array), the imperative semantics is standard.  
This is  enough to encode the
functional semantics of major instructions sets\Dash{}including x86 and ARM.
%\footnote{Actually, a single machine instruction is translated into a local control-flow graph whose nodes are DBA instructions, called a DBA block.}

\subsection{Key ingredients} \label{sec:key}

In Section~\ref{sec:formalization-formal-description} we have
simplified the problem of proving absence of privilege escalation, to
the problem of finding a non-trivial system invariant. We intend to
use binary-level static analysis to
infer and verify this invariant automatically.

%We intend to develop a static analysis tailored to binary-level APE verification 
%of microkernels. 
%
%Thanks to Theorem \ref{th:invariant-implies-noescalation} we know that  if we 
%find a non trivial invariant of the code under analysis  then this code 
%satisfies APE. 
%
In practice, this requires developing a static analysis for machine
code  and run it
on the system loop (Figure~\ref{fig:system-control-flow}). We  
define a special transfer function to handle the
$\multipleinstructionattacker$ (user code) transition, that removes
knowledge about the contents of registers and memory
\textsl{accessible} from the loaded memory protection table.
This approach should work if the system is not parameterized,
i.e.~when the system is completely known, or at least the memory layout of
user tasks is known \cite{bevier1989kit,dam2013machine}. 

Yet, we face two problems here: 
(1) binary-level static analysis  is already  very challenging on its own \cite{reps2010there,DBLP:conf/vmcai/BardinHV11,djoudi2016recovering} (Section \ref{sec:challenges})%
%as  control-flow graph is not known in advance,  memory is a single large array of bytes with no prior typing nor partitioning and 
%data manipulations are very low-level (masks, flags, legitimate overflows, etc.) 
\Dash{}see, e.g., the boot code analysis in our case study; %, analyzing the boot code was very difficult;    
(2) the  systems we are interested in {\it are} parameterized.

\smallskip

{\it Our methodology  builds upon three key ingredients.}

\mysubparagraph{Key 0:  An up-to-date binary-level static analysis} We build a 
state-of-the-art {\it sound} static analyzer for machine code (Section~\ref{sec:static-analysis}), 
picking among the best practices from the literature\cite{reps2010there,DBLP:conf/vmcai/BardinHV11,djoudi2016recovering,kinder2009abstract,sepp2011precise,BrauerKK10}.   
Interestingly, while prior works is partitioned\cite{reps2010there} into {\it raw binary} analysis  
(no assumption, adequate to adversarial analysis such as malware but extremely difficult to get precise) 
and {\it standardly-compiled code} analysis 
(with {\it hard-coded} extra assumptions\footnote{Typically, assumptions  on control flow (trust in an external disassembler)
  or memory partitioning (e.g., stack is separated from the rest of memory).}),
 our own method does target standardly-compiled code but the extra assumptions are {\it explicit}  (shape annotations) and  {\it fully checked}.   
%by the analysis. 
%\todo{ML: attention, reps dit explicitement que ses hypothèses sont vérifiées. SB: citer celui de 2004 ? ML: En fait ce papier dit
%  qu'ils peuvent etre unsound si ils emettent des alarmes}

\mysubparagraph{Key 1:  A type-based weak shape abstract domain} We propose a
novel weak shape abstract domain
(Section~\ref{sec:shape-abstract-domain}) based on the layout of types
in memory. This fulfills many roles:

%  it is such that  (in our implementation, we
% extracted them from debug information in the kernel executable). This
% type system is not that of C (which is too weak to guarantee type
% safety, and too strong to properly handle access to fields using
% numerical offsets), but can be derived from C types augmented with
% simple \maybe{SAL-like?} annotations.

\begin{itemize}

\item Being a shape domain, it can be used to \emph{summarize} the
  memory and allows representation of addresses which is
  \emph{scalable} (no enumeration) and \emph{precise}; 

\item Types and type invariants can encode the \emph{precondition} on
  (the shape of) user images in a simple way;

\item Being based on types, \emph{most of the annotations can be
    extracted from the type declarations} of the kernel code. This
  provides a base set of annotations for the shape domain, that is
  easy to strengthen using type invariants.

\end{itemize}

\mysubparagraph{Key 2: Differentiated handling of boot and runtime} %
%
%Our weak shape abstract domain is tailored to verifying the
%preservation of data structures; this is the key to kernel
%verification~\cite{walker1980specification,alkassar2010automated} and
%will work well on the kernel runtime. Using this shape domain allows a
%parameterized analysis, reusable and efficient since details
%of the user tasks are abstracted.
%
Contrary to runtime code, the boot code does not preserve data structure invariants but
establishes them. Thus, parameterized verification (with 0 false alarms) of boot code is complex.
%(e.g. when memory protection tables are created dynamically).
On the other hand, when the user image is
known, boot code execution is almost deterministic (except from
small sources of non-determinism: multicore handling, device
initialization, etc.), mostly analyzable by simple
interpretation.

We propose an asymmetric treatment of boot code and runtime: our
parameterized analysis completely ignores the alarms in boot code,
meaning that when reaching \emph{0 alarm in the runtime}, we get a
\emph{system invariant} $I$ \emph{under the precondition} that the
state after boot is in $I$. The latter is then verified by
analyzing (mostly, interpreting) the boot code with a given user tasks
image.

%
% Thus, we just have to concentrate on getting an invariant for the
% runtime. Then, to verify the boot code, we just have to check that the
% states resulting from the boot match this invariant (reusing inclusion
% checking, a core operation of the static analysis).

\subsection{Putting things together: the 3-step methodology}

\begin{figure}[tbp]
  \small
  \begin{tikzpicture}[xscale=2.66,
    every node/.style={align=center},
    main node3/.style={draw=green!50!black!50,top color=white,bottom color=green!50!black!20},
    main node2/.style={draw=orange!50!green!50!black!50,top color=white,bottom color=orange!50!green!50!black!20},
    main node1/.style={draw=orange!80!green!50!black!50,top color=white,bottom color=orange!80!green!50!black!20}    
    ]

    \draw[black!20,thin] (0.53,2.7) -- (0.53,-1.7);    
    \draw[black!20,thin] (1.45,2.7) -- (1.45,-1.7);
    \draw[black!20,thin] (2.55,2.7) -- (2.55,-1.7);

    \draw[black!20,thin] (0,0.88) -- (3.4,0.88);
    \draw[black!20,thin] (0,-0.6) -- (3.4,-0.6);        
    
    \begin{scope}[shift={(0,0)}]
      \node[above left] at (0.54,0.45) {\textsc{Method}};
      \node[main node1,draw,ellipse,inner sep=1pt]   (extraction) at (1,0) {Automated\\ annotation};
      \node[main node2,draw,ellipse,inner sep=1pt]   (statican) at (2,0) {Static\\ analysis};
      \node[main node3,draw,ellipse,inner sep=1pt] (usercheck) at (3,0) {User tasks\\checking};            
    \end{scope}

    \begin{scope}[shift={(0,2.8)}]
    \node[below left] at (0.54,0) {\textsc{Input}};
    \node[rectangle,below,align=left] at (1,0) {\textbf{Kernel}:\\- types\\(optionally \\  manually \\strengthened)};
    \node[rectangle,below,align=left] at (2,0) {\textbf{Kernel}:\\- initial state\\- boot code\\- runtime code};%\\- shape configuration}; 
    \node[rectangle,below,align=left] at (3,0) {\textbf{Kernel}:\\- initial state\\- boot code\\[3pt]\textbf{User tasks}:\\- initial state};
    
%    \node[draw,rectangle,above] at (3,0) {User tasks\\initial state\\Kernel\\initial state\\boot code};    
    \end{scope}

    \begin{scope}[shift={(0,-0.6)}]
    \node[below left] at (0.54,0) {\textsc{Output}};
    \node[rectangle,below] at (1,0) {Shape annotations\\configuring\\the analysis};
    \node[rectangle,below] at (2,0) {System invariant \\ under  precondition \\ on the state after boot};
    \node[rectangle,below] at (3,0) {Invariant holds};    
    \end{scope}

    \draw[thick,->] (extraction) edge[loop above] (extraction);

    \draw[very thick,->] (extraction) edge (statican);
    \draw[very thick,->] (statican) edge (usercheck);
    
  \end{tikzpicture}

  \caption{3-step methodology for system invariant computation.}
  \label{fig:3-step-methodology}
  
\end{figure}

%Regarding the specific challenges of binary-level OS verification (Section \ref{sec:challenges}) and our 
%key insights (Section \ref{sec:key}),  

Our methodology (Figure~\ref{fig:3-step-methodology}) consists in three steps:

%\todo{ML: shape configuration -> analysis configuration?}
\mysubparagraph{1. Automated annotation} We automatically extract 
type declarations from the kernel (currently using the DWARF debug section in either the kernel or user image executable,
but extraction from source code is also feasible), 
and (optionally) strengthen them with simple type invariant annotations 
 (presented in Appendix~\ref{app:sec:annotation-language}). This produces {\it shape annotations} configuring the kernel analysis;  

% These annotations
% should be describing the memory layout in the kernel runtime, ignoring
% the fact that some of them may not hold in the boot code.

\mysubparagraph{2. Parameterized static analysis} Then, we launch the
static analyzer on the kernel (boot code and runtime code) using the
shape annotations configuring the shape abstract domain. The result is an
invariant for the runtime $I$ under the precondition that the state
after boot satisfies $I$.

\mysubparagraph{3. User tasks checking} Given a user image, 
we then interpret the boot code and check that its final state indeed satisfies 
the invariant precondition on the kernel runtime. If it does, we have a system
invariant\Dash{}if not trivial it guarantees absence of privilege escalation.

%\subsubsection*{Remarks}
%\medskip 
%Note that even if we check the boot code separately, the
%static analysis still has to run through the boot code to find the
%current invariant (e.g., in Figure~\ref{fig:a-tiny-kernel}, analyzing
%the boot code is necessary to ensure that \texttt{cur} is initialized
%in the runtime). 
%
%
%
%% Note also that the analysis of the boot code is done
%% assuming that the types invariant of the runtime already hold before
%% boot, which might not be true. This works because in the
%% initial state, the parts of the data that do not match the type
%% invariants will generally be overwritten by the boot code without
%% being read. Doing this avoids the effort of writing down the expected
%% format of the user tasks parameters before boot. 
%

%\todo{ML: Type invariants is only template invariant: verifier qu'il est qq part, pour contrer l'argument ``on donne deja l'invariant''}
%Note that while we provide a template
%invariant under the form of \emph{type} invariants, the real invariant
%about the kernel \emph{values} is computed automatically by our analysis.

\subsection{Binary-level static analysis}
\label{sec:static-analysis}
\label{sec:static-analysis-machine-code}

%Sound static analysis of machine code is known to be challenging~\cite{reps2010there}. 
Outside of our type-based weak shape abstract domain (Section~\ref{sec:shape-abstract-domain}),   
%used to represent the part of the memory holding the user tasks parameters,
our analyzer builds on state-of-the-art techniques for machine
code analysis: 

%\begin{itemize}
%\item \emph{Control flow.} 

\mysubparagraph{Control flow} Control and data are strongly interwoven at binary level, since  
%  For instance function returns or switch statements are translated to
  resolving computed jumps (e.g., \texttt{jmp @sp})  requires to precisely track  
  values in registers and memory,  
 %Hence, Control Flow  needs to be recovered 
 % simultaneously with Data Flow~\cite{kinder2009abstract,DBLP:conf/vmcai/BardinHV11}, 
  while on higher-level languages  
  retrieving a good approximation of the CFG is simple. Following Kinder or V\'edrine et al.~\cite{kinder2009abstract,DBLP:conf/vmcai/BardinHV11}, 
  our analysis computes simultaneously the CFG and value abstractions;       

%: It has to retrieve the control flow graph
%  out of machine code.

%\item \emph{Value.} %The most important issue when designing abstract
  %interpreters is the balance between analysis performance and
  %precision. 

\mysubparagraph{Values}  We mainly use efficient non-relational abstract domains
  (reduced product of the signed and unsigned meaning of
  bitvectors~\cite{djoudi2016recovering}, and congruence
  information~\cite{granger1989congruences}), complemented   with
   symbolic relational information
  \cite{mine2006symbolic,gange2016uninterpreted,djoudi2016recovering} for local simplifications 
  of sequences of machine code;
  %\todo{@XR, ON: une reference sur le variable bound chehcking comme on fait?}

%\item \emph{Memory.} 

\mysubparagraph{Memory} Our memory model is ultimately byte-level in order to deal with 
  very low-level coding aspects of microkernels. Yet, 
  as representing each memory byte  separately is inefficient and
  imprecise,  we use a stratified representation of memory caching 
  multi-byte loads and stores, like Min\'e~\cite{mine2006field}.  In
  addition, % we do not map all the addresses in memory upfront, but we
  the tracked memory addresses are found on demand;  
  %  and use only those that are accessed by the kernel during the analysis; 

%\item \emph{Extra precision.} 

\mysubparagraph{Precision} To
  have sufficient precision, notably to enable strong updates to the
  stack, our analysis is flow and fully context-sensitive (i.e. inline
  the analysis of functions), which is made possible by the small size
  and absence of recursion  typical of microkernels. Moreover we unroll
  the loops when the analysis finds a bound on their number of iterations;
  
  % In machine code, this requires precise heuristics to identify when
  % the control changes of functions.

%\item \emph{Concurrency.} 
  %Operating system kernels can be concurrent 
  %because of nested interrupts, preemptive kernel threads, or (as in
  %our case study) because the hardware provides several processors. 

\mysubparagraph{Concurrency}  We handle  shared memory zones 
  through   a flow-insensitive abstraction  making them
  independent from  thread
  interleaving~\cite{venet2004precise}. Our weak shape abstract domain
  (Section~\ref{sec:type-abstract-domain}) represents one part of the
  memory in a flow-insensitive way. For the other zones we identify
  the shared memory zones by intersecting  the addresses read and
  written by each thread \cite{mine2011static}, and only perform weak
  updates on them.
%\end{itemize}

%Our static analysis is made by composition~\cite{cousot1979systematic} of abstract domains. 

\smallskip 

\noindent More formal details are given in Appendix~\ref{app:sec:formal-description-sound-static-analysis} (p.~\pageref{app:sec:formal-description-sound-static-analysis}).
% except for our novel type-based weak shape abstract domain, described in Section~\ref{sec:shape-abstract-domain}.  

%\section{Static analysis of parameterized memory}
\section{A type-based weak shape abstract domain}
\label{sec:shape-abstract-domain}
\label{sec:type-abstract-domain}

% Following our method, we require an abstract domain to verify the
% preservation of memory invariants (like the absence of overlapping
% between several objects of type \texttt{Task}) in the
% parameterized part of memory, whose layout depends on the application
% being executed.  Abstract domains sufficiently precise to fulfill this
% role are called shape domains.

% We thus need a shape domain, which can \emph{handle unannotated
%   machine code}, is \emph{easy to setup} (shape analyses often require
% tedious parametrization by a formal method expert), and
% \emph{efficient} (shape analyses can be slow even on small codes).

The weak shape abstract
domain is designed to  verify the preservation of memory layouts
expressed using a particular dependent type system,  tailored to the
analysis of machine code (e.g., using subtyping to access
field offsets implicitly). 

%\todo{ML:based on type easy to setup -> verifier que c'est dit ailleurs qu'en intro. weakness efficient and suitable to concurrent verification -> aussi}
%Being based on types make it easy to set up (types are easily
%understood by non-experts, and most of the type information can be
%retrieved automatically); the weakness make it efficient and suitable
%to the verification of concurrent programs (it does not require
%tracking the contents of memory).

\subsection{Types and labeling of memories}
\label{sec:typed-memories}

Our memory abstraction considers only memories that follow constraints
coming from how C compilers lay out values in memory.  We interpret a
C declaration like \texttt{Task task0} as having two distinct
meanings: first, that the memory addresses occupied by \texttt{task0}
is \emph{labeled} as being a \texttt{Task}; second, that the contents
of \texttt{task0} are in the set of values that a \texttt{Task} should
contain. In particular, \texttt{task0.next} should only contain
addresses with a \texttt{Task} label.

Formally, we will define \emph{types}
$\mathbb{T}$, \emph{labelings}
$\mathscr{L} \in \mathbb{A} \to \mathbb{T}$ of addresses,
\emph{interpretations}
$\llparenthesis \cdot \rrparenthesis_{\mathscr{L}}$ of types as set of
values, and memories $m \in \mathbb{A} \to \mathbb{V}$; and we will consider only
$(m,\mathscr{L})$ pairs such that $\mathscr{L}$ \emph{is a labeling} for $m$, a relation defined as
\[ \forall a \in \mathbb{A}:\quad m[a] \in \llparenthesis \mathscr{L}(a) \rrparenthesis_{\mathscr{L}} \]
and meaning that each address $a$ in a memory $m$ should contain a value
$m[a]$ whose type is $\mathscr{L}(a)$.

\mysubparagraph{Types} We build a new type system upon that of C, with the following
syntax for types:
\[ \mathbb{T} \ni t ::= \mathtt{word} \mid \mathtt{int} \mid s_k \mid t* \]
where, informally, $\mathtt{word}$ represents any value, $\mathtt{int}$
represents any value used as an integer (i.e., not as a pointer), $s_k$
represents the $k$th byte in a C structure $s$, and $t*$ represents a
pointer to $t$. Note that $s*$ in C (a pointer to a structure $s$) is
 translated in our system into $s_0*$ (a pointer to the
$0$th byte of $s$).

\mysubparagraph{Labeling and memory} %
Consider the C code of Figure~\ref{fig:typing-example}(a)\footnote{To
  simplify the presentation, we assume that every word,
  including pointers, has size one byte.}. Compiling and running it
produces (Figure~\ref{fig:typing-example}(c)) both a \emph{memory}
$m \in \mathbb{A} \to \mathbb{V}$ (mapping from addresses to values)
and a \emph{labeling} $\mathscr{L} \in \mathbb{A} \to \mathbb{T}$
(mapping from addresses to types). The address $\mathtt{0x01}$
corresponds to \texttt{\&u.data}, which should hold the byte at offset 1 in a value of type
\texttt{foo}; thus the label of $\mathtt{0x01}$, $\mathscr{L}(\mathtt{0x01})$,
is $\mathtt{foo_1}$.

\mysubparagraph{Subtyping} %
Following the C types definition, we know that addresses at offset 1
in \texttt{foo} structures are used to store pointers to a
\texttt{bar} structure (i.e. addresses whose label is
$\mathtt{bar_0}$). Thus, we want to say that $\mathtt{0x01}$ is also
labeled by $\mathtt{(bar_0)*}$.

We express this property as a \emph{subtyping} relationship:
$\mathtt{foo_1}$ is a \emph{subtype} of $\mathtt{(bar_0)*}$ (written
$\mathtt{foo_1} \sqsubseteq \mathtt{(bar_0)*}$).  The inclusion $t \sqsubseteq u$
means that addresses labeled by $t$ are also labeled by $u$.
% the set of cells
% holding byte 1 in \texttt{foo} structures is a subset of the cells
% holding pointers to \texttt{bar} (i.e., pointers to \texttt{bar} can
% exist outside of \texttt{foo} structures).
This subtyping relationship is very important for programs written in
machine languages, and deals in particular with the following
issue. While accessing the field of a structure (e.g., \texttt{p =
  \&(p->next)}) is an explicit operation in C, it is implicit in
machine code (the previous statement is a no-op when \texttt{next} has
offset 0), thus requiring pointers to a structure to simultaneously
point to its first field.

Figure~\ref{fig:typing-example}(b) provides the subtyping relations
derived from Figure~\ref{fig:typing-example}(a). Formally, %the subtyping
% relationship is defined as follows:
% \begin{itemize}
% \item $t \sqsubseteq t' \quad\Longrightarrow\quad t* \sqsubseteq t'*$
if $s$ contains a $t$ at offset $o$, then for all $k$ such that
$0 \le k < \operatorname{sizeof}(t)$: $s_{o+k} \sqsubseteq t_k$ (where
$t_0 = t$ in the case of scalar types).
% \end{itemize}

\mysubparagraph{Types as set of values} %
In addition, labels also constrain the values in memory: e.g., a
cell $a$ labeled with $\mathtt{(bar_0)*}$ should only contain
addresses of cells $b$ whose label is $\mathtt{(bar_0)}$ (in the example, both
cells $\mathtt{0x01}$ and $\mathtt{0x05}$ must contain the address $\mathtt{0x02}$).
This makes use of the second interpretation of types, as sets of
values; this interpretation depends on a labeling
$\mathscr{L}$. Formally, we define the set of values
$\llparenthesis t \rrparenthesis_{\mathscr{L}} \in
\mathcal{P}(\mathbb{V})$ of a type $t \in \mathbb{T}$ as
follows: \vspace{-2mm}
\begin{align*}
  \llparenthesis \mathtt{int} \rrparenthesis_{\mathscr{L}} =   \llparenthesis \mathtt{word} \rrparenthesis_{\mathscr{L}}  &= \mathbb{V} \\
  \llparenthesis t* \rrparenthesis_{\mathscr{L}} &= \{ a \in \mathbb{A} : \mathscr{L}(a) \sqsubseteq t \} \\
  \llparenthesis s_k \rrparenthesis_{\mathscr{L}} &= \bigcap_{t: s_k \sqsubseteq t} \llparenthesis t \rrparenthesis_{\mathscr{L}}
\end{align*}

Here is an example explaining how the values that a memory cell can hold are constrained. The address $\mathtt{0x05}$
is labeled by $\mathtt{bar_3}$. The set values that an address of type
$\mathtt{bar_3}$ can hold is $\llparenthesis \mathtt{bar_3} \rrparenthesis_{\mathscr{L}} = \llparenthesis \mathtt{foo_1} \rrparenthesis_{\mathscr{L}} = \llparenthesis \mathtt{bar_0}* \rrparenthesis_{\mathscr{L}} = \{ \mathtt{0x02} \}$, which matches the content of $m$ at address $\mathtt{0x05}$.

\mysubparagraph{Labeling for adjacent bytes} %
We require labelings $\mathscr{L}$ to be consistent with pointer
arithmetic, i.e. contiguous bytes belonging to the same structure
should have matching labels. For instance, from the facts that address
$\mathtt{0x03}$ has label $\mathtt{bar_1}$ and that the size of
\texttt{bar} is 4 bytes, we know that address $\mathtt{0x03+2=0x05}$ has
label $\mathtt{bar_{1+2}} = \mathtt{bar_3}$. Formally, this is
expressed with the following constraint:
$\forall k \in \mathbb{N} : \forall a \in \mathbb{A}:$
\[ \mathscr{L}(a) = s_o\ \land\ 0 \le o +k < \operatorname{size}(t)\ \ \Longrightarrow\ \ \mathscr{L}(a+k) = s_{o+k} \]

\mysubparagraph{Types and aliasing} %
An important property of our system is that two addresses whose labels
are not in a subtyping relationship do not alias. For instance, two
addresses whose labels are $\mathtt{bar_0}$ and $\mathtt{bar_1}$ may
not alias, but two addresses with labels $\mathtt{bar_2}$ and
$\mathtt{foo_0}$ may alias.
% for any $\mathscr{L}$,
% $\llparenthesis \mathtt{bar_0*} \rrparenthesis_{\mathscr{L}} \cap \llparenthesis
% \mathtt{bar_1*} \rrparenthesis_{\mathscr{L}} = \emptyset$, but
% $\llparenthesis \mathtt{bar_0*} \rrparenthesis_{\mathscr{L}} \cap \llparenthesis
% \mathtt{int*} \rrparenthesis_{\mathscr{L}} \ne \emptyset$.
This property is used by our analysis to ensure that some parts of
memory (like page tables) are not modified by a memory store.

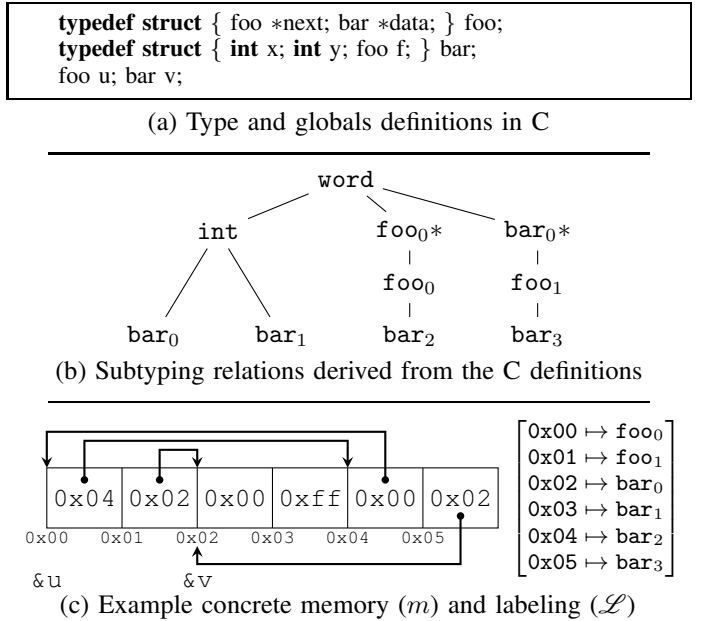
\begin{figure}[t]
  \centering\begin{lstlisting}
    typedef struct { foo *next; bar *data; } foo;
    typedef struct { int x; int y; foo f; } bar;
    foo u; bar v;
  \end{lstlisting}
  \vspace{-1mm}
  (a) Type and globals definitions in C

  \vspace{-1mm}
  \makebox[8cm]{\hrulefill}

  \vspace{1mm}

    \centering\begin{tikzpicture}[xscale=1.7,yscale=0.7]
%      \node (word) at (0.5,1) {$\mathtt{all}$};
    \node (word) at (0.5,1) {$\mathtt{word}$};      
    \node (foostar0) at (1,0) {$\mathtt{foo}_0{*}$};
    \node (barstar0) at (2,0) {$\mathtt{bar}_0{*}$};        

    \node (intstar) at (-0.5,0) {$\mathtt{int}$};
    
    \node (foo0) at (1,-1) {$\mathtt{foo}_0$};
    \node (foo1) at (2,-1) {$\mathtt{foo}_1$};        

    \node (bar0) at (-1,-2) {$\mathtt{bar}_0$};
    \node (bar1) at (0,-2) {$\mathtt{bar}_1$};        
    \node (bar2) at (1,-2) {$\mathtt{bar}_2$};
    \node (bar3) at (2,-2) {$\mathtt{bar}_3$};

    \draw (intstar) edge (word);
    \draw (intstar) edge (bar0);
    \draw (intstar) edge (bar1);
    \draw (word) edge (foostar0);
    \draw (word) edge (barstar0);        
    \draw (foostar0) edge (foo0);
    \draw (barstar0) edge (foo1);

    \draw (foo0) edge (bar2);
    \draw (foo1) edge (bar3);    
  \end{tikzpicture}
  
  \vspace{-1mm}
  (b) Subtyping relations derived from the C definitions
%  \vspace{-8mm}

%  \subcaption{Subtyping relations}

  \vspace{-1mm}
  \makebox[8cm]{\hrulefill}

  \vspace{1mm}

  \centering\begin{tikzpicture}[yscale=0.8]

    \draw[] (0,0) rectangle (6,1);
    \draw[] (1,0) -- (1,1);
    \draw[] (2,0) -- (2,1);
    \draw[] (3,0) -- (3,1);
    \draw[] (4,0) -- (4,1);
    \draw[] (5,0) -- (5,1);    

    \node[below=-2pt,align=center] at (0,0) {\tt \scriptsize 0x00\\[1ex] \tt \&u};
    \node[below=-2pt] at (1,0) {\tt \scriptsize 0x01};
    \node[below=-2pt,align=center] at (2,0) {\tt \scriptsize 0x02\\[1ex] \tt \&v};
    \node[below=-2pt] at (3,0) {\tt \scriptsize 0x03};
    \node[below=-2pt] at (4,0) {\tt \scriptsize 0x04};
    \node[below=-2pt] at (5,0) {\tt \scriptsize 0x05}; 

    \node at (0.5,0.5) {\tt 0x04};
    \node at (1.5,0.5) {\tt 0x02};
    \node at (2.5,0.5) {\tt 0x00};
    \node at (3.5,0.5) {\tt 0xff};
    \node at (4.5,0.5) {\tt 0x00};
    \node at (5.5,0.5) {\tt 0x02};

    \draw[-stealth,thick] (0.5,0.8) node[circle,inner sep=1pt,fill=black] {}  |- (1,1.45) -| (4,1);
    \draw[-stealth,thick] (1.5,0.8) node[circle,inner sep=1pt,fill=black] {}  |- (1.5,1.3) -| (2,1);
    \draw[-stealth,thick] (4.5,0.8) node[circle,inner sep=1pt,fill=black] {}  |- (1.5,1.6) -| (0,1);
    \draw[-stealth,thick] (5.5,0.2) node[circle,inner sep=1pt,fill=black] {}  |- (2.5,-0.6) -| (2,-0.3);

    \node at (7.3,0.5) {\small
      $\left[\!\!\!\begin{array}{l}
         \mathtt{0x00}\mapsto\mathtt{foo}_0\\
         \mathtt{0x01}\mapsto\mathtt{foo}_1\\
         \mathtt{0x02}\mapsto\mathtt{bar}_0\\
         \mathtt{0x03}\mapsto\mathtt{bar}_1\\
         \mathtt{0x04}\mapsto\mathtt{bar}_2\\
         \mathtt{0x05}\mapsto\mathtt{bar}_3         
       \end{array}\!\!\!\right]$
      };

      \node at (4,-1.3) {(c) Example concrete memory ($m$) and labeling ($\mathscr{L}$)};
      
    % \begin{scope}[shift={(5,0}]
    %   \[
    %     \texttt{0x00}: \mathtt{foo}, 0
    %     % \begin{array}{rl}

    %     %   \texttt{0x00} &: \mathtt{foo}, 0\\          
    %     % \end{array}
    %   \]
    % \end{scope}
    \end{tikzpicture}
    
    \vspace{-2mm}    
    \caption{Typing a small memory}

    \label{fig:typing-example}

\end{figure}

\subsection{The type-based weak shape abstract domain $\mathbb{T}^\sharp$}

We build our abstract domain $\mathbb{T}^\sharp$ upon preservation of
the labeling of memories: modifying a memory $m$ for which
$\mathscr{L}$ is a labeling, results in a new memory $m'$, for which
$\mathscr{L}$ is still a labeling. This is a useful property: it means
that cells holding pointers to \texttt{foo} structures will always
contain those, and cannot point to other data structures (such as page
tables). For instance, if there are no memory stores to addresses
whose label is a supertype of $t$, we have proved that addresses whose
label is $t$ are read-only.

The reason why the type domain $\mathbb{T}^\sharp$ is an efficient
shape abstraction is that it
% does not track the contents of memory at
% all. Instead, it tracks only the types of values \emph{outside} of the
% memory, which may contain addresses to the memory. Applied to our
% kernel, we
does not track the contents of addresses in the user image
(the parameterized part, $\mathbb{A}_P$) at all. It only tracks the types of the values in
registers $\mathcal{R}$ and memory cells of the rest of the kernel
(the fixed part, $\mathbb{A}_F$), ensuring that they point to appropriate addresses in
$\mathbb{A}_P$.

% We only know that $\mathbb{A}_P$ is well-typed,
% and that pointers in registers $\mathcal{R}$ and memory cells of the
% rest of the kernel $\mathbb{A}_F$ points to appropriate addresses in
% $\mathbb{A}_P$.

% When applied to the partitioning of addresses in a kernel
% (Figure~\ref{fig:partitioning-memory-regions}), the type domain
% $\mathbb{T}^\sharp = (\mathbb{A}_{F} \uplus \mathcal{R}) \to
% \mathbb{T}$ tracks both the types of the contents of the registers
% $\mathcal{R}$ and of the addresse in the fixed-address part of memory
% $\mathbb{A}_F$. We don't know anything about the contents of the
% parameters part of the memory $\mathbb{A}_P$; we only know that a
% labeling for the memory $\mathscr{L}$ exists, and pointer in
% $\mathbb{A}_{F} \uplus \mathcal{R}$ to $\mathbb{A}_P$ holds
% appropriate addresses in $\mathbb{A}_P$.

Thus, our type domain
$\mathbb{T}^\sharp = (\mathbb{A}_{F} \uplus \mathcal{R}) \to
\mathbb{T}$ consists in a mapping, tracking the type of the content of
each register or fixed-address memory cell. The meaning this abstract domain is given using its \emph{concretization function}~\cite{cousot1977abstract} $\gamma_{\mathbb{T}^\sharp}$, which is a mapping from abstract values to the set of states they represent:
\begin{multline*}
  % \gamma_{\mathbb{T}^\sharp}:{\mathbb{T}^\sharp} \to \mathcal{P}((\mathbb{A} \to \mathbb{V}) \times (\mathcal{R} \to \mathbb{V})) \\
  \gamma_{\mathbb{T}^\sharp}:{\mathbb{T}^\sharp} \to \mathcal{P}(\mathbb{S}) \\  
  \arraycolsep=2pt  
  \gamma_{\mathbb{T}^\sharp}(t^\sharp) = \{ (\textsl{mem},\textsl{regs}) \mid
  \begin{array}[t]{rl}
    \exists \mathscr{L}: & \forall r \in \mathcal{R}\ \,:  \textsl{regs}[r] \,\;\in \llparenthesis t^\sharp[r] \rrparenthesis_\mathscr{L} \\    
    \land & \forall a \in \mathbb{A}_F: \textsl{mem}[a] \in \llparenthesis t^\sharp[a] \rrparenthesis_{\mathscr{L}} \\
    \land & \mathscr{L} \textrm{\ is a labeling for\ } \textsl{mem} \ \}  
%    \land & \forall a \in \mathbb{A} : \\ & \quad \mathscr{L}[a] \textrm{\ is a \bf struct} \Rightarrow a \in \mathbb{A}_P \}
  \end{array}
\end{multline*}

In practice, the main operations of the domain consist in handling
pointer arithmetic (by communicating with the numeric domain) to
precisely track offsets inside structures; using type information
to retrieve the type of pointers for load operations (e.g., loading
from an address of type $\mathtt{foo_1*}$ returns an address of type
$\mathtt{bar_0*}$); and verifying the preservation of the type upon
stores (e.g., storing a value of type $\mathtt{int}$ into an address
of type $\mathtt{foo_1*}$ makes the analysis fail with an error).

\subsection{Additional features}
\label{sec:additional-features}

Several extensions are required in the actual analysis to successfully
run our case study:
\begin{itemize}
\item We need to handle numerical properties over scalar types; for instance values contained in \texttt{flags} field in
  Figure~\ref{fig:a-tiny-kernel} should always have the ``UNPRIVILEGED''
  bit set, or  \texttt{data\_segment} does not intersect with
  the range of kernel addresses (Figure~\ref{fig:appendix:config-toy} p.\pageref{fig:appendix:config-toy}); 

\item Another important extension is to handle null pointers. 
%: in  general, the C type $s*$ may be a pointer to a structure $s$ or have  the \texttt{null} value. 
Our type system distinguishes
  {\it never-null} and {\it maybe-null} pointer types (the former being
  a subtype of the latter)  and issues alarms on null pointer dereferences; 

\item Lastly, the domain is extended to support arrays 
  in order  to implement bound checking  and to support type annotations 
   expressing  the fact that the length of an array is
  contained somewhere else in memory.
\end{itemize}

\noindent Appendix~\ref{app:sec:annotation-language} (p.\pageref{app:sec:annotation-language}) presents our complete annotation language.  
% language we use to define the types used by the domain, that uses all the above features.

% of the  definitions with all these features \xxx{Complete type definition, with al features, is found in appendix.}

% \subsection{The initial state checker}

% \maybe{Peut-etre que ca va dans la case-study plutot?}

% The correct behavior of the kernel depends on the fact that the
% parameter region is well-formed, e.g., that pointers to \texttt{task}
% structure point to these structures and not to page table entries. The
% typed abstract domain presented in
% Section~\ref{sec:type-abstract-domain} makes sure that a well-formed
% parameter region remains well-formed, but we still have to ensure that
% this region is initially well-formed. There are several ways to
% implement this checking: during kernel initialization, by the
% bootloader, or offline; the latest being the most efficient
% solution. To that end we implemented an simple \emph{initial state
%   checker}, which just consists in checking that a labeling
% $\mathscr{L}$ exists for the initial memory provided by the
% application (see Section~\ref{sec:typed-memories}).

\section{Case study \& Experimental Evaluation}
\label{sec:case-study}

We seek to answer the following Research Questions:
\begin{itemize}[leftmargin=2.5em]
\item[\textbf{RQ1:}]\textbf{Effectiveness} Can \binseccodex{} automatically
  verify absence of privilege escalation on a real microkernel?
\item[\textbf{RQ2:}] \textbf{Internal evaluation} What are the respective impacts of the
  different elements of our analysis method?
\item[\textbf{RQ3:}] \textbf{Scalability and analyzer performance} How  does
  \binseccodex{} scale? % and how does it compare to other OS verification methods?
\item[\textbf{RQ4:}] \textbf{Genericity} Does \binseccodex{} work on different kernels, hardware architectures and toolchains? % and how does it compare to other OS verification methods?
  
\end{itemize}

\subsection{Experimental setup}

\mysubparagraph{Implementation} Our static analysis technique 
%and weak shape abstract domains 
has been implemented in 
\binseccodex{}, a plugin of the   \binsec~\cite{david_binsec/se:_2016} framework for binary-level semantic analysis. We reuse the intermediate-representation lifting and executable file parsing of the platform, but reimplement a whole static analysis~on~top of it, with adequate domains, including weak shapes. Development is done in OCaml,  
the plugin counts around  41~kloc.

\mysubparagraph{Case study} We consider an industrial case study: the microkernel of AnonymOS, a part
of an industrial solution for implementing security- and
safety-critical real-time applications, used in industrial automation,
automotive, aerospace and defense industries. The kernel is being
developed by AnonymFirm\Ldash{}an
SME whose engineers are not formal method
experts\Rdash{}using standard compiler toolchains. The system is
parameterized: the kernel and the user tasks are compiled separately
and both are loaded at runtime by the bootloader, as explained in Section~\ref{sec:toy}. 
%
%This situation
%provides a strong incentive for non-invasive, automated and
%cost-effective verification of security properties of the AnonymOS
%kernel executable.

We have analyzed a port of the kernel to a \emph{4-cores} ARM
Cortex-A9 processor with \emph{ARMv7} instruction set. It relies on the ARM MMU for memory
protection (\emph{pagination}). The executable file contains  1,746
instructions for the boot code  and 796 instructions for the runtime, totalizing 2,542
instructions\Dash{}in line with the general practice of critical embedded microkernels.  We have analyzed two versions: 
\begin{itemize}[leftmargin=4mm]
\item \textbf{\textsc{beta}}, a preliminary version where  
%  that we received while the port  was being made. 
we found a \mbox{vulnerability}; % in this code; 

\item \textbf{\textsc{v1}}, a more polished version where AnonymFirm fixed the vulnerability
  and removed some debug code.  
\end{itemize}
Also, AnonymFirm provided us with a sample user image.

\mysubparagraph{Experimental conditions} %\todo{Other?}
We performed our formal verification
completely independently from AnonymFirm activities. In particular, we
never saw the source code of the kernel, and our interactions with
AnonymFirm engineers were limited to a general presentation of AnonymOS
features and caracteristics of the processor. We ran our analysis on a
laptop with an Intel Xeon E3-1505M 3~GHz CPU with 32 GB RAM.

\subsection{Effectiveness (RQ1)}

\mysubparagraph{Protocol} The goal of our first experiment is to
evaluate the \emph{effectiveness} of the approach, measured by: 
(1) the
fact that the {\it method indeed  succeeds} in computing a non-trivial invariant for the
whole system, i.e., computes an  {invariant under  precondition} for the kernel runtime and checks
that the user tasks establish the precondition;  
(2) the \emph{precision}
of the analysis, measured by the number of \emph{alarms}
(i.e. assumptions that the analyzer cannot prove); 
(3) the \emph{effort} 
necessary to setup the analysis, measured by the number of lines of
manual annotations; and 
(4) the \emph{performance} of the analysis,
measured in CPU time and memory utilization. 

We consider both kernel versions and two configurations (i.e., sets of shape annotations): 
\begin{itemize}
\item  \textbf{\textit{Generic}} contains types and parameter invariants which must hold for all
legitimate user image;  % that do not hook unknown code into the kernel; 

\item \textbf{\textit{Specific}} further assumes that the stacks of all user
  tasks in the image have the same size. 
  %This assumption may not hold when user
  %tasks are optimized for space using an analysis of stack usage, but
  This is the default for \asterios{} applications, and it holds on our case
  study. 
   
\end{itemize}
% \noindent\\
% \noindent 

\begin{table}[tbp]
  \centering
  \caption{Main verification results}
  \label{tab:main-verif-results}
  \newcolumntype{"}{@{\hskip\tabcolsep\vrule width 0.8pt\hskip\tabcolsep}}  
  \begin{tabular}{|p{12mm}|p{8mm}|c|c"c|c|}
    \cline{3-6}
    \multicolumn{2}{c|}{} & \multicolumn{2}{c"}{\emph{Generic} annotations} & \multicolumn{2}{p{23mm}|}{\hspace{-1mm}\makebox[23mm]{\emph{Specific} annotations\phantom{\rule{0pt}{1em}}}} \\
    \hline
     \multirow{2}{*}{\hspace{0mm}\makecell[r]{\# shape\\annotations}} &\makebox[8mm]{generated}& \multicolumn{4}{c|}{1057\phantom{\rule{0pt}{1em}}} \\
    \cline{2-6}
    & \phantom{\rule{0pt}{1em}}{manual} & \multicolumn{2}{c"}{57 (5.11\%)} & \multicolumn{2}{c|}{58 (5.20\%)}\\    
    \hline\hline
    \multicolumn{2}{|c|}{\makecell[c]{Kernel version}\phantom{\rule{0pt}{1em}}} & \makecell{\textsc{beta}} & \makecell{\textsc{v1}} & \makecell{\textsc{beta}} & {\makebox[1mm]{\textsc{v1}}}\\
    \hline
    \multirow{2}{*}{\hspace{-1.3mm}\makecell[r]{invariant\\computation}} & \makecell[r]{status} & \phantom{\rule{0pt}{1em}} \tick & \tick & \tick & \tick \\
    \cline{2-6}
    & \makebox[8mm]{time (s)} & \phantom{\rule{0pt}{1em}} 647 & 417 & 599 & \makebox[1mm]{406} \\
    \hline
    \multicolumn{2}{|c|}{\makecell[c]{\# alarms in runtime}} & \makecell{\scriptsize 1 \textbf{true error}\\\scriptsize 2 false alarms} & \makecell{\scriptsize 1 false\\\scriptsize alarm} & \makecell{\scriptsize 1 \textbf{true error}\\\scriptsize 1 false alarm} & \textbf{\scriptsize 0} \tick  \\
    \hline
    \multirow{2}{*}{\hspace{1.5mm}\makecell[r]{user tasks\\checking}} & {\makecell[r]{status}} & \phantom{\rule{0pt}{1em}} \tick & \tick & \tick & \tick \\ \cline{2-6}

    & \makebox[8mm]{time (s)} & \phantom{\rule{0pt}{1em}} 32 & 29 & 31 & 30 \\[1pt]
    \hline
    \hline

%    \multirow{2}{*}{\hspace{1.5mm}\makecell[r]{APE}} & {\makecell[r]{status}} & \phantom{\rule{0pt}{1em}} \wrongtick & \tick & \wrongtick & \tick \\ \cline{2-6}
   
    \multicolumn{2}{|c|}{Proves APE?} & \phantom{\rule{0pt}{1em}} N/A  & \textcolor{orange!50!green!50!black!50}{\large $\mathbf{\sim}$}  & N/A  & \tick  \\[1pt]
    \hline
  \end{tabular}
\end{table}

%% Manually moved because double paged are always deferred with Latex.
\begin{table*}[tbp]
  \centering
    \caption{Impact of the methodology.}
  \label{tab:internal-evaluation}
%  \newcolumntype{"}{@{\hskip\tabcolsep\vrule width 0.8pt\hskip\tabcolsep}}
  \newcolumntype{"}{@{\hskip\tabcolsep\vrule width 0.8pt}}
  \begin{tabular}{|l"r|r"r|r"r|r"r|r"r|r"r|r|}
    % \Xhline{2\arrayrulewidth}
    \hline
    Annotations & \multicolumn{2}{c"}{\makecell{No annotation}} & \multicolumn{2}{c"}{\makecell{Generated}} & \multicolumn{2}{c"}{\makecell{Minimal}} & \multicolumn{2}{c"}{\makecell{Generic}} & \multicolumn{2}{c"}{\makecell{Specific}} & \multicolumn{2}{c|}{\makecell{Dedicated}}\\
    % & \multicolumn{1}{c}{(init)} & \multicolumn{1}{c"}{(runtime)}     & \multicolumn{1}{c}{(init)} & \multicolumn{1}{c"}{(runtime)}    & \multicolumn{1}{c}{(init)} & \multicolumn{1}{c"}{(runtime)}    & \multicolumn{1}{c}{(init)} & \multicolumn{1}{c"}{(runtime)}    & \multicolumn{1}{c}{(init)} & \multicolumn{1}{c|}{(runtime)} \\
%    \Xhline{2\arrayrulewidth}    
%    \hline
    \hline
    \hline
    Generated annotations (total)     & \multicolumn{2}{r"}{0} & \multicolumn{2}{r"}{1057} & \multicolumn{2}{r"}{1057} & \multicolumn{2}{r"}{1057} & \multicolumn{2}{r"}{1057} & \multicolumn{2}{r|}{1057}\\    
    Manual annotations (total)     & \multicolumn{2}{r"}{0} &
    \multicolumn{2}{r"}{0} & \multicolumn{2}{r"}{10} & \multicolumn{2}{r"}{57} &
    \multicolumn{2}{r"}{58} & \multicolumn{2}{r|}{62}\\
    \hline
    \multicolumn{13}{c}{} \\[-6pt]     \cline{2-13}
%    \\[-3pt]

     \multicolumn{1}{c"}{}                   &\multicolumn{1}{c|}{boot}& \multicolumn{1}{c"}{runtime}     & \multicolumn{1}{c|}{boot} & \multicolumn{1}{c"}{runtime}    & \multicolumn{1}{c|}{boot} & \multicolumn{1}{c"}{runtime}    & \multicolumn{1}{c|}{boot} & \multicolumn{1}{c"}{runtime}    & \multicolumn{1}{c|}{boot} & \multicolumn{1}{c"}{runtime} & \multicolumn{1}{c|}{boot} & \multicolumn{1}{c|}{runtime} \\
                        % \Xhline{2\arrayrulewidth}
    \hline
    Analysis time (s)                           & \wrongtick %5.5 
& \multicolumn{1}{c"}{N/A}
                                                & \wrongtick %9.2 
& \multicolumn{1}{c"}{N/A}
    & 342 & 394 & 195 & 222 & 187 & 219 & 151 & 203\\% \Xhline{2\arrayrulewidth}
    \hline
    Total number of alarms     & \multicolumn{1}{c|}{N/A} &
    \multicolumn{1}{c"}{N/A} & \multicolumn{1}{c|}{N/A} & \multicolumn{1}{c"}{N/A} & 85 & 13 & 60 & 1 & 59 & 0 & 43 & 0\\
    \hline
     User tasks checking     & \multicolumn{2}{c|}{N/A} & \multicolumn{2}{c"}{N/A} & \multicolumn{2}{c|}{\tick} & \multicolumn{2}{c"}{\tick} & \multicolumn{2}{c|}{\tick} & \multicolumn{2}{c"}{\tick} \\ 
    \hline
%    {pointer arithmetic out of bound}          & \multicolumn{1}{c|}{\multirow{5}{*}{N/A}} & \multicolumn{1}{c"}{\multirow{5}{*}{N/A}} & \multicolumn{1}{c|}{\multirow{5}{*}{N/A}} & \multicolumn{1}{c"}{\multirow{5}{*}{N/A}} & 29 &   6  &  16 &   1 &   6 &   0 \\
%    {out of bound array access}                 &                        &          &            &             & 16 &   3  &   9 &   0 &   2 & 0   \\
%    {store does not preserve shape constraints} &                        &          &            &             &  6 &   0  &   4 &   0 &   4 & 0   \\
%    {invalid memory protection modification}    &                        &          &            &             &  0 &   0  &   0 &   0 &   0 & 0   \\
%    {other alarms}                              &                        &          &            &             & 34 &   4  &  31 &   0 &  31 & 0   \\
    % \Xhline{2\arrayrulewidth}
%    \hline
  \end{tabular}

\end{table*}

\mysubparagraph{Results} The main results are given in
Table~\ref{tab:main-verif-results}. The \textbf{\textit{generic}} annotations consist in only 
57 lines of manual annotations, in addition to 1057 lines that were automatically
generated (i.e.\@ 5~\% of manual annotations). When analyzing the \textbf{\textsc{beta} version} with these annotations,
only {\it 3 alarms} are raised in the runtime:

\begin{itemize}
\item One is a {\bf true vulnerability}: in the supervisor call entry
  routine (written in manual assembly), the kernel extracts the system
  call number from the opcode that triggered the call, sanitizes it
  (ignoring numbers larger than 7), and uses it as
  an index in a table to jump to the target system call function; but
  this table has only 6 elements, and is followed by a string in
  memory. This off-by-one error allows a \texttt{svc 6} system call to
  jump to an unplanned (constant) location, which can be
  attacker-controlled in some user images. The error is detected as
  the target of the jump goes to a memory address whose content is not
  known precisely, and that we thus cannot~decode; 

%  \todo[inline]{An other programming mistake there was that
%    the verification was performed using a signed comparison, but our tool found out that this had no security impact
%    because the sign is cleared earlier using a bitmask.
%  }

% % 120000b0:	e24e0004 	sub	r0, lr, #4
% % 120000b4:	e5900000 	ldr	r0, [r0]
% % 120000b8:	e3c004ff 	bic	r0, r0, #-16777216	; 0xff000000
% % 120000bc:	e3500007 	cmp	r0, #7
% % 120000c0:	aa000020 	bge	12000148 <svc_error>

\item One is a false alarm caused by {\it debugging code} temporarily
  violating the shape constraints: the code writes the constant
  \verb|0xdeadbeef| in a memory location that should hold a pointer to a user stack
  (yielding an alarm as we cannot prove that this
  constant is a valid address for this type), and that
  memory location is always overwritten with a correct value further
  in the execution;

\item  The last one is a false alarm caused by an imprecision in our 
  analyzer when user stacks can have different sizes.
\end{itemize}

When analyzing the  \textbf{\textsc{v1} version}, the first two alarms disappear (no new
alarm is added). Analyzing the kernel with the \textbf{\textit{specific}} annotations   %(1 more annotation related to user stack sizes) 
makes the last alarm disappear. In all cases user tasks checking succeeds. 
% (and still succeeds in checking our sample user tasks). 

\smallskip 

\textit{Analyzing the \textbf{\textsc{v1}} kernel with the \textbf{\textit{specific}}
annotations allows to reach 0 alarms, meaning that we have \emph{a
  fully-verified invariant} and a proof of absence of privilege escalation.} 

\smallskip 

%The time needed to compute the invariant, and the time needed to
%interpret the boot code with the parameters and to check that the
%invariant holds, is almost unaffected by the small variations of
%kernel and shape configurations, 

Computation time is  {\it always very  low}: less than 11 minutes for 
the static analysis and 35 seconds for user tasks checking. %invariant checking.  

\mysubparagraph{Conclusions} Experiments show that \emph{verifying
  absence of privilege escalation of an industrial microkernel using
  only fully-automated methods is feasible} (albeit with a very slight
amount of manual annotations). Especially:

\begin{itemize}
\item The analysis is {\it effective}, in that it identifies real errors in the code, and verifies
  their absence once  removed; 

\item The analysis is {\it very precise}, we manage  to reach  0 false alarm 
on the correct code, and we had no more than 2 false alarms on each configuration of the analysis; 

\item The \emph{annotations burden is very small} (58 simple
  lines), as most of the annotations are extracted fully
  automatically; 

\item Finally, the \emph{analysis time}, for a kernel whose size is
  typical for microkernels, \emph{is very small} (between 406 and 647 seconds). 
Analysis time would be even smaller (86 seconds) if
the system had used a single core.     
\end{itemize}

\noindent An {\bf additional finding} is that verifying low-level code such as the
\emph{assembly parts of the kernel is essential}: this code is small
but prone to errors, as witnessed by the one we found.

%  Our analysis also computes other useful properties (absence of runtime error, respect of call/return conventions) 
%  and relevant information that can
%  be verified against the expectations of kernel developers  
%  % the fact that the kernel is free of runtime errors  (i.e., cannot crash), 
%  %no runtime error, respect of call/return conventions, 
%    (control-flow graph,  read and written addresses, memory shared between cores, etc.) %the memory areas shared between the processors, 
%  

% the AnonymFirm developers were also interested by the
% invariants computed by our analysis: the fact that the kernel is free
% of runtime errors (i.e. cannot crash), but also to confirm their
% expectations about the control-flow graph, the set of addresses read
% and written by the kernel, the memory areas that are shared between
% the processors, etc.

\subsection{Evaluation of the methodology (RQ2)}

\mysubparagraph{Protocol} The goal of this experiment is to evaluate
our 3-step methodology (Section~\ref{sec:general-method}),  in
particular 
(1)  whether our shape domain is needed, 
(2) what is the nature and impact of the shape annotations, 
and (3) whether differentiated handling of boot code is mandatory. 

We experiment on the \textbf{\textsc{v1}}
kernel version, and report results for both the boot code and the runtime, using different sets of annotations with an
increasing amount of annotations:
\begin{itemize}
\item \emph{No annotation} (equivalent to having no shape domain); 
\item \emph{Generated} annotations (without any manual annotations);
\item \emph{Minimal} annotations with which the analysis terminate;
\item \emph{Generic} and \emph{Specific} are the annotations defined above;
\item \emph{Dedicated} hardcodes some parameters, such as the number of tasks, with values of the sample user tasks. 
\end{itemize}

\mysubparagraph{Results} Table~\ref{tab:internal-evaluation} shows the
result of this evaluation. The analysis does not succeed in finding an 
 invariant without the shape domain or without manual annotations\Dash{}the analysis is 
too imprecise and aborts in  boot, denoted \wrongtick. 
% it ends up analyzing a write to an unknown
%address, after which it cannot succeed in finding an invariant, denoted \wrongtick. 
%
Only 10 lines of manual annotations are necessary for the analysis to
complete (\textbf{\textit{minimal}}), albeit with many alarms in both boot code and runtime. These annotations mainly %either 
limit the range of array indices to prevent overflows in pointer
arithmetic.
% or enforce that an experimental feature (allowing user
%tasks to hook trusted code in the kernel, which we do not yet support) is not used.
%
The \textbf{\textit{generic}} configuration adds 47 lines 
indicating which pointer types or structure fields may be null, 
which fields  hold array indices, and relating 
array lengths with memory locations holding these lengths. This
configuration eliminates most alarms in the runtime, but 60 alarms
remain in the boot code. The \textbf{\textit{specific}} annotations reach 0 alarms in runtime, but still 59 alarms in boot code. 
%
% The next 38 additional lines of annotation alows to remove almost
% every alarm in the runtime (and many alarms in the initialization),
% without further restricting the parameters supposed to be executable
% by the kernel. All of these annotations were simple, mainly consisting
% in telling which pointer type, or field in a structure, may hold null
% pointers; which is an array index (and must be consistent with the
% array bounds); and to relate the size of arrays with the memory
% locations that hold these sizes. A kernel expert confirmed that this
% configuration is supposed to hold for every legitimate parameters.
%
Even the \textbf{\textit{dedicated}} annotations
%is an attempt to remove the remaining
%alarms  by specializing the analysis to our sample
%user tasks. This did reduce the number of alarms, but we could not
%eliminate them completely. 
cannot   eliminate all alarms in boot code. 

Interestingly, we also found that some of the
invariants in the generated annotations do not hold before boot.

%had we eliminated these alarms, checking the user task on the initial state
%was still impossible, since some of the
%invariants in the shape configuration do not hold before boot. 

% by writing a configuration that would be 

% reduce the number of alarms in the
% initialization, by specializing the analysis to correspond to our
% example application, but we failed at completely eliminating all the
% remaining alarms (especially for the code that constructs memory
% protection table at runtime). Furthermore, we found out that many of
% the shape invariants that are true during the runtime do not hold
% before initialization, i.e. the initialization is necessary to
% establish them.

\mysubparagraph{Conclusions}  \emph{Parameterized
  verification of the kernel cannot be done without the shape
  domain}. The ability to extract the shape configuration from types
is extremely useful, as \emph{95\% of the annotations are automatically
  extracted}, requiring only 57 lines of simple manual
annotations. Finally, \emph{differentiated handling of boot code is
  necessary} as both  the boot code is much harder to analyze than the runtime, and 
the shape invariants holds only after boot code terminates. 

% \begin{itemize}
% \item A shape domain is required to successfully verify a
%   parameterized kernel;
% \item 
% \end{itemize}

% These experiments allows to conclude that \emph{parameterized
%   verification of the kernel cannot be done without a shape
%   domain}. This parameterized verification requires a \emph{small
%   amount of manual annotations}, but these annotations are simple;
% they can be retrieved by hand but it also reasonable to ask kernel
% developers to write them. \emph{Checking the invariant after
%   interpreting the initialization} is mandatory on our kernel, as the
% initialization establishes some of the invariant described in our
% shape constraints. Note that this does not mean that we can skip the
% parameterized abstract interpretation to analyze directly the runtime:
% for instance the AnonymOS kernel creates its interrupt dispatch table
% during initialization, and the runtime cannot be analyzed without
% knowing the value of this table.

\subsection{Scalability and analyzer performance (RQ3)}

\mysubparagraph{Protocol} We now turn to scalability and performance.
% (in term of computing
%resources), and its ability to scale to a large number of user tasks. 
%
(1)~First, we  specialize the \emph{generic} shape
annotations to fix the number of tasks in the system to different
constant values, in order to assess its scalability w.r.t.~the number of tasks.  
(2)~Second, we examine the interpretation
of the boot code and measure the number of instructions executed
deterministically -- i.e. instructions whose inputs and outputs are singleton values. % an instruction is deterministically  executed
% if it is equivalent to performing a sequence of
% writes of a single value to a single address or register 
%(including the program counter)
% This is important regarding checking, as  such instructions can be interpreted very efficiently.
This is important regarding user tasks checking, as such instructions can be analyzed exactly using efficient concrete (non-abstract) interpretation. %\todo{ML:La precision est importante puisqu'on parle de scalabilite, i.e. complexite}
%, in time and space proportional to the real execution on the processor. 
% 
%(3) Finally, we discuss our analysis time compared to reported analysis time from the littarature. 

\mysubparagraph{Results} Results of the first experiment are given
in Table~\ref{tab:scalability-experiment}. As expected, the execution time variations between different numbers
of tasks are very low. Actually,  fixing the number of tasks 
provides only a slight speedup compared to the case where this number is
unknown.

%\vspace{-6mm}
\begin{table}[htbp]
  \centering
  \caption{Scalability experiment}
  \label{tab:scalability-experiment}
  \begin{tabular}{rrrrr}
%    \hline
    Number of tasks & 10 & $10^3$ & $10^7$ & unknown\\
    \hline
    \makecell{Static analysis time (s)} & 380 & 387 & 388 & 406 \\
    \makecell{Static analysis memory (GB)} & 5.3 & 5.6 & 5.6 & 5.8 \\
%    \hline
%    \makecell{Parameters\\ checking} & xs & xs & xs & N/A\\
%   \hline
  \end{tabular}
\end{table}

For the second experiment,  out of
111,491 executed instructions on the boot code, 111,447 (99.96\%) were deterministic. The remaining
instructions were all related to low-level device driver
initialization (reading MMIO values), and were independent from %the value of 
 the user image.

%% Finally, Table~\ref{tab:comparison-os-verification} provides
%% verification times in other OS verification efforts. Comparing these
%% numbers should be done with care, as the technique, kernel, verified
%% properties, and machine used to perform the verification are all
%% different. Nevertheless, we can see that our technique performance is in line with
%% the fastest OS verification methods -- actually, on source code or typed assembly. 

\mysubparagraph{Conclusion} Our static analysis scales well to an arbitrary number of 
user tasks.
%\todo{ML: A remettre, important pour la scalabilite}
%The \emph{parameterized analysis of the runtime is
%  essentially independent from the user tasks running on the system}. 
Moreover, almost every instruction in the boot code is executed
deterministically and can thus be executed using efficient concrete
interpretation. 
% in time proportional to the real execution of the boot code. 
Finally, our technique performance is in line with the fastest
OS verification methods (Table~\ref{tab:comparison-os-verification}).

% From these experiments we can conclude that \emph{parameterized
%   analysis of the system is a very scalable method}, since its
% execution time is virtually independent from the number of tasks in
% the system.

% Most of the boot code can be verified using simple interpretation of
% the boot code, which has a constant overhead compared to the real
% execution of the boot code. The technique is thus also scalable.

% Interpreting the initialization to check the parameters is also
% scalable: as the parts that relate to initializing the parameters is
% deterministic, interpreting this part corresponds to a simple
% interpretation of machine code, a problem that \emph{can be solved in
%   time and memory proportional to a real execution of the
%   initalization}. In other words, parameters checking would be slow
% only if the real initialization would also be slow.

%Inferring invariant is no more costly than merely checking 

% Last, we indicate the verification time of other tools. These tools
% verify invariants that are similar \cite{dam2013machine} or
% encompassed \cite{vasudevan2013design} by absence of privilege
% escalation, on a non-parameterized hypervisors (each running two
% separate tasks), and do not infer invariants, but check invariants
% that are provided. We can see that our analysis tool outperform them
% by a large factor.

\subsection{Genericity (RQ4)}% applicability to OS engineering (RQ4)}

\mysubparagraph{Protocol} We now assess the genericity of our approach. % The goal of this last sequence of
% experiments is to show that our method
% can deal with other kernels or hardware architectures, and handles
% small variations in the machine code.
%
% Final version: tell that it is us.
We apply our tool to EducRTOS,\cite{educrtos} a
small academic OS developed for teaching purpose. It is both a separation kernel (with task
isolation) and a real-time OS. % (featuring real-time scheduling).
Interestingly, EducRTOS differs significantly from AnonymOS: it runs on \emph{x86 (32 bit)}
and memory protection relies on \emph{segmentation}. % instead of pagination.
%, and the user image is linked with the OS at compile-time, rather than at load time. 
%three kernel compile-time options (Round-Robin, Earliest-Deadline-First, or Fixed-Priority scheduling),
We consider 6 variants, 
using two compilers (\texttt{clang-8.0.1} and \texttt{gcc-9.2.0}) and three
different optimization levels
(\texttt{O1}, \texttt{O2}, \texttt{O3}). The code contains between 2.7 and 6.3 kbytes of instructions.

\mysubparagraph{Results} We verify absence of privilege
escalation on all variants, in less than 3 seconds each. The \emph{same} annotations (12 manual lines, 144 automatically extracted) are used to verify all variants. There is \emph{no alarm} in the runtime, and 5 alarms in the boot code, that our method allows to ignore.

% Only xxx lines of annotations (out of xxx) are necessary to perform
% the analysis. In one of the commit, the structure of the user tasks
% checking changed in a way that required to modify the configuration.

% In addition, we had to tell the analyzer the name of the symbol used
% by the kernel to find the description of the user tasks, and give it
% the right to access the VGA memory.

% We found a bug thanks to our tool: a designated initializer was
% missing in a file producing the user image, which lead to a pointer to
% the main protection being non-initalized, and thus this protection
% table was at address 0 in memory. This bug did not prevent the OS to
% work nor to be secure (and could not be found by testing), but was
% writing to an unexpected area of memory. This illustrates the interest
% of our method to verify properties other than APE.

\mysubparagraph{Conclusion} Together with the \asterios{} case study,
these results show that \binseccodex{} can be used in a variety of OS
projects, and is robust to small variations in the OS
binary. % EducRTOS is used to teach operating system classes, and
% we plan on using the tool to automatically validate student projects.

% . Furthermore, the rapidity of the
% verification, and the fact that the user configuration is small and
% does not need to be often modified, make the tool suitable for use in
% a build automation server.

\section{Discussion}
\label{sec:disc-limit}

% \myparagraph{Further reducing the trusted base} Verifying a system
% using its binary executable allows to move its source and build chain
% out of the trusted base, to the extend that only the analyzer and the
% system model remain (note that privilege escalation does not require
% to write a specification, which would also be in the trusted base). It
% would be possible to further reduce the trusted base by using a
% verified static analyzer, albeit by paying an important performance
% penalty~\cite{jourdan2015formally}. The hardware model will always
% have to be trusted; ours is shared with a full-system simulator used
% to certify safety-critical software, and whose compatibility with the
% hardware was thoroughly examined.

% Perhaps the main limitation of our work is the need to trust our
% static analyzer and parameters checker. Even if our analysis builds upon
% well-established theoretical background, it is a large tool that can
% have bugs.  The parameters checker is simple, and it should not be too
% hard to develop a formally-verified version (e.g., in Coq
% \cite{...}). Developping a formally verified analyzer for real-world
% languages is also possible \cite{jourdan2015formally}, but this requires a very
% important effort, and currently the performances of these analyzers
% are slow. Another possibility would be to change our analyzer to emit
% proof certificates that could be checked independently by a prover
% like Coq. This would allow to reduce the trusted base of the analysis
% to that of the small Coq kernel.
 
\mysubparagraph{Threats to validity} 
Perhaps the main limitation of our work is the need to trust our
 static analyzer and user tasks checker. 
%Even if our analysis builds upon
% well-established theoretical background, it is a large tool that can
% have bugs.  
We mitigate this problem the following way. 
Our prototype is implemented as  part of
\binsec{}~\cite{david_binsec/se:_2016}  whose 
robustness have been demonstrated in prior large scale studies%
%on both adversarial code and standardly-compiled code
~\cite{bardin_backward-bounded_2017,recoules_ase_2019,david_specification_2016,FeistMBDP16}. Especially, the
IR lifting part has been exhaustively tested~\cite{lhuillier2019cesar} and positively evaluated in an independent
study~\cite{kaist-bar-workshop}. 
%
%\todo{ML: Citer papier Yves CESAR. Aussi: crosscheck interpretation et full analysis pas clair de l'interet, on dirait plutot un mode commun de defaillance.}
Moreover, the user tasks checker is simple and share many components with the analyzer, allowing to crosschecks results 
between interpretation and full analysis. Finally, results have been crosschecked for consistency between versions, 
and all alarms on runtime have been manually investigated. 
 It  would be possible to further reduce the trust base by using a
 verified static analyzer~\cite{jourdan2015formally}, albeit with an important performance penalty. 
 %The hardware model will always
 %have to be trusted; ours is shared with a full-system simulator used
 %to certify safety-critical software \cite{xxunisim}, a whose accuracy with actual 
 %hardware was thoroughly examined.  

\mysubparagraph{Representativity of the case studies} We have verified 
an industrial kernel   designed for  safety-critical and
hard real-time applications, whose size and complexity are in line with 
the general practice of these fields. While  this kernel is indeed  more restricted than a 
general-purpose OS \cite{walker1980specification,klein2009sel4,gu2016certikos} --  no 
dynamic  task creation nor dynamic re-purpose of memory,   
%This is an important limitation; but note that 
fixing  memory partitioning is
standard in embedded~\cite{ramamritham1994scheduling,muehlberg2011verifying,richards2010modeling,duverger2019gustave,ARINC653,dam2013machine} and 
highly-secure~\cite{rushby1981design} systems.
% \cite{ramamritham1994scheduling,muehlberg2011verifying} or
% safety-critical \cite{ARINC653,richards2010modeling} operating
% systems.

Regarding the verification itself, the analysis %is substantial in that it 
does not sidesteps any major difficulty: 
% that may be encountered when verifying a microkernel: 
the kernel was unmodified,  
it runs concurrently,  
 it initializes itself and creates its
memory protection tables dynamically,  
%its size is in line with systems that aim at a low trusted code base, 
 we have verified all of the kernel code -- including  boot  and the assembly parts,  
%  The fact that we
%perform a parameterized analysis of the kernel, on machine code, adds
%additional complexity.
and  our verification is parameterized.  Finally, we have shown that the analysis works on differing setups (architecture, protection).

% M-x toggle-truncate-lines
% (align-regexp (region-beginning) (region-end) "\\(\\s-*\\)&" 1 1 t)
\begin{table*}[htbp]
  \centering
  \scriptsize
  %\hspace{-1mm}
  \begin{threeparttable}
    \caption{Comparison of kernel verification efforts}
    \label{tab:comparison-os-verification}
    \begin{tabular}{|r|cc|ccccc|cccc|}
      \hline
    \multicolumn{1}{|c|}{\textsc{Verified kernel}}                                                                                 & \multicolumn{2}{c|}{\textsc{Verified property}}      & \multicolumn{5}{c|}{\textsc{Verification technique}} & \multicolumn{4}{c|}{\textsc{Case study}} \\[1.5pt] 
                                                                                     & \makecell{Verified\\property}                        & \makebox[5mm]{\makecell{Implies\\APE?}} & \makecell{Degree of\\automation} & \makecell{Verif.\\Level}                & \makebox[7mm]{\makecell{Parame-\\terized}}    & \makebox[5mm]{\makecell{Multi-\\core}}    & \makebox[7mm]{\makecell{Infers\\invariants}} & \makecell{Manual\\Annotations (LoC)}        & \makecell{Unproved\\code (LoC)}          & \makebox[4mm]{\makecell{Non-\\invasive}} & \makecell{Analysis\\time (s)} \\ \hline\hline 
    This work                                                                        & \makecell{Absence of\\priv. escalation}              & \tick                                   & \makecell{\bf Fully\\\bf automated}      & {\bf Machine}                                 & \tick                          & \tick                   & \tick                         & 58\tick                                          & 0 \tick                                       & \tick                     & \bf 406                           \\ \hline\hline
    XMHF \cite{vasudevan2013design}                                                  & \makecell{Memory\\integrity}                         & \makecell{\wrongtick\tnote{b}}          & \makecell{Semi\\ automated}      & Source                                  & \makecell{\wrongtick\tnote{h}} & \tick                   & \wrongtick                    & \makecell{N/A}                              & \makecell{422  (C)\\+ 388  (asm)}        & \wrongtick                & \bf 76                            \\ \hline
    \"uXMHF \cite{vasudevan2016uberspark}                                            & \makecell{Security\\ properties}                     & \tick                                   & \makecell{Semi\\automated}       & \makecell{Source +\\ assembly}          & \makecell{\wrongtick\tnote{h}} & \wrongtick              & \wrongtick                    & 5,544                                       & 0 \tick                                        & \wrongtick                & 3,739                         \\ \hline
    \makecell[r]{Verve\\Nucleus \cite{yang2010safe}}                                 & \makecell{Type Safety}                               & \tick                                   & \makecell{Semi\\automated}       & \makecell{Assembly}                     & \tick                          & \wrongtick              & \wrongtick                    & 4,309                                       & 0 \tick                                        & \wrongtick                & \bf 272                           \\ \hline
    Prosper \cite{dam2013machine,dam2013formal}                                                    & \makecell{Compliance with\\specification}            & \tick                                   & \makecell{Semi\\automated}       & {\bf Machine}                                 & \makecell{\wrongtick\tnote{i}} & \wrongtick              & \wrongtick                    & 6,400 \tnote{c}                             & 0 \tick                                        & \wrongtick                & \makecell{$\le$ 28,800}       \\ \hline
    \makecell[r]{Baby hyper-\\visor \cite{alkassar2010automated,paul2012completing}} & \makecell{Compliance with\\specification}            & \tick                                   & \makecell{Semi\\automated}       & \makecell{Source+\\assembly}            & \tick                          & \wrongtick              & \wrongtick                    & 8,200 tokens                                & 0 \tick                                       & \wrongtick                & 4,571                         \\ \hline
    Komodo \cite{ferraiuolo2017komodo}                                               & \makecell{Compliance with\\specification}            & \tick                                   & \makecell{Semi\\automated}       & Assembly                                & \tick                          & \wrongtick              & \wrongtick                    & 18,655                                      & 0 \tick                                       & \wrongtick                & 14,400                        \\ \hline 
    \makecell[r]{UCLA Secure\\Unix~\cite{walker1980specification}}                        & \makecell{Compliance with\\specification}            & \tick                                   & \makecell{Manual}                & Source                                  & \tick                          & \wrongtick              & \wrongtick                    &  N/A                                        & 80\%                                     & \wrongtick                & N/A                      \\ \hline
    Kit \cite{bevier1989kit}                                                         & \makecell{Compliance with\\specification}            & \tick                                   & \makecell{Manual}                & {\bf Machine}                                 & \wrongtick                     & \wrongtick              & \wrongtick                    & \makecell{1,020 definitions\\+ 3561 lemmas} & 0 \tick                                        & \wrongtick                & N/A                   \\ \hline    
    $\mu$C/OS-II \cite{xu2016practical}                                              & \makecell{Compliance with\\specification}            & \tick\tnote{a}                          & \makecell{Manual}                & Source                                  & \tick                          & \wrongtick\tnote{f}     & \wrongtick                    & 34,887 \tnote{d}                            & 37\%                                     & \tick                     & 57,600\tnote{e}               \\ \hline 
    SeL4 \cite{klein2009sel4,klein2014comprehensive}                                                        & \makecell{Compliance with\\specification}            & \tick\tnote{a}                          & \makecell{Manual}                & Source\tnote{g}                         & \tick                          & \wrongtick              & \wrongtick                    & 200,000                                     & \makecell{1,200  (C, boot)\\+ 500 (asm)} & \wrongtick                & N/A                   \\ \hline
    CertiKOS \cite{gu2016certikos}                                                   & \makecell{Compliance with\\specification}            & \tick                                   & \makecell{Manual}                & \makecell{Source\tnote{g} +\\ assembly} & \tick                          & \tick                   & \wrongtick                    & 100,000                                     & 0 \tick                                        & \wrongtick                & N/A                   \\ \hline
%    \makecell{ED separation\\kernel \cite{heitmeyer2008applying}} & \makecell{Data separation} & Model \\
%    Barthe 2011
%    \xxx{PSOS}    \\

  \end{tabular}

  % N/A: not available here.

  % Pour bevier: on voit dans le companion tech report qu'il faut recompilé (la c'est pour 16 tâches)
  
  \begin{tablenotes}[para]
  \item[a] Assuming that the proof is completed to cover all the code.
  \item[b] Control flow integrity is assumed.
  \item[c] Generated from a 21,000 lines of HOL4 manual proof.
  \item[d] Plus 181,054 LoC of specification and support libraries.
  \item[e] The reported compilation time includes the support libraries.
  \item[f] The verification is concurrent because of in-kernel preemptions.
  \item[g] The translation to assembly is also verified.
  \item[h] The hypervisor supports a single guest
  \item[i] The hypervisor supports two guests
  \end{tablenotes}

  \end{threeparttable}

\end{table*}

\mysubparagraph{Scope of verification} %
The property that we target, absence of privilege  escalation, 
is weaker 
than, e.g., task separation or full functional correctness
\cite{klein2009sel4,gu2016certikos,bevier1989kit,xu2016practical,ferraiuolo2017komodo,alkassar2010automated,dam2013machine,walker1980specification}. Indeed,
most existing OS verification efforts
(Table~\ref{tab:comparison-os-verification}), if completed, would
imply APE as a byproduct. 
On the other hand, APE is a universal property over OS kernels, is essential to security (Theorem~\ref{th:invariant-implies-noescalation}) and must be 
 proved in any complete formal verification effort. In some systems it is even the main
property of interest~\cite{vasudevan2013design}.

%% Actually, one of our contributions
%% (Theorem~\ref{th:invariant-implies-noescalation}) is defining APE and
%% showing that it is, in practice, the weakest kernel property that can
%% be proved, and that it must be proved in any complete formal
%% verification effort. This universality, and the fact that it does not
%% require a specification, makes it a good target for automated
%% verification.

%% Also note that APE implies security properties that have been targeted in
%% previous work, such as inability to enter supervisor mode
%% \cite{bevier1989kit}, memory integrity 
%% \cite{bevier1989kit,vasudevan2013design} or control flow integrity 
%% \cite{vasudevan2016uberspark}. In some systems it is even the main
%% property of interest~\cite{vasudevan2013design}.

%  in most existing work (with
% \cite{bevier1989kit,yang2010safe,dam2013machine,gu2016certikos} being
% notable exceptions) the verification

\mysubparagraph{Degree of automation} While we do use only
fully-automated methods, we still had to write a small amount of
manual annotations (58 lines for AnonymOS and 12 lines for EducRTOS) to complete our main analysis. These
annotations are not state invariants, but configure the analysis so that state invariants can be inferred. Some annotation is unavoidable in a parameterized verification as the
AnonymOS kernel does not ensure APE for arbitrary user tasks
images. Still, the annotation effort is extremely low compared to prior work. % so proving APE must make a precondition explicit.

%\todo[inline]{Annotations are different in nature: template invariants given as relations between types, but invariants about values are fully inferred.}

%% Finally, it can be argued that it is harder to develop a tool that can
%% prove a property automatically, than to prove it directly. In this work,
%% we restricted ourselves to using only
%% fully-automated methods, but obviously
%% One of our goals was to show that complex properties  can be proved using only fully-automated methods.
%% However, in practice these methods could be used in 
%% combination with manual methods. Indeed,  when verifying an OS kernel most of
%% the work is spent stating and verifying invariants (e.g., 80\% in
%% SeL4 \cite{klein2009sel4}); automated verification can reduce this
%% effort. 

\mysubparagraph{Limits} As already stated, our binary-level static analysis cannot handle dynamic task spawning, dynamic memory re-partitioning, 
code self-modification, code generation and recursion. The last limitation could be overcome with state of the art techniques (possibly at the price of precision), 
the other ones lay at the forefront of program analysis research.

% Thanks to those, the 58 lines of annotations is several order of
% magnitude less than the lines written for existing OS verification
% efforts (Table~\ref{tab:comparison-os-verification}).

% XXX: Aussi, automated meth
% XXX: Invariant complet bien plus gros

% ; these annotations are used to determine the
% ``template'' of the invariant that the shape domain presented in
% Section~\ref{sec:shape-abstract-domain} computes. Note that 58 lines
% of annotation is several order of magnitude less than the lines
% written for existing OS verification efforts. These annotations are
% made necessary because of parametrization: the AnonymOS kernel
% correctness depends on assumptions on the parameters, so its proof
% must make these assumptions explicit. We could extract most of these
% assumptions from the C types, but they are not precise enough to
% precisely describe the required properties.

\section{Related work}

\mysubparagraph{Formal verification of OS kernels} %
Table~\ref{tab:comparison-os-verification} presents a comprehensive overview 
of prior OS verification efforts. 
% according to the verification technique they
%used and the kernel or hypervisor that they have verified. 
We already discussed the scope of verification  in
Section~\ref{sec:disc-limit}. 
Note that most existing works leave unchecked hypotheses about
 the code, assuming for instance control flow
 integrity~\cite{vasudevan2013design}, semantics preservation of the
 compilation~\cite{alkassar2010automated,vasudevan2016uberspark} or 
 correctness of some unverified parts of the code~\cite{xu2016practical,walker1980specification,klein2014comprehensive}\Dash{}in
 particular assembly and boot code. % parts~\cite{klein2014comprehensive}).

%% Also, note that most existing works leaves unchecked hypotheses about
%% the code, assuming for instance control flow
%% integrity~\cite{vasudevan2013design}, semantics preservation of the
%% compilation~\cite{alkassar2010automated,vasudevan2016uberspark},
%% correctness of unverified parts of the code~\cite{xu2016practical,walker1980specification} (in
%% particular the assembly parts~\cite{klein2014comprehensive}). In our
%% work all these hypotheses are fully checked,  we are thus complementary to these
%% verification efforts.

%\subsubsection*{Scope of verification}

\subsubsection*{Fully-automated kernel verification} 
%The differences between manual, semi-automated and
%fully-automated verification techniques have been presented in
%Section~\ref{sec:discovering-verifying-inv}.

We qualified as fully-automated the techniques that are able to infer
invariants automatically. Vasudevan et
al.~\cite{vasudevan2016uberspark} used abstract interpretation in
combination with deductive verification and runtime monitoring to
verify manual annotations.
Despite not inferring the invariants, some verification techniques
feature an advanced level of automation. Vasudevan et
al.~\cite{vasudevan2013design} uses bounded model
checking % , which does not allow inferring invariants)
to verify hand-written proof obligations, with the help of manual
verification statements. Dam et
al.~\cite{dam2013machine,dam2013formal} designed their kernel so that
it requires assertions only at the beginning and end of the kernel,
and most of these assertions are generated automatically from a formal
model.

% XXX: WE count as fully-automated techniques that are able to infer
% invariants automatically. Despite not using a fully-automated
% technique in that sense, some research efforts feature advanced level
% of automation. Vasudevan needs to give only some verification
% statement, and Prosper generates most of their system invariant from a
% formal model.

% Prior efforts based 
% on  fully automated verification include Vasudevan et
% al.~\cite{vasudevan2013design}, where bounded model
% checking  % , which does not allow inferring invariants) 
% is used to verify
% hand-written proof obligations,  and Vasudevan et
% al.~\cite{vasudevan2016uberspark}, where abstract interpretation is
% combined with deductive verification and runtime monitoring to verify
% manual annotations.

\subsubsection*{Machine-level kernel verification} Only two previous efforts 
 target 
  machine code \cite{bevier1989kit,dam2013formal}.  
% where the assembler and linker, which
%determines the addresses of kernel objects and can introduce errors
%\cite{klein2014comprehensive}, are removed from the trusted base. 
Both approaches are based on a  flat memory model, and thus cannot handle 
parameterized number of tasks.  
% and thus the kernel they verify handles only a fixed number of tasks.

\smallskip 

{\it We propose \binseccodex{}, the first fully automated approach able to verify absence of privilege escalation  
on microkernels. Our approach is also the first fully automated OS verification method
 enabling binary-level reasoning and parameterized reasoning. 
%  requires only a very low annotation effort  and is very fast. 
While we focus on the key property of APE, we are confident that the method can be extended to stronger properties such as 
task separation. 
Finally, 
our method can complement existing\Ldash{}more manual\Rdash{}techniques to OS verification, either by automatically inferring parts of the annotations and  discharging 
parts of the proof obligations, or by helping to prove unchecked low-level assumptions. %, especially assembly parts~\cite{klein2014comprehensive}.  
}

\mysubparagraph{Static analysis of machine code} %
Relevant sound techniques were discussed in
Section~\ref{sec:static-analysis-machine-code}.  
Most of them either make no assumption at all (raw binary) \cite{kinder2009abstract,DBLP:conf/vmcai/BardinHV11} 
at the price of precision, or rely on implicit extra assumptions (standardly-compiled code) \cite{reps2010there}.

\smallskip

{\it We pick best practices from existing sound analyses\Dash{}any progress there can be directly reused in our method. Our two main novelties   
 are (1) to make checked explicit assumptions  about data layout described by C types, and 
     (2) to specialize our approach to the typical kernel architecture, computing~an invariant (of the runtime) under precondition (established by boot code)   
 and adding an extra precondition checking step. 
}

\smallskip 

Several existing works drop  soundness and    only  compute  {\it ``best effort invariants''} \cite{KinderK12,FeistMP14}  %\todo{ML:OK pour feist, mais a checker pour kinder. SB : certain}
in order to gain robustness \cite{KinderK12} or to guide latter analysis \cite{FeistMBDP16}. This is not an option for us as we 
look for formal verification of APE. Finally, symbolic execution is widely used on machine code \cite{DBLP:conf/icse/BounimovaGM13,DBLP:conf/icst/BardinH08,DBLP:journals/stvr/BardinH11,DBLP:journals/ieeesp/AvgerinosBDGNRW18,Shoshitaishvili16,david_binsec/se:_2016} 
but aims to find bugs rather than to prove their absence. 
%\todo{ML: Preciser que ca ne gère pas les boucles?}
%\todo{ML: Besoin d'autant de références sur execution symbolique? SB : pas de soucis de nb de refs.  Et de 2 références à Gueb? SB : elles disent pas la meme chose}

%% The main originality
%% of our sound static analysis is to make a checked hypothesis about the code,
%% using the data layout described by C types. We did not need to make
%% other hypotheses \cite{reps2010there}, in particular because the small
%% size of microkernels allows fully context-sensitive analysis. 

%%  An up-to-date binary-level static analysis} We build a 
%% state-of-the-art {\it sound} static analyzer for machine code (Section~\ref{sec:static-analysis}), 
%% picking among the best practices from the literature\cite{reps,vedrine,adel,brauer,kinder,simon}
%%   Interestingly, While prior works is partitioned into {\it raw binary} analysis  
%% (no assumption, adequate to adversarial analysis such as malware but extremely difficult to get precise) 
%% and {\it standardly-compiled code}\footnote{Code produced by a standard development and compiling chain.} analysis 
%% (with hard-coded {\it unchecked} extra assumptions, typically on control flow -- trust an extranal disassembler,  
%% or memory partitioning -- the stack is separated from the rest of memory), our own method 
%% does target standardtly-compiled code but the extra assumptions are explicit  (shape annotations) and  fully checked 
%% by the analysis. 

\mysubparagraph{Fully-automated memory analyses}  
%
%\subsubsection*{Fully-automated memory analyses} 
%Memory-based analyses 
%traditionally belong to one of two extremes. 
Points-to
and alias analyses are fast and easy to setup but are too imprecise
for formal verification, and generally assume that the code behaves
nicely, e.g., type-based alias analyses~\cite{diwan1998tbaa} assume
that programs comply with the strict aliasing rule -- while kernel codes 
often do not conform to C standard \cite{klein2014comprehensive}. 
On the other hand,  shape
analyses~\cite{sagiv1999parametric,chang2008relational} can fully
prove memory invariants, but require heavy parametrization and
are generally too slow to scale to a full microkernel. 

\smallskip 

{\it Our weak type-based
shape abstract domain hits a middle ground: it is fast, precise,  
handles low-level behaviors (outside of the C standard) and requires little
configuration. This is also the first shape analysis performed on machine code.}  

\smallskip 

Marron~\cite{marron2012structural} also describes a
weak shape domain performing only weak updates, but on a type-safe
language with no implicit type casts, pointer arithmetic, nor nested
data structures. 

%prior work
%   that would have performed shape analysis at the machine
%  code level.

\mysubparagraph{Type-based verification of memory invariants} 
Walker et al.~\cite{walker1980specification} already observes in the 1980's 
that reasoning on type invariants is well suited to OS kernel
verification. Several systems build around this idea \cite{yang2010safe,ferraiuolo2017komodo}, leveraging a dedicated typed language.
Cohen et al.~\cite{cohen2009precise} describe a typed semantics for C
with additional checks for memory typing preservation, similar to our own checks on memory accesses.  
% which resembles our checks that memory accesses preserve the typing of memory. 
While they use it in a deductive verification tool for C 
 (to verify an hypervisor \cite{alkassar2010automated}),  
 we build
 an abstract interpreter for machine code.

%\vspace{-5mm}
\section{Conclusion}

Operating system kernels are the keystones of
computer system security, leading to several efforts towards  their formal verification. 
Yet, while these prior works were overall successful, they often require a huge amount of manual 
efforts, and  verification  was often led only at source level, on crafted kernels.      
We focus in this paper on the key requirement that
a kernel should protect itself, coined as absence of privilege escalation,  
and provide a methodology to verify 
it  from the kernel executable only, using fully automated
methods with a very low amount of manual configuration. Our methodology
handles parameterized systems  %where the user tasks are not known,
thanks to a novel type-based weak shape abstract domain. 
The technique has been successfully demonstrated on %the unmodified executable file of 
two embedded microkernels, including an industrial one:  with less than 60 lines of manual annotations, we were able to find a vulnerability in a preliminary version and
 to  verify  absence of privilege escalation in a secure version, 
without any  remaining false alarm.   
\bibliographystyle{ieeetr}
\bibliography{biblio}

\appendix

This appendix contains:
\begin{description}
  \item[A.] The annotation language used to define the types manipulated by the weak
    shape domain;

%  \item Table~\ref{tab:internal-evaluation-with-details} is the detailed version
%    of Table~\ref{tab:internal-evaluation} with a breakdown of the annotation
%    types, as well as the types of the alarms emitted by the analysis.

  \item[B.] The concrete semantics that we use as a model of a hardware
    architecture; 

  \item[C.] the abstract domains used in our analysis.
\end{description}

\subsection{Annotation language}
\label{app:sec:annotation-language}

Annotations take the form of type declarations 
similar to those of the C language, except that numerical properties may be specified on
values and array sizes. The annotation language is presented in Figure~\ref{app:fig:annotation-language}, 
and the shape annotations required for the example kernel of Figure~\ref{fig:a-tiny-kernel} is presented in 
Figure~\ref{fig:appendix:config-toy}. An example of code for an user image is given in Figure~\ref{fig:exemple-initial-configuration}.

{%

%\floatstyle{boxed}
%\restylefloat{figure}

  \newcommand{\gr}[1]{\left\langle#1\right\rangle}

  \begin{figure}[htbp] \footnotesize
    \begin{framed}
      \vspace{-3mm}
      \begin{equation*}
    \begin{array}{r r l}
      \gr{constant} &::=& 1 \mid 2 \mid \cdots \\
      \gr{bitsize} &::=& \gr{constant} \\
      \gr{comp} &::=& \mathtt{==}\ \mid\ \mathtt{!=}\ \mid\ <_u\ \mid\ \leq_u\ \mid\ >_u \\
                    &\mid& \geq_u\ \mid\ <_s\ \mid\ \leq_s\ \mid\ >_s\ \mid\ \geq_s \\
      \gr{binop} &::=& \mathtt{+}\ \mid\ \mathtt{*}\ \mid\ \mathtt{-}\ \mid\ \mathtt{/}_u\ \mid\ \mathtt{/}_s \\
                &\mid& \ll\ \mid\ \gg_u\ \mid\ \gg_s \mid\ \mathtt{\&}\ \mid\ \mathtt{"|"} \\
      \gr{variable} &::=& \mathtt{nb\_tasks} \mid \mathtt{kernel\_last\_address} \mid \cdots \\
      \gr{expr} &::=& \gr{constant} \mid \gr{variable} \mid \mathtt{self}\\
                     &\mid& \gr{expr} \gr{binop} \gr{expr}\\
      \gr{predicate} &::=& \gr{expr} \gr{comp} \gr{expr} \\
                    &\mid& \gr{predicate} \mathtt{or} \gr{predicate} \\
%      \gr{paramdecl} &::=& \gr{param}\ \mathtt{with}\ \gr{predicate} \\
      \gr{type} &::=& \mathtt{int}\!\gr{bit size} \\
                &\mid& \gr{type}\ \mathtt{with}\ \gr{predicate} \\
                &\mid& \gr{label}* \mid \gr{label}? \\
                &\mid& \mathtt{struct\ \{} \gr{type}\star \mathtt{\}} \\
                &\mid& \gr{type}[\gr{constant}\mid\gr{variable}\mid\mathtt{unknown}] \\
%                &\mid& \mathtt{page\_table\_entry} \\
%                &\mid& \mathtt{page\_directory\_entry} \\
      \gr{typedecl} &::=& \mathtt{type}\ \gr{label}\ =\ \gr{type} \\
      \gr{annots} &::=& \gr{typedecl}\star \\
    \end{array}
  \end{equation*}
  \vspace{-3mm}
\end{framed}%
\vspace{-4mm}
%  \end{equation*}\hspace{2mm}}
\caption{Annotation language for configuring the shape domain} \label{app:fig:annotation-language}
\end{figure}
}

{

%\floatstyle{boxed}
%\restylefloat{figure}

\begin{figure}[htbp]

\footnotesize    
% \begin{verbatim}

\begin{lstlisting}[morekeywords={define,type,with,or}]
  #define READ 1
  #define WRITE 2
  #define EXEC 4
  type mpu = int64 with 
        (self | WRITE == 0) or 
        ((self $\gg_u$ 32) $>_u$ kernel_last_addr) or
        (((self $\ll$ 32) $\gg_u$ 32) $<_u$ kernel_first_addr)

  #define UNPRIVILEGED 1
  type flags = int32 with (self | UNPRIVILEGED) != 0

  type Task = struct { int32 int32 flags mpu mpu Task* }
\end{lstlisting}
\vspace{-3mm}
%\end{verbatim}
\caption{Shape annotations configuring the analysis for the example kernel}\label{fig:appendix:config-toy}
\end{figure}

}

\begin{figure}[htbp]

  \lstset{language=C,escapeinside={@}{@},label= ,caption= ,captionpos=b,numbers=none,morekeywords={int32,int64}}
  % numberstyle=\scriptsize,numbers=right, stepnumber=1}
  \hbox{\begin{lstlisting}
void code(void){
  while(true){ print("Hello\n"); asm("software_interrupt"); }
}
void after_code(void) {}

#define STACK_SIZE 256
char stack0[STACK_SIZE], stack1[STACK_SIZE];

Task task0 = {
  .sp = &stack0[STACK_SIZE-sizeof(int)];
  .pc = &code;
  .flags = 0 | UNPRIVILEGED;
  .code_segment =
      &code @$\ll$@ 32 | &after_code | READ | EXEC;
  .data_segment =
      &stack0 @$\ll$@ 32 | &stack0[STACK_SIZE] | READ | WRITE;
  .next = &task1;
};

Task task1 = {
  .sp = &stack1[STACK_SIZE-sizeof(int)];
  .pc = &code;
  .flags = 0 | UNPRIVILEGED;
  .code_segment =
      &code @$\ll$@ 32 | &after_code | READ | EXEC;
  .data_segment =
      &stack1 @$\ll$@ 32 | &stack1[STACK_SIZE] | READ | WRITE;
  .next = &task0;
};
  \end{lstlisting}}%
%\end{lstlisting}
  \caption{Example of code producing a user image for the example kernel}
  \label{fig:exemple-initial-configuration}
\end{figure}

%\todo{Rendre consistent avec les autres figures (notamment appendix)}

%\section*{B -- Semantics of an architecture\\ with memory protection}
\subsection{Semantics of an architecture with memory protection}
\label{app:sec:form-hardw-prot}

Here, we instantiate the formalization of
Section~\ref{sec:formalization-formal-description} by providing a
formal model of a computer using operating system software controlling
the use of hardware privilege. Such a definition is necessary to
define and prove correct a static analysis which computes sets of possible
concrete execution described by this model.

\mysubparagraph{States} %
Values $\mathbb{V}$ are machine integers (e.g. $\mathbb{V} = [0,2^{32}-1]$).
$\mathbb{A} \subseteq \mathbb{V}$ is the set of memory \emph{addresses}.  A
memory $m \in \mathbb{M}$ is just a mapping from addresses to values, i.e.,
$\mathbb{M} = \mathbb{A} \to \mathbb{V}$. We note $\mathcal{R}$ the set of
register names in the system; in the example of
Section~\ref{sec:overview-motivating-example}, this includes for instance
``\texttt{mpu}$_1$'' and ``\texttt{sp}''; or ``\texttt{r0}'' and
``\texttt{r11}'' on ARMv7 processors.

States $\mathbb{S}$ are tuples of
$\mathbb{M} \times \left(\mathcal{R} \to \mathbb{V}\right)$
where each state $s$ has a
memory $s.\textrm{mem} \in \mathbb{M}$, and a mapping
$s.\textrm{regs} \in \mathcal{R} \to \mathbb{V}$ of registers' names to their
values.

Registers $\mathcal{R} = \mathcal{R}_S \uplus \mathcal{R}_U $ are
partitioned into \emph{system registers} $\mathcal{R}_S$ and
\emph{user registers} $\mathcal{R}_U$. In a secure system, only the
kernel is able to modify system registers, using the following
mechanism.

\mysubparagraph{Execution level and privilege} %
States are partitioned between \emph{privileged} (i.e., supervisor-level) or
\emph{unprivileged} (i.e., user-level) states. Generally,
$\textsl{privileged}$ corresponds to the value of a bit in a system
register like \texttt{flags} in Figure~\ref{fig:a-tiny-kernel}.

Transitions between unprivileged states cannot change the values in the system
registers. Moreover when a state is unprivileged, the only way for its successor
to be privileged is by performing a interrupt transition, detailed below.

\mysubparagraph{Transitions} %
The transitions (regular and interrupt) have already been described in
Section~\ref{sec:regular-interrupt-transitions}. In a (standard,
Von Neumann) machine execution, getting the next instruction
correspond to fetching in memory the opcode pointed by the program counter, then decoding it:%
\[ \textsl{next}(s)\enspace\triangleq\enspace\textsl{decode}(s.\textsl{mem}[s.\textsl{regs}[\texttt{pc}]]) \]

We do not detail the format of instructions, as it is standard; in our
tool it is encoded as a sequence of basic instructions of DBA
intermediate language of the BINSEC tool \cite{bardin2011bincoa}.

The \emph{interrupt transition} $\overset{\scriptscriptstyle \textsl{interrupt}}{\to}$
corresponds to the reception of a hardware or software interrupt.
This transition 1. makes $s'$ privileged, 2. changes
$s'\!.\textsl{regs}[\texttt{pc}]$ to a
specific label that we call the \emph{kernel entry point}, and
3. possibly performs other operations, such as saving the values of
registers in system registers or memory.
% For simplicity we assume that an interrupt transition
% $s \overset{\scriptscriptstyle \textsl{interrupt}}{\to} s'$ may only be performed when $s$ is unprivileged.

\mysubparagraph{Memory protection} %
A key component of any OS kernel invariant consists in ensuring that the
memory protection is properly set up. Indeed memory protection and
hardware privilege are the two mechanisms that the kernel must use to
protect itself from the user code, and a vulnerable memory protection
generally leads to a possible privilege escalation.

The memory protection is modeled as follows. A predicate
$\textsl{accessible}: \mathbb{S} \times \mathbb{A} \to
\mathcal{P}\{R,W,X\}$ returns the access rights for an address $a$ in
a state $s$. A state can perform a regular transition
only if it can access all the memory addresses that are needed for
fetching and executing this instruction; otherwise it must perform an
interrupt transition.

The \textsl{accessible} predicates generalizes every memory protection
mechanism found in usual hardware (note that memory translation can be
encoded if needed in the semantics of instructions). In some systems
(e.g., based on Memory Protection Units), \textsl{accessible} depends
only on the value of some system registers. This means that
unprivileged code cannot change the set of accessible addresses
directly (an indirect attack is still possible, by having the kernel
load corrupt data in system registers). In other systems (e.g., based
on Memory Management Units), \textsl{accessible} also depends on
addresses in memory pointed by a system register (the \emph{memory
  protection tables}, for instance page tables). This makes possible a
direct attack where unprivileged code modifies memory protection
tables directly, if some of the addresses of the memory protection
tables in use are accessible.

\mysubparagraph{Multiprocessor} %
In multiprocessor systems, the state is a tuple of $\mathbb{M} \times
\left(\mathcal{R} \to \mathbb{V}\right)^n$, where $n$ is the number of
processors. For simplicity's sake we focus our explanation on the
single-processor case: the system is assumed to have only one processor unless
where explicitely mentioned.

\subsection{Description of the static analysis}

%\todo{Les sous-sections de la static analysis s'appellent aussi A-B-C-D, ce qui est confus.
%  Peut-être les appeller appendix 1,2,3?}

\label{app:sec:formal-description-sound-static-analysis}

%\todo[inline]{Xavier suggests: explain what is an abstract domain, and then refer to ``constraints''}

%\todo{Faire une overview, et mettre ca après ou en annexe}
%\todo{Reexpliquer qq part en quoi le noyau forme une boucle, et qu'on infère les hypervisor invariants grâce à ça?}

In this section we present (with some simplifications) a formal description of our binary-level static
analyzer. 
% that we have used to prove absence of privilege escalation
%and other security properties of the kernel in our case study. Our
%analysis is based on abstract interpretation
%\cite{cousot1977abstract}, which is known to be difficult on
%executable \cite{reps2010there}.  % Section~\ref{sec:analysis-principles}

% presents the mathematical theory underpinning this analysis, while
% Section~\ref{sec:overview} explains how we borrowed ideas from
% state-of-the-art analyzers to build our analysis.

% \subsection{Overview}

% In a nutshell, our analyzer is based on an abstract interpreter
% \cite{cousot1977abstract} for machine-code. It performs a standard
% flow-sensitive analysis on an intermediate representation of machine
% code. For maximum precision, it is fully context-sensitive (i.e. it re-analyzes a called function at each call site )XXX full inline/infinite call strings.

% It uses state-of-the-art
% techniques in machine code analysis like reduced product between
% signed and unsigned meanings of bitvectors \cite{djoudi2016recovering}
% or online construction of the control-flow graph during the data-flow
% analysis \cite{kinder2009abstract}.

% XXX: state analysis, i.e. concretizes into a set of states.

% XXX: extract type information and address range from ELF information,
% but we do not trust it; rather, our domain checks that the code
% complies to these information.

%\subsection{Background}
%\label{sec:analysis-principles}

%\todo{Probablement redondant avec le general principle}
\mysubparagraph{Background}
When given a transition system $\langle \mathbb{S}, \mathbb{S}_0, {\to} \rangle$,
static analysis by abstract interpretation \cite{cousot1977abstract}
allows to compute a finite representation of a \emph{super-set} of the
reachable states. As a set of states is isomorphic to a property over
states, this finite representation also corresponds to a state
property. Abstract interpretation is \emph{sound}\Ldash{}the computed invariants
are correct by construction\Rdash, and can thus be used as a proof technique.%  It can thus be used as an
% automated proof technique, by checking that the computed property
% implies the property that we want to prove. 

Abstract interpretation works by combining \emph{abstract domains},
i.e. partially-ordered sets of \emph{abstract values}, each representing a set of
\emph{concrete values}. Formally, the \emph{meaning} of an abstract
value $d^\sharp$ belonging to an abstract domain $\mathbb{D}^\sharp$
is given by its \emph{concretization}, which is a function
$\gamma_\mathbb{D}^\sharp: \mathbb{D}^\sharp \to
\mathcal{P}(\mathbb{D})$ mapping abstract value to the set of concrete
values it represent. For instance, intervals $\langle a, b\rangle$ are
finite representations of (possibly infinite) sets of integers,
%$\mathbb{D}^\sharp = \left(\mathbb{Z} \cup  \{-\infty,+\infty\}\right)^2$ and
e.g.,
$\gamma\left(\langle 3,+\infty\rangle\right) = \{ x \in \mathbb{Z}
\mid 3 \le x \}$.

%The computation of a state invariant using abstract interpretation
% on our model $\langle \mathbb{S}, \mathbb{S}_0, {\multipleinstruction}\rangle$
% introduced in Section~\ref{sec:enpowering-attacker}
%follows from the definition of the
%set of reachable state.

The set of reachable states in a
$\langle \mathbb{S}, \mathbb{S}_0, {\to}\rangle$ transition system is
formally defined as follows. We define a function $F$:
%$F: \mathcal{P}(\mathbb{S}) \to \mathcal{P}(\mathbb{S})$ as:
\[ \begin{array}{rcl}
     F &:& \mathcal{P}(\mathbb{S}) \to \mathcal{P}(\mathbb{S})\\
     F(S) &=& S_0 \cup S \cup \{ s'\mid \exists s \in S: s \to s'\} 
   \end{array} \]

\noindent such that the set of states reachable from $S_0$ is defined as the least
fixpoint of $F$ (noted $\mathrm{lfp}(F)$), i.e., is the
smallest set $S$ such that $F(S) = S$.

The computation of an invariant using abstract interpretation mimics
this definition. Given an abstract domain $\mathbb{S}^\sharp$ representing the set of states $\mathbb{S}$, a concretization function $\gamma :
\mathbb{S}^\sharp \to \mathbb{S}$, and a sound approximation $F^\sharp:
{\mathbb{S}^\sharp} \to {\mathbb{S}^\sharp}$ of $F$, i.e.\@ such that
\[ \forall S^\sharp \in {\mathbb{S}^\sharp}:  F(\gamma_{\mathbb{S}^\sharp}(S^\sharp))\ \subseteq\ \gamma_{\mathbb{S}^\sharp}(F^\sharp(S^\sharp))   \]

\noindent then every postfixpoint $P^\sharp$ of $F^\sharp$, i.e. such that \mbox{$F^\sharp(P^\sharp)
\sqsubseteq_{\mathbb{S}^\sharp} P^\sharp$} will be a sound approximation of $\mathrm{lfp}(F)$, i.e.\@:
\[ \mathrm{lfp}(F)\ \subseteq\ \gamma_{\mathbb{S}^\sharp}(P^\sharp)  \]
A postfixpoint of $F^\sharp$ can be computed by upward
iteration sequences with widening \cite{cousot1977abstract}, which consists in
growing the abstract value until it cannot grow any more (which means we found a
postfixpoint).

In general, abstract domains are proved and designed in a modular way
by composition of abstract domains \cite{cousot1979systematic}. The
concretizations functions are used to establish the soundness of the
\emph{transfer functions}, describing how the abstract state is
modified by the $\to$ transition. 

In the remainder of this section we will explain the main abstractions
we use, but will give only informal presentation of the transfer
functions (as they follow from the semantics and from the
concretization), and we will omit the soundness proofs. For the sake
of simplicity, all of our abstract domains concretize into sets of
states $\mathcal{P}(\mathbb{S})$.

% Abstract domains are represented using \emph{semi-lattice}, a
% mathematical structure equiped with an order relation
% $\sqsubseteq^\sharp$ and a join binary operation $\sqcup^\sharp$.

% \subsection{Overview of the analysis}
% \label{sec:overview}

% %XXX: parler aussi de la concurrence, de la représentation intermédaire DBA

% \todo{Faire un dessin avec les domaines abstraits utilisés, et s'en
%   servir pour expliquer l'analyse}

\begin{figure}[t]
  \footnotesize
  \centering
  \begin{tabular}{|l l|}     \hline
    state domain & $\mathbb{S}^\sharp = \mathbb{C}^\sharp \times \mathbb{D}^\sharp$ \\
    control flow domain & $\mathbb{C}^\sharp = \mathcal{P}(\mathcal{P}(\mathbb{L} \times \mathbb{L}))$ \\
    data flow domain & $\mathbb{D}^\sharp = \mathbb{L} \to \mathbb{M}^\sharp$\\
    storage domain & $\mathbb{M}^\sharp = \mathbb{N}^\sharp \times \mathbb{T}^\sharp$ \\    
    numeric domain & $\mathbb{N}^\sharp$ = any conjunction of numeric constraints \\
                  & \qquad over the values bound to $\mathbb{A}_{F}$ and $\mathcal{R}$ \\
    type domain & $\mathbb{T}^\sharp = (\mathbb{A}_{F} \uplus \mathcal{R}) \to \mathcal{T}$ \\
                 & \\
    \multicolumn{2}{|l|}{$\mathcal{T}$ is the set of types, $\mathbb{A}_{F}$ the set of fixed addresses in the kernel,} \\
    \multicolumn{2}{|l|}{$\mathcal{R}$ the set of register names, $\mathbb{L}$ the set of program locations.} \\
    \hline
  \end{tabular}
  \caption{Implementation of the abstract domains.}   \label{fig:domain-implementation}
\end{figure}

\mysubparagraph{Main state abstraction} %
Static analysis of machine code needs to analyse simultaneously
the control flow and the
data flow, as each depends on the other. To this end, our main
abstraction combines a \emph{control-flow abstraction} $\mathbb{C}^\sharp$ with
a \emph{data-flow abstraction} $\mathbb{D}^\sharp$, in a manner similar to
\cite{kinder2009abstract}.

Central to our abstractions is the notion of \emph{program locations}
$\mathbb{L}$.  What is a program location is an implementation choice
of the analyzer: a natural choice is to consider that a program
locations is a kernel address (i.e., a precise address in the kernel
code segment). In our case study we chose a more precise abstraction:
a program location consists in a kernel address with a call stack.
We denote by $\mathscr{L}(s)$ the program location of a state $s$.
%\todo{A(l) has become useless}
% In any case a program location should contain the address pointing to the
% next instruction. For a program location $\ell$, we denote
% $\mathcal{A}(\ell)$ its address.

Then, our \emph{control flow abstraction} $\mathbb{C}^\sharp$ is a graph
between program locations $\mathbb{L}$. This graph is represented as the
set of its edges $\mathcal{P}(\mathbb{L} \times \mathbb{L})$. The meaning
of a graph $c^\sharp$ is that (1) the only reachable program locations
are the nodes in the graph, and (2) the only possible location after
executing a state whose location is $\ell_1$ is a location $\ell_2$ where $\ell_2$ is
a successor of $\ell_1$ in the graph. Formally:%
\begin{multline*}
  \gamma_{\mathbb{C}^\sharp}(c^\sharp) = \{\;s_1 \in \mathbb{S}\ \mid\ 
  \exists\ \langle \ell_1,\ell_2 \rangle \in c^\sharp:\ \exists s_2 \in
  \mathbb{S}: s_1 \to s_2 \\
   \land\;\mathscr{L}(s_1) = \ell_1\;\land\; \mathscr{L}(s_2) = \ell_2 \;\}
%  \land\;s_1.\textsl{regs}[\texttt{pc}] = \mathcal{A}(\ell_1)\;\land\; s_2.\textsl{regs}[\texttt{pc}] = \mathcal{A}(\ell_2) \;\}  
\end{multline*}

% \[ \gamma_{\mathbb{C}^\sharp}(c^\sharp) = \{\;s_1 \in \mathbb{S}\ \mid\ 
%   \begin{array}[t]{l}
%     \exists (l_1,l_2) \in c^\sharp:\ \exists s_2 \in \mathbb{S}:  s_1 \multipleinstruction s_2\;\land\;\\
%     \quad\mathoperator{location}(s_1) = l_1\;\land\; \mathoperator{location}(s_2) = l_2 \;\}
%   \end{array} \]
% 

% Thus the control-flow abstraction can be used to prove that the kernel
% cannot execute unwanted instructions, or unwanted transitions (like in
% return-oriented programming).

The \emph{data-flow abstraction}
$\mathbb{D}^\sharp = \mathbb{L} \to \mathbb{M}^\sharp$ maps each
program location $\ell$ to a \emph{storage abstraction} $m^\sharp$\Ldash{}described hereafter\Rdash{}representing the values in the memory and registers. It is the standard
abstraction for flow-sensitive analyses
\cite{kildall1973unified,cousot1977abstract,rival2007trace}. Its
meaning is that the memory and value of registers for a state with
program location $\ell$ must correspond to what is described in the storage
abstraction. Formally:%
\[ \gamma_{\mathbb{D}^\sharp}(d^\sharp) = \{\ s \in \mathbb{S}\ \mid\enspace
\exists \ell \in \mathbb{L}: \ell = \mathscr{L}(s)\ \land\ s \in \gamma_{\mathbb{M^\sharp}}(d^\sharp[\ell])\ \} \]
%\[ \gamma_{\mathbb{D}^\sharp}(d^\sharp) = \{\ s \in \mathbb{S}\ \mid\ \ s \in \gamma_{\mathbb{M^\sharp}}(d^\sharp[\mathscr{L}(s)])\ \} \]
%\[ \gamma_{\mathbb{D}^\sharp}(d^\sharp) = \{\ s \in \mathbb{S}\ \mid\quad (s.\textsl{mem},s.\textsl{regs}) \in \gamma_{\mathbb{M^\sharp}}(d^\sharp[\operatorname{location}(s)])\ \} \]

The main state abstraction $\mathbb{S}^\sharp$ simply consists in a
product~\cite{cousot1979systematic} of these previous abstractions. It represents a set of
states that must match both abstractions. Formally: 
%i.e.\@ states whit must be a reachable
%kernel instruction whose memory and registers is described by the
%storage abstraction. Formally:
\[ \gamma_{\mathbb{S}^\sharp}(c^\sharp,d^\sharp)\ =\ \gamma_{\mathbb{C}^\sharp}\!(c^\sharp)\;\cap\;\gamma_{\mathbb{D}^\sharp}\!(d^\sharp) \]

The analysis works by performing multiple rounds of the following
steps in sequence: % (each step corresponding to the $F^\sharp$ function):

\begin{enumerate}
\item Perform a standard data-flow analysis using the current CFG
  $c^\sharp$, to compute a new $d^\sharp \in \mathbb{D}^\sharp$.
  
\item Iterate over all locations $l \in c^\sharp$ to compute all
  possible outgoing edges, given the possible memories at the
  instruction entry $d^\sharp[l]$ (this uses the same \texttt{resolve}
  function than \cite{kinder2009abstract}). Newly-discovered edges are
  added to the CFG $c^\sharp$.
\end{enumerate}

The iteration sequence starts with an abstraction $(c^\sharp_0,d^\sharp_0)$ of
the initial states $S_0$. In our case study, this corresponds to
$c^\sharp_0$ containing only the first instruction in the kernel, and
$d^\sharp_0$ representing a mapping from this instruction to every
possible initial values of memory and registers (obtained from the
kernel executable file, plus system registers indicating that the
interrupt received is a \texttt{RESET}). The analysis terminates when
the fixpoint is reached, i.e., no new edge is discovered in the CFG. In
practice, several small optimisations are used to reuse results
between rounds (e.g., caching the results), and to have fewer rounds
(by early exploration of the newly-discovered CFG nodes).

\begin{theorem} If the transfer functions for $\mathbb{M^\sharp}$ are
  sound, the result $s^\sharp_{final}$ of the analysis is a sound
  abstraction of all the reachable states in the system (and thus a
  state invariant).
\end{theorem}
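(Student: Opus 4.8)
The plan is to treat this as the standard abstract-interpretation soundness argument, instantiated on our product domain, and to reduce it to two local facts plus the already-stated fixpoint-transfer lemma. First I would promote the assumed soundness of the storage transfer functions on $\mathbb{M}^\sharp$ into a global soundness statement for the abstract counterpart $F^\sharp$ of $F$, i.e.
\[ \forall S^\sharp \in \mathbb{S}^\sharp:\; F(\gamma_{\mathbb{S}^\sharp}(S^\sharp)) \subseteq \gamma_{\mathbb{S}^\sharp}(F^\sharp(S^\sharp)). \]
Because $\mathbb{S}^\sharp = \mathbb{C}^\sharp \times \mathbb{D}^\sharp$ is a product whose concretization is $\gamma_{\mathbb{C}^\sharp} \cap \gamma_{\mathbb{D}^\sharp}$, this splits into a data-flow obligation and a control-flow obligation. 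The data-flow obligation is that, at each location $\ell \in \mathbb{L}$, the abstract memory soundly over-approximates the effect of a single $\multipleinstruction$ step on the concrete memories and registers associated with $\ell$; this follows from the hypothesis on $\mathbb{M}^\sharp$ transported pointwise along $\mathbb{L}$, and it covers regular, interrupt, and attacker ($\multipleinstructionattacker$) transitions uniformly, since the attacker erasure transfer function is itself one of the storage transfer functions. The control-flow obligation is that the successor relation of the abstract graph over-approximates the concrete one, which follows from the soundness of the \texttt{resolve}-based edge discovery borrowed from \cite{kinder2009abstract}.

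Next I would show that the analysis output $s^\sharp_{final}$ is a postfixpoint of this $F^\sharp$, i.e.\ $F^\sharp(s^\sharp_{final}) \sqsubseteq_{\mathbb{S}^\sharp} s^\sharp_{final}$. Here the interleaved two-phase iteration needs care: an outer loop alternates a widened data-flow fixpoint over the current graph with a CFG-discovery phase, and it halts only when a full round adds no edge to $c^\sharp$ and changes no abstract memory in $d^\sharp$. At such a stable point neither the control-flow nor the data-flow component can grow under its transfer step, which is exactly the postfixpoint condition for the product; widening in the data-flow phase together with the finiteness of the discovered location set guarantees that this stable point is actually reached.

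Finally I would apply the fixpoint-transfer result recalled in the background paragraph: for a sound $F^\sharp$, any postfixpoint $P^\sharp$ satisfies $\mathrm{lfp}(F) \subseteq \gamma_{\mathbb{S}^\sharp}(P^\sharp)$. Taking $P^\sharp = s^\sharp_{final}$ gives $\mathrm{lfp}(F) \subseteq \gamma_{\mathbb{S}^\sharp}(s^\sharp_{final})$. Since $\mathrm{lfp}(F)$ is the set of reachable states and a set of states is identified with a state property, $\gamma_{\mathbb{S}^\sharp}(s^\sharp_{final})$ contains every reachable state and is therefore a sound state invariant, which is the claim.

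The main obstacle is the combined treatment of the mutually dependent control-flow and data-flow components: the reachable memories determine the targets of computed jumps, while the discovered edges determine where memories are propagated. Unlike a fixed-CFG analysis, I must establish that the two-phase iteration stabilises simultaneously in both components and that this joint stable point is a genuine postfixpoint of the global $F^\sharp$ rather than of each phase in isolation; the storage-level soundness itself is delivered directly by the theorem's hypothesis.
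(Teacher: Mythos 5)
The paper gives no proof of this theorem\Dash{}it explicitly says earlier in the appendix that soundness proofs are omitted\Rdash{}but its background section sets up exactly the argument you use: any postfixpoint $P^\sharp$ of a sound $F^\sharp$ satisfies $\mathrm{lfp}(F) \subseteq \gamma_{\mathbb{S}^\sharp}(P^\sharp)$, and the analysis output is computed as such a postfixpoint by widened iteration. Your proposal is a correct instantiation of that intended argument, filling in the two steps the paper leaves implicit (lifting the $\mathbb{M}^\sharp$ hypothesis pointwise to a sound global $F^\sharp$ on the product $\mathbb{C}^\sharp \times \mathbb{D}^\sharp$, and arguing that joint stabilization of the two-phase CFG/data-flow iteration is a genuine postfixpoint of that product), so it takes essentially the same approach as the paper.
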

% \begin{proof} Each round make $s^\sharp$ increase monotically, and we
%   stop only when the fixed-point is reached, so by
%   \cite{cousot1977abstract} the result is an over-approximation of the
%   collecting semantics.
% %  Standard for AI (Cousot1977),knapper-tarsky, monotony
% \end{proof}

In our case study we use the checked hypothese that kernel-controlled
code is at a fixed location and is not modified (i.e., we have no
self-modifying nor dynamic loading of code). % These could be handled
% using a more complex definition of locations $\mathbb{L}$
% \todo{Citations giacobazzi, marion, blazy}.
This hypothese is checked by verifying that no memory store can modify a read-only region
(see Figure~\ref{fig:partitioning-memory-regions}), and reporting an
error if this happens.

\mysubparagraph{Memory storage abstraction} %
The memory storage abstraction $\mathbb{M^\sharp}$ represents the
contents of all the storage in the system, i.e., memory cells and
registers. Its concretization has signature:
\[\gamma_{\mathbb{M}^\sharp}:{\mathbb{M}^\sharp} \to \mathcal{P}(\mathbb{S})\]

%(\mathbb{A} \to \mathbb{V}) \times (\mathcal{R} \to \mathbb{V})

\noindent where
$\mathbb{S} = (\mathbb{A} \to \mathbb{V}) \times (\mathcal{R} \to
\mathbb{V})$ is a pair of a memory (map from addresses to values) and
values of registers (map from register names to values).

\begin{figure}[htbp]
  \centering

  \begin{tikzpicture}[yscale=0.3]
    \draw[thin] (2,0) rectangle (10,2);
    \draw[fill=black!15,thick] (5,0) rectangle (10,2);
    \draw[fill=black!30,ultra thick] (7,0) rectangle (8.5,2);      
    \draw[fill=black!45,ultra thick] (8.5,0) rectangle (10,2);

    \node[below] at (2,0) {\scriptsize \tt 0x00};
    \node[below] at (5,0,0) {\scriptsize \tt 0x??};        
    \node[below] at (7,0,0) {\scriptsize \tt 0xc0};    
    \node[below] at (8.5,0) {\scriptsize \tt 0xd4};
    \node[below] at (10,0) {\scriptsize \tt 0xff};

    \begin{scope}[yshift=6pt]
      \draw[very thick,decorate,decoration={brace,amplitude=3pt}] (7.05,2) -- (8.45,2) node[yshift=2pt,above,midway] {\small writable};       
      \draw[very thick,decorate,decoration={brace,amplitude=3pt}] (8.55,2) -- (9.95,2) node[above,midway] {\small read-only};
    \end{scope}
    \begin{scope}[yshift=120pt]
      \draw[very thick,decorate,decoration={brace,amplitude=3pt}] (5.05,0) -- (6.97,0) node[yshift=1pt,above,midway] {\small parameterized $\mathbb{A}_P$};
      \draw[very thick,decorate,decoration={brace,amplitude=3pt}] (7.03,0) -- (9.97,0)  node[yshift=1pt,above,midway] {\small fixed  $\mathbb{A}_F$};       
    \end{scope}
    \begin{scope}[shift={(0,-2)},yshift=236pt]
      \draw[very thick,decorate,decoration={brace,amplitude=3pt}] (2,0) -- (4.97,0)  node[yshift=0pt,above,midway] {\small unprotected memory};       
      \draw[very thick,decorate,decoration={brace,amplitude=3pt}] (5.03,0) -- (10,0) node[yshift=0pt,above,midway] {\small protected memory};       
    \end{scope}
    % \begin{scope}[yshift=-4pt]
    %   \draw[thick,decorate,decoration={brace,amplitude=3pt}] (4.95,0) -- (2,0) node[yshift=-2pt,below,midway] {\small user memory};       
    %   \draw[thick,decorate,decoration={brace,amplitude=3pt}] (10,0) -- (5.05,0) node[yshift=-2pt,below,midway] {\small kernel memory};       
    % \end{scope}
  \end{tikzpicture}
  \caption{Partitioning of addresses $\mathbb{A}$}
  \label{fig:partitioning-memory-regions}
\end{figure}
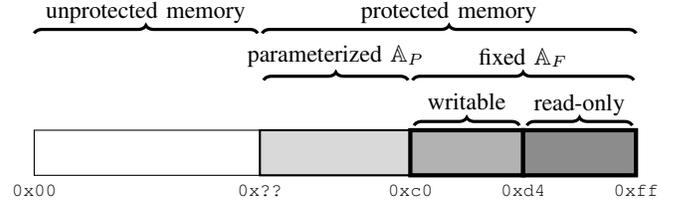

The structure of this abstraction is derived from how we can partition
the memory in the system into the following parts (see
Figure~\ref{fig:partitioning-memory-regions}):

\begin{itemize}
\item \emph{Unprotected memory}: adresses that may be modified by
  non-kernel code.
\item \emph{Protected memory}: protected addresses, i.e., memory
  protection should be set up so that these addresses cannot be
  modified by non-kernel code. This part is sub-partitioned into:
  \begin{itemize}
  \item \emph{Fixed part} ($\mathbb{A}_F$), containing data
    structures whose addresses do not depend on the running
    application (and whose addresses are at a fixed location in the
    kernel executable file). Addresses in this part can be further
    distinguished; in particular \emph{writable} addresses (containing
    the stack, and global variables like \texttt{cur} in the example
    kernel) are distinguished from \emph{read-only} addresses,
    containing the code, jump tables, strings, etc.
  \item \emph{Parameterized part} ($\mathbb{A}_P$), containing the data
    structures whose number and size depends on the application; like
    the circular list of tasks in the example kernel.
  \end{itemize}
\end{itemize}

%Note that the protected memory may encompass more than the kernel
%privileged code, and can include important unprivileged code, for
%instance a device driver or a reference monitor that would run in user
%mode (as found in microkernel-based systems). In most kernels this
%would not be needed to prove absence of privilege escalation, but may
%be useful to prove other important properties.

Following this partition, our main storage abstraction
$\mathbb{M}^\sharp$ is a product
$\mathbb{N}^\sharp \times \mathbb{T}^\sharp$ of two different storage
abstractions corresponding to the different memory parts that we need to track:

\begin{itemize}
\item A precise numerical abstraction $\mathbb{N}^\sharp$, whose
  purpose is to track precisely the numerical values contained in the
  memory cells of the fixed address part, and in the registers;
  obtained by standard~\cite{blanchet2003static} lifting of a
  numerical domain into a domain handling numerical properties of a
  fixed number of memory locations;

\item A typed abstraction $\mathbb{T}^\sharp$ representing the
  contents of the parameters part, and its relation with the
  registers and fixed-address part, detailed in
  Section~\ref{sec:type-abstract-domain}.
\end{itemize}

Note that the unprotected memory is supposed to be suject to arbitrary
modifications from an attacker, so no abstraction needs to keep track
of its possible contents.

$\mathbb{M}^\sharp$ is a standard product of the $\mathbb{N}^\sharp$
and $\mathbb{T}^\sharp$ abstractions, so its concretization is defined
simply as:
\[ \gamma_{\mathbb{M}^\sharp}(n^\sharp,t^\sharp)\ =\ 
  \gamma_{\mathbb{N}^\sharp}(n^\sharp) \cap
  \gamma_{\mathbb{T}^\sharp}(t^\sharp) \]

The above is a simplified formalization of our real abstraction, and
additional extensions are necessary to make it work in practice (see
Section~\ref{sec:static-analysis-machine-code}).

\end{document}